\newtheorem{theorem}{Theorem}
\newtheorem{proposition}{Proposition}
\newtheorem{lemma}{Lemma}
\newtheorem{corollary}{Corollary}
\newtheorem{definition}{Definition}
\numberwithin{equation}{section}
\newenvironment{proof}[1][Proof]{\noindent\textbf{#1.} }{\ \rule{0.5em}{0.5em}}
\renewcommand{\epsilon}{\varepsilon}
\newcommand{\ket}[1]{\mathop{\left|#1\right>}\nolimits}
\newcommand{\bra}[1]{\mathop{\left<#1\,\right|}\nolimits}
\newcommand{\kb}[2]{| #1\rangle\!\langle #2 |}
\newcommand{\Tra}[1]{\mathop{{\mathrm{Tr}}_{#1}}}
\newcommand{\Tr}[2]{\mathop{{\mathrm{Tr}}_{#1}} (#2) }
\def\B{\mathcal{B}}
\def\M{\mathcal{M}}
\def\C{\mathcal{C}}
\def\D{\mathcal{D}}
\def\H{\mathcal{H}}
\def\U{\mathcal{U}}
\let\@copyrightspace\relax
\begin{document}

\sloppy

\centerline{\Large \bf Smooth Entropy Bounds on One-Shot}
\centerline{\Large \bf Quantum State Redistribution}

\centerline{Mario Berta\footnote{berta@caltech.edu, Institute for Quantum Information and Matter, California Institute of Technology, Pasadena, California 91125, USA.}, Matthias Christandl\footnote{christandl@math.ku.dk, Department of Mathematical Sciences, University of Copenhagen, Universitetsparken 5, 2100 Copenhagen, Denmark.}, Dave Touchette\footnote{touchette.dave@gmail.com, Institute for Quantum Computing and Department of Combinatorics and Optimization, University of Waterloo, and Perimeter Institute of Theoretical Physics, and D\'epartement d'informatique et de recherche op\'erationnelle, Universit\'e de Montr\'eal, Qu\'ebec, Canada.}}

%------------------------------------------------------------

\begin{abstract}
In quantum state redistribution as introduced in [Luo and Devetak (2009)] and [Devetak and Yard (2008)], there are four systems of interest: the $A$ system held by Alice, the $B$ system held by Bob, the $C$ system that is to be transmitted from Alice to Bob, and the $R$ system that holds a purification of the state in the $ABC$ registers. We give upper and lower bounds on the amount of quantum communication and entanglement required to perform the task of quantum state redistribution in a one-shot setting. Our bounds are in terms of the smooth conditional min- and max-entropy, and the smooth max-information. The protocol for the upper bound has a clear structure, building on the work [Oppenheim (2008)]: it decomposes the quantum state redistribution task into two simpler quantum state merging tasks by introducing a coherent relay. In the independent and identical (iid) asymptotic limit our bounds for the quantum communication cost converge to the quantum conditional mutual information $I(C:R|B)$, and our bounds for the total cost converge to the conditional entropy $H(C|B)$. This yields an alternative proof of optimality of these rates for quantum state redistribution in the iid asymptotic limit. In particular, we obtain a strong converse for quantum state redistribution, which even holds when allowing for feedback.
\end{abstract}

%------------------------------------------------------------

\section{Introduction}

In the task of quantum state redistribution, we are interested in the amounts of quantum communication and entanglement that are required to transmit part of the system of one party to another party who possesses some side information about this system. It is required that all correlations, including those with any external system, are maintained. More formally, consider two parties Alice and Bob, with Alice initially holding the $A$ and $C$ registers, and Bob holding the $B$ register. The goal is then for Alice to transmit the $C$ register to Bob. If we consider a reference register $R$ holding a purification of the $ABC$ systems, then the global state on $ABCR$ is uncorrelated with any other external system, and it is sufficient to insure that correlations are maintained across these systems.

In the independent and identical (iid) asymptotic version, Alice and Bob want to perform this task on blocks of $n$ identical states, for $n$ large, and we are interested in the best asymptotic rates achievable. Luo and Devetak~\cite{LD09} proved a weak converse theorem in the iid asymptotic regime, stating that the quantum communication rate $q$ and the sum of the entanglement consumption rate $e$ (in terms of ebits) and the quantum communication rate $q$, must be at least
\begin{align}\label{eq:devetak_yard}
q\geq\frac{1}{2} I(C; R | B)\quad\mathrm{and}\quad e+q\geq H(C | B)\,.
\end{align}
Subsequently, Devetak and Yard~\cite{DY08,YD09} proved that these rates are also achievable and hence characterized the achievable rate region for iid asymptotic quantum state redistribution. In~\eqref{eq:devetak_yard}, $H(C)_{\rho}:=-\mathrm{Tr}[\rho_{C}\log\rho_{C}]$ denotes the von Neumann entropy,
\begin{align}
H(C|B)_{\rho}:=H(CB)_{\rho}-H(B)_{\rho}
\end{align}
the conditional von Neumann entropy, and
\begin{align}
I(C; R | B)_{\rho}:=H(BC)_{\rho}+H(BR)_{\rho}-H(B)_{\rho}-H(BCR)_{\rho}=H(C|B)_{\rho}-H(C|BR)_{\rho}
\end{align}
is the conditional quantum mutual information. Note that since the overall state $\rho_{ABCR}$ for quantum state redistribution is pure, we have the symmetry in $A$--$B$,
\begin{align}
I(C;R | B)_{\rho}=I(C;R | A)_{\rho}\,.
\end{align}
Later, the achievability proofs for quantum state redistribution were significantly simplified by Oppenheim~\cite{Opp08} and independently by Ye {\it et al.}~\cite{YBW08}. State redistribution can be seen as the most general bipartite noiseless coding problem, and indeed, other noiseless quantum coding primitives such as Schumacher source coding~\cite{Sch95}, quantum state merging~\cite{HOW05, HOW07} (including fully quantum Slepian-Wolf~\cite{ADHW09}), and quantum state splitting~\cite{ADHW09} can be obtained by considering the case of trivial $AB$, $A$ or $B$ system, respectively. Quantum state redistribution can also be understood as the fully quantum analogue of the tensor power input reverse Shannon theorem~\cite{BDHSW09, BCR11} with feedback to the sender and side information at the receiver~\cite{LD09, WDHW13}.

In recent years, there has been some effort on finding meaningful bounds for the one-shot version of these results (see, e.g., \cite{Ber08,Dup10,BD10,BCR11,DH11,DBWR14} and references therein). In the one-shot setting, instead of being interested in iid asymptotic rates, we are interested in the cost of achieving these tasks when only a single copy of the input state is available. Useful bounds are often stated in terms of smooth conditional entropies (see the theses of Renner~\cite{Ren05} and Tomamichel~\cite{Tom12}, as well as references therein). We are interested in finding bounds for the quantum communication cost of one-shot quantum state redistribution in terms of smooth conditional entropies.

In this paper we show that it is possible to implement quantum state redistribution for a pure quantum state $\rho^{ABCR}$ up to error $\epsilon$ with quantum communication cost at most
\begin{align}\label{eq:mainbound_intro}
q=\frac{1}{2}\Big[H_{\max}^{\epsilon}(C|B)_{\rho}-H_{\min}^{\epsilon}(C|BR)_{\rho}\Big]+O\big(\log(1/\epsilon)\big)\,,
\end{align}
and a total cost at most
\begin{align}\label{eq:mainbound_intro2}
e+q=H_{\max}^{\epsilon}(C|B)_{\rho}+O\big(\log(1/\epsilon)\big)\,,
\end{align}
where $e$ denotes the entanglement consumption in terms of ebits. Note that both the conditional min- and max-entropy terms appearing in~\eqref{eq:mainbound_intro} are smoothed, notwithstanding the fact that it is in general unknown how to simultaneously smooth marginals of overlapping quantum systems (see, e.g.,~\cite{DF13} and references therein). For the special case of iid states $(\rho^{ABCR})^{\otimes n}$ this then allows us to recover the asymptotically optimal rates as in~\eqref{eq:devetak_yard} by means of the fully quantum asymptotic equipartition property for smooth conditional entropies~\cite{TCR09}. We also speculate on how to improve the one-shot achievability bound~\eqref{eq:mainbound_intro} for the quantum communication cost with the help of embezzling entangled quantum states~\cite{vDH03} along with some of the ideas in~\cite{BCR11}.

Moreover, we give lower bounds for the quantum communication cost and the entanglement consumption in terms of the smooth conditional min- and max-entropy, and the smooth max-information. Our achievability bounds~\eqref{eq:mainbound_intro} and~\eqref{eq:mainbound_intro2} do match these lower bounds in the asymptotic iid limit. In particular, we lift the weak converse for quantum state redistribution from~\eqref{eq:devetak_yard} to a strong converse. We show that this even remains true when allowing for feedback (back communication from Bob to Alice).

Our achievability bounds originate from the authors' manuscripts~\cite{BC10,Tou14b}, where~\cite{BC10} is partly based on the preliminary results from~\cite{BRWW09}. Independently, the achievability bound~\eqref{eq:mainbound_intro} has also been derived by Datta {\it et al.}~\cite{DHO11} (stated as in preparation in~\cite{DH11}). We mention that our bound~\eqref{eq:mainbound_intro} has already proven useful for applications in quantum complexity theory. In particular, based on the idea of quantum information complexity, one of the authors~\cite{Tou14a} was able to derive the first multi-round direct sum theorem in quantum communication complexity. Our converse bounds are based on the authors' manuscripts~\cite{BC10,Tou14b}, and for Sections~\ref{sec:convent} and~\ref{sec:conventint} also on discussions with Leditzky about his work in Refs~\cite{LD15, LWD15}: 
the strong converse property for the sum of entanglement cost and quantum communication cost for state redistribution was proved using the so-called R\'enyi entropy method by Leditzky and Datta~\cite{LD15}.

This document is structured as follows. In Section~\ref{sec:prel}, we introduce our notation and give the definitions of the relevant smooth entropy measures. We then discuss in Section~\ref{sec:decoupling} the work of Oppenheim~\cite{Opp08} arguing that quantum state redistribution can be optimally decomposed into two applications of quantum state merging. Our achievability bound for one-shot quantum state redistribution is then derived in Section~\ref{sec:achievability}. The converse bounds can be found in Section~\ref{sec:converse}. We end with some conclusions (Section~\ref{sec:conclusion}).

%------------------------------------------------------------

\section{Preliminaries}\label{sec:prel}

We assume that all Hilbert spaces are finite-dimensional. The dimension of the Hilbert space associated to a quantum system $A$ is denoted by $|A|$. The set of linear operators on a quantum system $A$ is denoted by $\mathcal{L}(A)$, and the set of linear, nonnegative operators by $\mathcal{P}(A)$. We denote by $\D_\leq (A)$ the set of sub-normalized states on $A$, i.e., the set of operators $\rho^{A}\in\mathcal{P}(A)$ that are positive semi-definite, denoted $\rho \geq 0$, and have trace at most one. The set of normalized states is denoted by $\D_{=}(A)$, and $\rho_{A}\in\D_= (A)$ is a pure state if it has rank one with $\rho_{A}=\ket{\rho}\bra{\rho}_{A}$. Multipartite systems are described by the tensor product of the Hilbert spaces, and denoted by $A\otimes B$. Given a multipartite state $\rho^{AB}\in\D_{\leq}(A\otimes B)$, we write $\rho^{A}=\mathrm{Tr}_{B}[\rho^{AB}]$ for the reduced state on the system $A$. For $M^{A}\in\mathcal{L}(A)$, we write $M^{A}=M^{A}\otimes I^{B}$ for the enlargement on any $A\otimes B$, where $I^{B}$ denotes the identity in $\mathcal{P}(B)$. Quantum channels are described by completely positive trace preserving maps from some input $\mathcal{L}(A)$ to some output $\mathcal{L}(B)$.

%------------------------------------------------------------

\subsection{Entropy Measures}

The max-relative entropy of $\rho \in \D_\leq (A)$ with respect to $\sigma \in \mathcal{P}(A)$  is defined as 
\begin{align}
D_{\max} (\rho \| \sigma):=\inf \left\{ \lambda \in \mathbb{R} : 2^\lambda \sigma \geq \rho \right\}\,.
\end{align}
The conditional min-entropy of $A$ given $B$ for $\rho^{AB} \in \D_\leq (A \otimes B)$ is defined as
\begin{align}
H_{\min} (A|B)_\rho:=- \inf_{\sigma \in \D_= (B)} D_{\max} \left(\rho^{AB} \| I^A \otimes \sigma^B\right)\,.
\end{align}
The conditional max-entropy of $A$ given $B$ for $\rho^{AB} \in \D_\leq (A \otimes B)$, with purification $\rho^{ABR} \in \D_\leq (A \otimes B \otimes R)$ for some system $R$, is defined as
\begin{align}
H_{\max} (A|B)_\rho:= - H_{\min} (A|R)_\rho\,.
\end{align}
Note that this definition does not depend on the choice of the purification. The max-information that $B$ has about $A$ for $\rho^{AB} \in \D_\leq (A \otimes B)$ is defined as
\begin{align}
I_{\max} (A : B)_\rho:=  \inf_{\sigma \in \D_= (B)} D_{\max} \left(\rho^{AB} \| \rho^A \otimes \sigma^B\right)\,.
\end{align}
To define smooth entropy measures, an optimization over a set of nearby states is performed. The distance measure used is the purified distance, defined for $\rho, \sigma \in \D_\leq (A)$ as~\cite{TCR10}
\begin{align}
P (\rho, \sigma):= \sqrt{1 - \bar{F}^2 (\rho, \sigma)}\,,
\end{align}
in which the generalized fidelity is defined in terms of the fidelity as
\begin{align}
\bar{F} (\rho, \sigma):= F(\rho, \sigma) + \sqrt{(1 - \Tr{}{\rho})(1 - \Tr{}{\sigma})}\,.
\end{align}
We have $F (\rho, \sigma):= \| \sqrt{\rho} \sqrt{\sigma} \|_1$, and $\| M \|_1 = \Tr{}{\sqrt{M M^\dagger}}$. We then define an $\epsilon$-ball around $\rho \in \D_\leq (A)$ as
\begin{align}
\B^{\epsilon} (\rho):= \left\{ \bar{\rho} \in \D_\leq (A) : P (\rho, \bar{\rho}) \leq \epsilon \right\}\,.
\end{align}
For $\epsilon \geq 0$, the smooth conditional min-entropy of $A$ given $B$ for $\rho^{AB} \in \D_\leq (A \otimes B)$ is then defined as
\begin{align}
H_{\min}^\epsilon (A | B)_\rho:=  \sup_{\bar{\rho} \in \B^\epsilon (\rho^{AB})} H_{\min} (A|B)_{\bar{\rho}}\,,
\end{align}
and the smooth conditional max-entropy as
\begin{align}
H_{\max}^\epsilon (A | B)_\rho:=  \inf_{\bar{\rho}^{AB} \in \B^\epsilon (\rho^{AB})} H_{\max} (A|B)_{\bar{\rho}}\,.
\end{align}
The smooth max-information that $B$ has about $A$ for $\rho^{AB} \in \D_\leq (A \otimes B)$ is defined as
\begin{align}
I_{\max}^\epsilon (A : B)_\rho:=  \inf_{\bar{\rho}^{AB} \in \B^\epsilon (\rho^{AB})} I_{\max} (A :B)_{\bar{\rho}}\,.
\end{align}

%------------------------------------------------------------

\subsection{Quantum Asymptotic Equipartition Property}

In the asymptotic iid limit, the smooth conditional min-entropy and the conditional entropy become identical. This is captured by the fully quantum asymptotic equipartition property (AEP)~\cite{TCR09}. We make use of the following stronger version from~\cite{Tom12}.

\begin{theorem}[Asymptotic Equipartition Property~\cite{TCR09,Tom12}]\label{th:fqaep}
For any $\epsilon \in (0, 1)$ and $\epsilon^\prime \in (0, 1 - \epsilon]$, there exists $n_0 \in \mathbb{N}$ such that for any $n \geq n_0$ and $\rho^{AB} \in \D_{\leq}(A \otimes B)$ with purifying register $R$,
\begin{align}
\frac{1}{n} H_{\min}^\epsilon (A^{\otimes n} | B^{\otimes n}) & \geq H (A | B) - \frac{\delta (\epsilon, v)}{\sqrt{n}}\,,\\
\frac{1}{n} H_{\min}^\epsilon (A^{\otimes n} | B^{\otimes n}) & \leq H (A | B)  + \frac{\delta (\epsilon', v)}{\sqrt{n}} + \frac{h(\epsilon, \epsilon^\prime)}{n}\,,
\end{align}
where
\begin{align}
&\delta (\epsilon, v):= 4 \log v \sqrt{\log (2 / \epsilon^2)},\quad v:= \sqrt{2^{H_{\max} (A | B)}} + \sqrt{2^{H_{\max} (A | R)}} + 1,\quad\mathrm{and}\\
&h(\epsilon, \epsilon^\prime):= \log\left(\frac{1}{1 - \left(\epsilon\sqrt{1-\epsilon^{\prime2}} + \epsilon^\prime\sqrt{1-\epsilon^2}\right)^2}\right)\,.
\end{align}
\end{theorem}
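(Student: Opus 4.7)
The plan is to sandwich $H_{\min}^{\epsilon}(A^{\otimes n}|B^{\otimes n})$ between two evaluations of a one-parameter family of quantum R\'enyi conditional entropies and then expand that family around $\alpha=1$, where it agrees with $H(A|B)$. Concretely, I would work with the sandwiched R\'enyi conditional entropy $H_\alpha(A|B)_\rho := -\inf_{\sigma \in \D_=(B)} \tilde D_\alpha\bigl(\rho^{AB}\,\big\|\, I^A \otimes \sigma^B\bigr)$, built from $\tilde D_\alpha(\rho\|\sigma) = \frac{1}{\alpha-1}\log\Tr{}{\bigl(\sigma^{(1-\alpha)/2\alpha}\rho\,\sigma^{(1-\alpha)/2\alpha}\bigr)^\alpha}$. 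The three ingredients I would rely on are: additivity under tensor products, $H_\alpha(A^{\otimes n}|B^{\otimes n}) = n H_\alpha(A|B)$; the limiting values $H_\alpha \to H_{\min}$ as $\alpha \to \infty$ and $H_\alpha \to H$ as $\alpha \to 1$; and a Taylor expansion $H_\alpha(A|B) = H(A|B) - \frac{\alpha - 1}{2\ln 2}\,V(A|B) + O((\alpha-1)^2)$, in which $V$ is the conditional relative-entropy variance and the second-order remainder is uniformly bounded on a neighbourhood of $\alpha = 1$ by a polynomial in $2^{H_{\max}(A|B)}$ and $2^{H_{\max}(A|R)}$.

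For the achievability (lower) bound I would first establish a smoothing estimate of the form $H_{\min}^{\epsilon}(A|B)_\rho \ge H_\alpha(A|B)_\rho - \frac{1}{\alpha-1}\log\frac{2}{\epsilon^2}$ for every $\alpha > 1$. The standard argument truncates the eigenvalues of $\sigma^{(1-\alpha)/2\alpha}\rho\,\sigma^{(1-\alpha)/2\alpha}$ at a threshold chosen so that the discarded mass is at most $\epsilon^{2}/2$, then invokes the gentle-measurement lemma to verify that the truncated subnormalised state lies inside $\B^{\epsilon}(\rho)$ and has conditional min-entropy at least the threshold value. Applied to $\rho^{\otimes n}$ with additivity,
\begin{equation*}
\tfrac{1}{n} H_{\min}^{\epsilon}(A^{\otimes n}|B^{\otimes n}) \ge H_\alpha(A|B) - \tfrac{\log(2/\epsilon^2)}{n(\alpha-1)},
\end{equation*}
and plugging in the Taylor expansion with $\alpha-1 \propto 1/\sqrt n$ balances the two error terms and produces the advertised $\delta(\epsilon,v)/\sqrt n$ loss once all constants are tracked.

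For the converse (upper) bound I would dually establish $H_{\min}^{\epsilon'}(A|B)_\rho \le H_\alpha(A|B)_\rho + \frac{1}{1-\alpha}\log\frac{1}{1-\epsilon^{\prime 2}}$ for $\alpha < 1$ by an analogous projector construction on the minimiser of $\tilde D_\alpha$. The $n$-copy version together with the same Taylor expansion gives a bound with $\delta(\epsilon',v)/\sqrt n$ in place of the symmetric term. The final step converts the $\epsilon'$-smoothed statement into an $\epsilon$-smoothed one: by the triangle inequality for the purified distance and the explicit form of $\bar F$, this substitution costs exactly the additive $h(\epsilon,\epsilon')/n$ term that appears in the theorem.

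The main obstacle is establishing the two smoothing lemmas with the sharp constants producing $\delta(\epsilon,v)$ and $h(\epsilon,\epsilon')$. The delicate points are (i) a uniform bound on the second-order Taylor remainder of $H_\alpha$ on a neighbourhood of $\alpha = 1$, which is precisely where the factor $v = \sqrt{2^{H_{\max}(A|B)}} + \sqrt{2^{H_{\max}(A|R)}} + 1$ enters via operator-norm estimates on the sandwiched R\'enyi quantity (here duality $H_{\max}(A|B) = -H_{\min}(A|R)$ is used to express the remainder symmetrically in $A$--$B$ and $A$--$R$), and (ii) the trade-off between the smoothing parameter and the R\'enyi parameter, which fixes the $1/\sqrt n$ scaling after optimal choice of $\alpha$. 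Beyond these two inequalities, the proof reduces to routine properties of sandwiched R\'enyi entropies (additivity, duality, data processing) and elementary one-variable analysis.
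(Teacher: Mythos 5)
The paper cites this asymptotic equipartition property from \cite{TCR09,Tom12} without giving a proof, so there is no in-paper argument to compare against directly. Your overall strategy for the lower bound --- relate the smooth conditional min-entropy to a R\'enyi conditional entropy of order $\alpha>1$ via a one-shot smoothing lemma, exploit additivity under tensor products, Taylor-expand around $\alpha=1$, and optimise the R\'enyi parameter as $\alpha-1\propto 1/\sqrt n$ --- is the right framework and is in the same spirit as the cited references. Note, however, that the explicit constants $\delta(\epsilon,v)$ and $v=\sqrt{2^{H_{\max}(A|B)}}+\sqrt{2^{H_{\max}(A|R)}}+1$ are those obtained in Tomamichel's thesis, which works with a different conditional R\'enyi family than the sandwiched one you invoke (that divergence postdates both cited works), so the second-order remainder estimate that produces this particular $v$ would have to be re-derived for the sandwiched quantity, and it is not clear that the constant comes out in exactly this form.

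The more substantive issue is in your upper bound. You propose to bound $H_{\min}^{\epsilon'}$ from above by a R\'enyi entropy with $\alpha<1$ and then ``convert the $\epsilon'$-smoothed statement into an $\epsilon$-smoothed one'' by the triangle inequality for the purified distance, claiming this costs $h(\epsilon,\epsilon')$. But $H_{\min}^{\epsilon}$ is a supremum over an $\epsilon$-ball and is therefore monotone nondecreasing in $\epsilon$: an upper bound on $H_{\min}^{\epsilon'}$ implies one on $H_{\min}^{\epsilon}$ only when $\epsilon\leq\epsilon'$, and then at no extra cost, while for $\epsilon>\epsilon'$ the smoothing balls are nested the wrong way and no triangle-inequality argument can rescue this. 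The $h(\epsilon,\epsilon')$ term in the theorem in fact comes from a min-to-max comparison, not from changing the smoothing parameter of a single entropy. The standard route is to apply a lemma of the form $H_{\min}^{\epsilon}(A|B)\leq H_{\max}^{\epsilon'}(A|B)+h(\epsilon,\epsilon')$, then use duality $H_{\max}^{\epsilon'}(A|B)=-H_{\min}^{\epsilon'}(A|R)$ together with the already-established lower bound applied to $H_{\min}^{\epsilon'}(A^{\otimes n}|R^{\otimes n})$ and the purity relation $H(A|R)=-H(A|B)$. That route reuses the lower-bound machinery verbatim, explains why $v$ is symmetric in $B\leftrightarrow R$, and needs no $\alpha<1$ variant of the smoothing lemma; you should replace your converse step with it.
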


%------------------------------------------------------------

\section{Decoupling Approach to State Redistribution}\label{sec:decoupling}

We want to make use of the following observation of Oppenheim \cite{Opp08}: quantum state redistribution can be optimally decomposed into two applications of quantum state merging by introducing a coherent relay, and applying an ebit repackaging sub-protocol. In more details, we consider four distinct parties, each holding a register. Charlie holds register $C$, that he wants to transmit to Bob, who holds register $B$, and to do so he may use help from Alice acting as a coherent relay, who holds register $A$. The state $\rho$ in registers $ABC$ is purified by state $\rho^{ABCR}$ with the $R$ register held by some reference party Ray. The goal is to transmit $C$ to Bob while minimizing the communication from Alice to Bob, and while keeping the overall correlation with Ray. We might also keep track of communication between Charlie and Alice, as well as of the entanglement consumption and generation between both Charlie and Alice, and Alice and Bob, but here our main focus is the communication between Alice and Bob. A key observation is that applying a single decoupling unitary at Charlie's side suffice to generate two hypothetical state merging protocols. Firstly, the state merging protocol that directly transmits the $C$ register to $B$ while considering both the $A$ and $R$ registers as reference. In an iid asymptotic setting, this state merging protocol requires quantum communication rate of $\frac{1}{2} I (C; AR)$ and generates ebits between Charlie and Bob at a rate $\frac{1}{2} I (C; B)$. Secondly, if we instead consider the state merging protocol that transmits the $C$ register to Alice, this requires communication of $\frac{1}{2} I (C ; RB)$ qubits between Charlie and Alice, and generates $\frac{1}{2} I (C; A)$ ebits between Charlie and Alice. As Oppenheim noted, this pure state entanglement between Alice and Charlie should not be communicated to Bob. The state redistribution protocol that uses Alice as a coherent relay then runs as follows.

Charlie merges his state with Alice's, generating $\frac{1}{2} I (C; A)$ ebits between them. Alice then replaces these ebits by some pre-shared ebits between her and Bob. This is the ebit repackaging sub-protocol, which effectively acts as a communication of $I (C; A)$ qubits between Charlie and Bob in the direct merging protocol. Alice then transmits the remaining qubits required to complete the direct merging protocol between Charlie and Bob.  A communication of
\begin{align}
\frac{1}{2} I (C; AR) - \frac{1}{2} I (C; A) =  \frac{1}{2} I(C; R | B)
\end{align}
is required to achieve this, which is asymptotically optimal~\eqref{eq:devetak_yard}. We formalize this idea below while using it in a one-shot setting and expressing the relevant bounds in terms of smooth conditional entropies.

Following the decoupling approach to quantum information theory~\cite{HHYW08, Dup10, DBWR14}, quantum state merging is conveniently understood in terms of decoupling theorems. Here we first restate the central decoupling theorem of~\cite{BCR11} in terms of smooth conditional min-entropy.

\begin{theorem}[Decoupling Theorem~\cite{BCR11}]\label{th:bcr11}%[Theorem III.1]
For $\epsilon > 0, \rho^{AR} \in \D_{\leq}(A \otimes R)$, and any decomposition $A = A_1 \otimes A_2$, if
\begin{align}
\log |A_1| \leq \frac{1}{2} \log |A|  + \frac{1}{2} H_{\min} (A|R)_\rho - \log \frac{1}{\epsilon}\,,
\end{align}
then
\begin{align}
\int_{\U (A)} \left\| \Tra{A_2}\big[U^{A \rightarrow A_1 A_2} \big(\rho^{A R}\big)\big] - \pi^{A_1} \otimes \rho^R \right\|_1 dU \leq \epsilon\,,
\end{align}
where dU is the Haar measure over the unitaries on system $A$, normalized to $\int dU = 1$, and $\pi^{A_1}$ is the completely mixed state on $A_1$.
\end{theorem}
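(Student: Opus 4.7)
The plan is to follow the standard decoupling strategy: bound the expected $1$-norm by an expected $2$-norm, evaluate the $2$-norm via a Haar integral, and then relate the result to the conditional min-entropy. The only subtlety is to avoid a factor of $\sqrt{|R|}$ in the final bound, which is achieved by working with a weighted Hilbert--Schmidt norm tailored to the state $\sigma^R$ that attains $H_{\min}(A|R)_\rho$.

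First, pick $\sigma^R \in \D_=(R)$ achieving the infimum in the definition of $H_{\min}(A|R)_\rho$, so that $\rho^{AR} \leq 2^{-H_{\min}(A|R)_\rho} \, I^A \otimes \sigma^R$. Define the $\sigma$-weighted operator $\tilde{X}^{AR}:=(I^A\otimes (\sigma^R)^{-1/4})X^{AR}(I^A\otimes (\sigma^R)^{-1/4})$. The key reduction is the Cauchy--Schwarz-style inequality
\begin{align}
\bigl\|X^{A_1 R}\bigr\|_1 \leq \sqrt{|A_1|}\,\bigl\|\tilde X^{A_1 R}\bigr\|_2,
\end{align}
which follows by inserting $(\sigma^R)^{1/4}(\sigma^R)^{-1/4}$ on either side of $X$ and applying $\|M\|_1\le \sqrt{d}\|M\|_2$ only on the $A_1$ factor. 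Applying this to $X^{A_1R}=\Tra{A_2}[U\rho^{AR}U^\dg]-\pi^{A_1}\otimes\rho^R$ and using Jensen's inequality, the theorem reduces to bounding $\int \|\widetilde{X}\|_2^2 \, dU$.

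Second, expand $\|\widetilde X\|_2^2 = \Tr[\widetilde X^{\otimes 2} F^{A_1 R}]$ with $F^{A_1 R}$ the swap on two copies of $A_1 R$. The Haar integral $\int U^{\otimes 2}(\cdot)(U^{\dg})^{\otimes 2}\,dU$ is a projection onto $\mathrm{span}\{I^{A\otimes 2},F^{A\otimes 2}\}$, and when composed with the partial traces coming from $\Tra{A_2}$ it produces two terms: one proportional to $I^{A_1}\otimes \Tr_A[\rho^{AR}]$ and one proportional to $(|A_1|/|A|)\,\mathrm{Swap}$-type contribution. The subtraction of $\pi^{A_1}\otimes\rho^R$ in the definition of $X$ exactly cancels the first term. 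A direct computation (standard in the decoupling literature) yields
\begin{align}
\int \bigl\|\widetilde X^{A_1 R}\bigr\|_2^2\,dU \leq \frac{|A_1|}{|A|}\,\Tr\bigl[(\tilde\rho^{AR})^2\bigr],
\end{align}
up to an irrelevant $\mathcal{O}(1/|A|^2)$ correction that one may drop safely.

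Third, I would bound the weighted purity $\Tr[(\tilde\rho^{AR})^2]$ using the min-entropy. From $\rho^{AR}\le 2^{-H_{\min}(A|R)_\rho}I^A\otimes\sigma^R$ and the operator monotonicity of conjugation by $(\sigma^R)^{-1/4}$, one obtains $\tilde\rho^{AR}\le 2^{-H_{\min}(A|R)_\rho}I^A\otimes(\sigma^R)^{1/2}$, which together with $\Tr[\tilde\rho^{AR}]\leq 1$ gives $\Tr[(\tilde\rho^{AR})^2]\le 2^{-H_{\min}(A|R)_\rho}$. Plugging back, the expected $1$-norm is bounded by
\begin{align}
\sqrt{\frac{|A_1|^2}{|A|}\,2^{-H_{\min}(A|R)_\rho}}\, = \,2^{\log|A_1|-\tfrac{1}{2}\log|A|-\tfrac{1}{2}H_{\min}(A|R)_\rho},
\end{align}
which is at most $\epsilon$ precisely under the hypothesis of the theorem.

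The main obstacle, and the essential technical point, is the introduction of the $\sigma^R$-weighting in step one: a naive $\|\cdot\|_1\leq\sqrt{|A_1||R|}\,\|\cdot\|_2$ would pollute the bound with $\sqrt{|R|}$ and wreck the conditional-entropy formulation. Everything else is Haar-integral bookkeeping and unpacking the definition of $H_{\min}(A|R)_\rho$, so once the weighted 2-norm is set up correctly the proof proceeds mechanically.
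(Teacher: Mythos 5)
The paper states this theorem with a citation to~\cite{BCR11} and does not include a proof, so there is no in-paper argument to compare against. Your proposal is the standard $\sigma$-weighted decoupling argument from~\cite{BCR11} (and Dupuis's thesis), and the overall structure is correct: reduce the $1$-norm to a $\sigma$-weighted $2$-norm, apply Jensen, compute the Haar twirl, and bound the weighted purity by $2^{-H_{\min}(A|R)_\rho}$.

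One step as you wrote it is, however, false: ``$\mathrm{Tr}[\tilde\rho^{AR}]\le 1$'' does not hold in general. Since $\tilde\rho^{AR}=(I^A\otimes(\sigma^R)^{-1/4})\rho^{AR}(I^A\otimes(\sigma^R)^{-1/4})$, its trace equals $\mathrm{Tr}\big[\rho^{AR}\big(I^A\otimes(\sigma^R)^{-1/2}\big)\big]$, which can be arbitrarily large when $\sigma^R$ has small eigenvalues on the support of $\rho^R$. The quantity you actually need is $\mathrm{Tr}\big[\tilde\rho^{AR}\big(I^A\otimes(\sigma^R)^{1/2}\big)\big]=\mathrm{Tr}[\rho^{AR}]\le 1$; combined with $0\le\tilde\rho^{AR}\le 2^{-H_{\min}(A|R)_\rho}\,I^A\otimes(\sigma^R)^{1/2}$ this yields $\mathrm{Tr}[(\tilde\rho^{AR})^2]\le 2^{-H_{\min}(A|R)_\rho}\,\mathrm{Tr}\big[\tilde\rho^{AR}\big(I^A\otimes(\sigma^R)^{1/2}\big)\big]\le 2^{-H_{\min}(A|R)_\rho}$, which is the desired bound. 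Relatedly, ``applying $\|M\|_1\le\sqrt{d}\|M\|_2$ only on the $A_1$ factor'' is not an operation one can perform on a single tensor factor of an operator on $A_1\otimes R$; the clean derivation is H\"older with $\tau:=I^{A_1}\otimes\sigma^R$, namely $\|X\|_1=\|\tau^{1/4}\tilde X\tau^{1/4}\|_1\le\|\tau^{1/4}\|_4^{2}\,\|\tilde X\|_2=\sqrt{\mathrm{Tr}[\tau]}\,\|\tilde X\|_2=\sqrt{|A_1|}\,\|\tilde X\|_2$. With these two corrections the argument is sound and matches the standard proof in the cited reference.
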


For our purpose we need the following bi-decoupling result in terms of smooth conditional entropies, a direct generalization of Theorem~\ref{th:bcr11}.\footnote{A similar bi-decoupling result appears in~\cite{YBW08}, with bounds in terms of register dimensions instead of smooth conditional entropies. It would be possible to apply ideas similar to theirs to obtain a different coding theorem achieving the same achievability bound~\eqref{eq:mainbound_intro} for one-shot quantum state redistribution.}

\begin{corollary}[Bi-Decoupling Theorem]\label{cor:bidec}
For any $\epsilon_1, \epsilon_2 > 0, \rho_1^{C R_1} \in \D_{\leq} (C \otimes R_1), \rho_2^{C R_2} \in \D(C \otimes R_2)$ and any decomposition $C = C_1 \otimes C_2 \otimes C_3$, if
\begin{align}
\log |C_1| &\leq \frac{1}{2} \log |C|  + \frac{1}{2} H_{\min} (C|R_1)_{\rho_1} - \log \frac{1}{\epsilon_1}, \\
\log |C_2| &\leq \frac{1}{2} \log |C|  + \frac{1}{2} H_{\min} (C|R_2)_{\rho_2} - \log \frac{1}{\epsilon_2}\,,
\end{align}
then there exists a unitary $U^{C \rightarrow C_1 C_2 C_3}$ such that
\begin{align}
\left\| \Tra{C_2 C_3} \big[U \big(\rho_1^{C R_1}\big)\big] - \pi^{C_1} \otimes \rho_1^{R_1} \right\|_1  &\leq 3 \epsilon_1\quad\mathrm{and}\quad\left\| \Tra{C_1 C_3} \big[U \big(\rho_2^{C R_2}\big)\big] - \pi^{C_2} \otimes \rho_2^{R_2} \right\|_1  &\leq 3 \epsilon_2\,.
\end{align}
\end{corollary}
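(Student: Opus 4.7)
The plan is to apply Theorem~\ref{th:bcr11} once for each of the two states $\rho_1$ and $\rho_2$, using a common Haar-distributed unitary $U$ on $C$, and then use a simple probabilistic argument to extract a single $U$ that decouples both states simultaneously.

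First, I would view the decomposition $C = C_1 \otimes C_2 \otimes C_3$ as the bipartition $C = C_1 \otimes (C_2 \otimes C_3)$, i.e., set $A_1 = C_1$ and $A_2 = C_2 C_3$ in Theorem~\ref{th:bcr11}. The first hypothesis of the corollary is exactly the size constraint required by Theorem~\ref{th:bcr11} applied to $\rho_1^{CR_1}$, so I obtain
\begin{align}
\int_{\U(C)} \bigl\| \Tra{C_2 C_3}\bigl[U\bigl(\rho_1^{CR_1}\bigr)\bigr] - \pi^{C_1} \otimes \rho_1^{R_1} \bigr\|_1 \, dU \leq \epsilon_1.
\end{align}
Repeating the same reasoning with the alternative bipartition $C = C_2 \otimes (C_1 \otimes C_3)$, the second hypothesis lets me apply Theorem~\ref{th:bcr11} to $\rho_2^{CR_2}$, yielding
\begin{align}
\int_{\U(C)} \bigl\| \Tra{C_1 C_3}\bigl[U\bigl(\rho_2^{CR_2}\bigr)\bigr] - \pi^{C_2} \otimes \rho_2^{R_2} \bigr\|_1 \, dU \leq \epsilon_2.
\end{align}
The crucial point is that both integrals are with respect to the same Haar measure on $\U(C)$, so the two random quantities are functions of a common random $U$.

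From here I would conclude by Markov's inequality and a union bound. Let $X_i(U)$ denote the trace-norm quantity in the $i$-th display above. Markov gives $\Pr[X_1(U) > 3\epsilon_1] \leq 1/3$ and $\Pr[X_2(U) > 3\epsilon_2] \leq 1/3$, so the Haar measure of the set of $U$ violating at least one of the two bounds is at most $2/3 < 1$. Consequently there exists a single unitary $U^{C \to C_1 C_2 C_3}$ with $X_1(U) \leq 3\epsilon_1$ and $X_2(U) \leq 3\epsilon_2$ simultaneously, which is the claimed statement.

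I do not anticipate a real obstacle: once one sees that Theorem~\ref{th:bcr11} can be invoked twice with the two different bipartitions of the same register $C$, the rest is just the probabilistic method. The only thing worth double-checking is that the two applications of Theorem~\ref{th:bcr11} really do use the same integration variable $U$ on $\U(C)$ (as opposed to independent copies), since that is what lets a single deterministic unitary satisfy both conclusions and in turn explains the factor of $3$ appearing in both bounds.
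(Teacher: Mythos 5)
Your proof is correct and follows essentially the same route as the paper: apply Theorem~\ref{th:bcr11} twice with the two bipartitions $C_1$ vs.\ $C_2 C_3$ and $C_2$ vs.\ $C_1 C_3$ of the same register $C$, then combine Markov's inequality and a union bound over the common Haar-random $U$ to extract a single unitary that works for both states.
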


\begin{proof}
By Markov's inequality, if the condition on $|C_1|, |C_2|$ are satisfied, then Theorem \ref{th:bcr11} says that the probability over the Haar measure on $\U (C)$ that $\| \Tr{C_2 C_3} U (\rho_1)^{C R_1} - \pi^{C_1} \otimes \rho_1^{R_1} \|  \geq 3 \epsilon_1$ is at most $\frac{1}{3}$, and similarly for $\| \Tr{C_1 C_3} U (\rho_2)^{C R_2} - \pi^{C_2} \otimes \rho_2^{R_2} \|  \geq 3 \epsilon_2$, so by the union bound there is at least probability $\frac{1}{3}$ that none of these is satisfied, and then the condition of the corollary are satisfied for all corresponding $U$'s.
\end{proof}

%------------------------------------------------------------

\section{Achievability Bounds}\label{sec:achievability}

We now state formally the definition of one-shot quantum state redistribution. Let $\rho_{ABC}$ be the joint initial state of Alice and Bob's systems, where $AC$ is with Alice and $B$ is with Bob. We can view this state as part of a larger pure state $\rho^{ABCR}$ that includes a reference system $R$. In this picture quantum state redistribution means that Alice can send the $C$-part of $\rho^{ABCR}$ to Bob's side without altering the joint state. We first consider a particular setting where we have free entanglement assistance between Alice and Bob, and the goal is to minimize the number of qubits communicated from Alice to Bob in order to achieve the state transfer.

\begin{definition}[One-Shot Quantum State Redistribution]\label{def:state_redistribution}
Let $\rho^{ABC}\in\D_{\leq} (A\otimes B\otimes C)$, and let $T_{A}^{in} T_{B}^{in}$, $T_A^{out} T_B^{out}$ be additional systems. A cptp map $\Pi:ACT_{A}^{in}\otimes BT_{B}^{in}\to AT_{A}^{out}\otimes C'BT_{B}^{out}$ is called quantum state redistribution of $\rho_{ABC}$ with error $\epsilon\geq0$, if it consists of local operations and sending $q(\Pi)$ qubits with respect to the bipartition $ACT_{A}^{in}\to AT_{A}^{out}$ vs.~$BT_{B}^{in}\to C'BT_{B}^{out}$, and
\begin{align}
P\Big(\big(\Pi^{ACT_{A}^{in}BT_{B}^{in}\to AT_{A}^{out}C'BT_{B}^{out}}\big)\big(\Phi_{1}^{T_{A}^{in}T_{B}^{in}}\otimes\rho^{ABCR}\big),\Phi_{2}^{T_{A}^{out}T_{B}^{out}}\otimes\rho^{ABC'R}\Big)\leq\epsilon\,,
\end{align}
where $\rho^{ABC'R}=(I^{C\to C'}\otimes I^{ABR})\rho^{ABCR}$ for a purification $\rho^{ABCR}$ of $\rho^{ABC}$, and $\Phi_{1}$, $\Phi_{2}$ are arbitrary states on $T_{A}^{in}T_{B}^{in}$ and $T_A^{out} T_B^{out}$, respectively. The number $q$ is called quantum communication cost of the protocol $\Pi$.
\end{definition}

We obtain the following direct coding theorem for one-shot quantum state redistribution.

\begin{theorem}[Achievability One-Shot Quantum State Redistribution]\label{th:oneshqsr}
Let $\epsilon_1, \epsilon_2 \geq 0, \epsilon_3, \epsilon_4 > 0$, and $ \rho^{ABC} \in \D_{=}(A \otimes B \otimes C) $ purified by $\rho^{ABCR}$ for some register $R$. Then, there exists a quantum state redistribution $\Pi$ of $\rho_{ABC}$ with error $(8 \epsilon_1 + 2 \epsilon_2 + 4 \sqrt{3 \epsilon_3} + \sqrt{3 \epsilon_4})$ and quantum communication cost $q(\Pi)$ satisfying
\begin{align}\label{eq:achievability}
q(\Pi)\leq \frac{1}{2} H_{\max}^{\epsilon_1} (C|B)_\rho - \frac{1}{2} H_{\min}^{\epsilon_2} (C|BR)_\rho + \log \frac{1}{\epsilon_3} + \log \frac{1}{\epsilon_4} + 2\,.
\end{align}
Moreover, $\Pi$ only uses ebits as pre-shared entanglement and also generates ebit pairs. The net entanglement consumption cost $e(\Pi)$ satisfies
\begin{align}
e(\Pi) \leq \frac{1}{2} H_{\max}^{\epsilon_1} (C|B)_\rho + \frac{1}{2} H_{\min}^{\epsilon_2} (C|BR)_\rho - \log \frac{1}{\epsilon_3} + \log \frac{1}{\epsilon_4} + 1\,.
\end{align}
\end{theorem}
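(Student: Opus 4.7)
The plan is to realize Oppenheim's coherent-relay decomposition through a single bi-decoupling unitary on Alice's $C$ register, an ebit-repackaging swap using the pre-shared entanglement, a short Alice-to-Bob transmission, and a decoding isometry on Bob produced by Uhlmann's theorem. The two conditions in Corollary~\ref{cor:bidec} will play the roles of the \emph{inner} state merge (of $C$ into $A$, free since both registers are held by Alice) and the \emph{outer} state merge (of $C$ directly into $B$). I would begin by picking near-optimal smoothing states $\bar\rho_1 \in \B^{\epsilon_1}(\rho^{CAR})$ and $\bar\rho_2 \in \B^{\epsilon_2}(\rho^{CBR})$ and choosing $\log|C_1|$ and $\log|C_2|$ at the saturation of the two bounds of Corollary~\ref{cor:bidec} applied with $R_1 = AR$ and $R_2 = BR$. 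Min-max duality on the pure state $\rho^{ABCR}$ identifies $H_{\min}^{\epsilon_1}(C|AR)_\rho = -H_{\max}^{\epsilon_1}(C|B)_\rho$, so that $\log|C_3| := \log|C| - \log|C_1| - \log|C_2|$ matches the claimed quantum communication cost, with the additive $+2$ absorbing the integer rounding of the two $\log|C_i|$. Setting $|T_A^{in}| = |C_2|$ (consumed in the repackaging) and $|T_A^{out}| = |C_1|$ (produced by the outer merge) then gives the entanglement-cost formula by direct substitution.

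Corollary~\ref{cor:bidec} now delivers a single unitary $U^{C\to C_1C_2C_3}$ that simultaneously $(3\epsilon_3)$-decouples $C_1$ from $AR$ on $\bar\rho_1$ and $(3\epsilon_4)$-decouples $C_2$ from $BR$ on $\bar\rho_2$; the triangle inequality for purified distance lifts both decouplings from the smoothed marginals back to the full pure state $\rho^{ABCR}$ at additional cost $\epsilon_1$ and $\epsilon_2$. The protocol then runs as: (i) Alice applies $U$ on $C$; (ii) Alice swaps $C_2$ with her half $T_A^{in}$ of the pre-shared ebit $\Phi_1$; (iii) Alice transmits $C_3$ to Bob; (iv) Bob applies a decoding isometry $W$ on $B, T_B^{in}, C_3$. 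The key observation validating step (ii) is that the $C_2$-decoupling asserts $C_2$ to be approximately maximally entangled with a subsystem of Alice's complement $C_1 C_3 A$; swapping coherently transfers this self-entanglement into a genuine Alice-Bob ebit on the pair $(C_2, T_B^{in})$ while simultaneously placing $T_A^{in}$ into the slot that $C_2$ used to fill in the outer purification.

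The decoder $W$ comes from Uhlmann's theorem applied to the $C_1$-decoupling: since the state on $C_1 A R$ is $(3\epsilon_3)$-close in trace distance to $\pi^{C_1}\otimes \rho^{AR}$, the two purifications differ by an isometry on the complementary purifying subsystem, which after step (iii) is precisely Bob's $B, T_B^{in}, C_3$; this isometry produces the target pure state $|\Phi^+\rangle^{C_1 T_A^{out}}\otimes |\rho\rangle^{ABC'R}$ up to the same error. To close out I would accumulate the errors: Fuchs--van de Graaf converts the bi-decoupling trace-distance bounds into purified-distance bounds $\sqrt{3\epsilon_3}$ and $\sqrt{3\epsilon_4}$; Uhlmann preserves these up to constants; the triangle inequality combines them with the smoothing errors $\epsilon_1, \epsilon_2$ (incurred at each lifting and each propagation through an isometry) to yield the claimed total $8\epsilon_1 + 2\epsilon_2 + 4\sqrt{3\epsilon_3} + \sqrt{3\epsilon_4}$, whose asymmetric coefficients reflect that the outer merge passes through more intermediate isometries than the inner one. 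The principal technical difficulty is arranging the two Uhlmann invocations so that they remain coherent through the swap and Bob's decoding simultaneously, forcing the purifications in the two invocations to be chosen compatibly; once this compatibility is established the errors add rather than compound, and the rest of the analysis reduces to routine bookkeeping.
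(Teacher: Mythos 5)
Your high-level plan is essentially the paper's: realize Oppenheim's coherent-relay decomposition with a single bi-decoupling unitary $U^{C\to C_1C_2C_3}$, smooth the two conditional min-entropy terms independently by applying Corollary~\ref{cor:bidec} to two nearby states, use min-max duality to rewrite $-H_{\min}^{\epsilon}(C|AR)$ as $H_{\max}^{\epsilon}(C|B)$, and account for the integer roundings with the additive $+2$. All of that is correct and matches what the paper does (modulo a harmless relabeling $C_1\leftrightarrow C_2$ versus the paper, which would also interchange the roles of $\epsilon_3$ and $\epsilon_4$ in the entanglement and error bounds).

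However, the protocol you describe does not actually work, and the gap is precisely the ebit-repackaging step. You propose: (ii) Alice \emph{directly} swaps the register $C_2$ with her half $T_A^{in}$ of the pre-shared ebit, then (iii) sends $C_3$, and (iv) Bob applies a single Uhlmann decoder $W$ on $B\,T_B^{in}\,C_3$. A bare swap does not teleport the data originally in $C_2$ to Bob; it merely relocates it to Alice's $T_A^{in}$ slot while placing an ebit-half in her $C_2$ slot. After the swap, Bob's register $T_B^{in}$ holds half of a fresh, \emph{uncorrelated} maximally entangled pair — it is not ``standing in'' for the $C_2$ data. In particular, the joint state on $B\,T_B^{in}\,C_3\,C_1\,A\,R$ is no longer pure (it is entangled through $\phi^{C_2 T_B^{in}}$ with the registers $C_2$ and $T_A^{in}$ that remain at Alice), so Uhlmann's theorem does not give a Bob-local isometry that reconstructs $\rho^{ABC'R}\otimes\phi^{C_1 T_A^{out}}$. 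What is actually needed — and what the paper does — is to sandwich the swap between the \emph{inner-merge} Uhlmann isometry and its inverse: apply $V_1^{C_1 C_3 A \to A_2 A'C'}$ (which localizes the purifier of the approximately maximally mixed $C_2$ into a single register $A_2$), swap $A_2\leftrightarrow T_A^{in}$, then apply the projective channel $\M$ and $\hat V_1^\dagger$ to redistribute the ebit half back across $C_1 C_3 A$. Only this sandwiching makes Bob's $T_B^{in}$ correlated like the missing piece of $U(\rho)$, so that his outer-merge Uhlmann isometry $V_2$ can succeed.

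Two smaller points. First, your explanation of the error coefficients is reversed: in the paper's triangle-inequality decomposition, three of the four terms involve $V_1$, $\M$, and $\hat V_1^{-1}$ — i.e.\ the inner merge and the repackaging machinery — which is where the coefficient $4$ comes from, while the outer-merge decoder $V_2$ contributes only once. Second, the last paragraph gestures at a ``compatibility of the two Uhlmann invocations'' as the main difficulty, but in the protocol you wrote down there is only one Uhlmann isometry; the second one ($V_1$) is exactly the ingredient you left out, and introducing it is what makes the error accounting go through as in the paper.
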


\begin{proof}
We first prove the theorem for the special case $\epsilon_1 = \epsilon_2 = 0$. In Corollary~\ref{cor:bidec}, we take $R_1 = BR, R_2 = AR, \rho_1 = \rho^{CBR}, \rho_2 = \rho^{CAR}$,
\begin{align}
\log |C_1| &= \left\lfloor \frac{1}{2} \log |C|  + \frac{1}{2} H_{\min} (C|BR)_{\rho} - \log \frac{1}{\epsilon_3} \right\rfloor\\
\log |C_2| &= \left\lfloor \frac{1}{2} \log |C|  + \frac{1}{2} H_{\min} (C|AR)_{\rho} - \log \frac{1}{\epsilon_4} \right\rfloor\,,
\end{align}
and then there exists a unitary $U^{C \rightarrow C_1 C_2 C_3}$ satisfying
\begin{align}
\left\| \Tra{C_2 C_3} \big[U\big(\rho^{CBR}\big)\big] - \pi^{C_1} \otimes \rho^{BR} \right\|_1 \leq 3 \epsilon_3\quad\mathrm{and}\quad\left\| \Tra{C_1 C_3} \big[U\big(\rho^{CAR}\big)\big] - \pi^{C_2} \otimes \rho^{AR} \right\|_1 \leq 3 \epsilon_4\,.
\end{align}
We transform these in  purified distance bounds using a generalized Fuchs-van der Graaf inequality,
\begin{align}
\label{eq:ucbr}
P \Big( \Tra{C_2 C_3} \big[U \big(\rho^{CBR}\big)\big], \pi^{C_1} \otimes \rho^{BR} \Big) &\leq \sqrt{3 \epsilon_3},\\
\label{eq:ucar}
P \Big( \Tra{C_1 C_3}\big[U \big(\rho^{CAR}\big)\big], \pi^{C_2} \otimes \rho^{AR} \Big) &\leq \sqrt{3 \epsilon_4}\,.
\end{align}
Let $A^\prime, A^{\prime \prime}$ be isomorphic to $A$, $B^{\prime \prime \prime}$ be isomorphic to $B$, $C^\prime, C^{\prime \prime \prime}$ be isomorphic to $C$, and $C_2^{ \prime \prime}, C_3^{\prime \prime}$ be isomorphic to $C_2, C_3$, respectively. Then, Uhlmann's theorem tells us that there exist partial isometries
\begin{align}
V_1^{C_2 C_3 A \rightarrow A_1 A^\prime C^\prime}\quad\mathrm{and}\quad V_2^{C_1 C_3 B \rightarrow B_2 B^{\prime \prime \prime} C^{\prime \prime \prime}}
\end{align}
satisfying
\begin{align}
\label{eq:ulluv1}
P\Big( V_1 U \big(\rho^{ABCR}\big), \kb{\phi_1}{\phi_1}^{A_1 C_1} \otimes I^{AC \rightarrow A^\prime C^\prime} \big(\rho^{ABCR}\big)\Big)
&= P\Big( \Tra{C_2 C_3}\big[U\big(\rho^{CBR}\big)\big],\pi^{C_1} \otimes \rho^{BR}\Big)\\
\label{eq:ulluv2}
P \Big( V_2 U \big(\rho^{ABCR}\big), \kb{\phi_2}{\phi_2}^{B_2 C_2} \otimes I^{BC \rightarrow B^{\prime \prime \prime} C^{\prime \prime \prime}} \big(\rho^{ABCR}\big) \Big)
&= P \Big( \Tra{C_1 C_3}\big[ U \big(\rho^{CAR}\big)\big], \pi^{C_2} \otimes \rho^{AR} \Big)\,.
\end{align}
Let $T_A, T_B$ be isomorphic to $A_1, C_1$, respectively, and denote by
\begin{align}
\hat{U}^{ C \rightarrow T_B C_2^{\prime \prime} C_3^{\prime \prime} },\quad
\hat{V}_1^{ C_2^{ \prime \prime} C_3^{\prime \prime} A^{\prime \prime} \rightarrow T_A A^\prime C^\prime}\quad\mathrm{and}\quad\hat{V}_2^{ T_B C_3^{\prime \prime} B \rightarrow  B_2 B^{\prime \prime \prime} C^{\prime \prime \prime}}
\end{align}
a version of $U$ that maps register $C$ into registers $T_B C_2^{\prime \prime} C_3^{\prime \prime}$, a version of $V_1$ that maps registers $C_2^{ \prime \prime} C_3^{\prime \prime} A^{\prime \prime}$ into registers $ T_A A^\prime C^\prime$, and a version of $V_2$ that maps registers $T_B C_3^{\prime \prime} B$ into registers $B_2 B^{\prime \prime \prime} C^{\prime \prime \prime}$, respectively. Also let $\M^{T_A A^\prime C^\prime}$ be a channel performing a projective measurement onto the image of $\hat{V}_1$, and mapping everything outside this image to some fixed state $\phi_\M^{T_A A^\prime C^\prime}$ in it. We can now define our one-shot state redistribution protocol $\Pi$ (also see Figure \ref{fig:prot}).

\begin{figure}
\centering
\includegraphics[width=14cm]{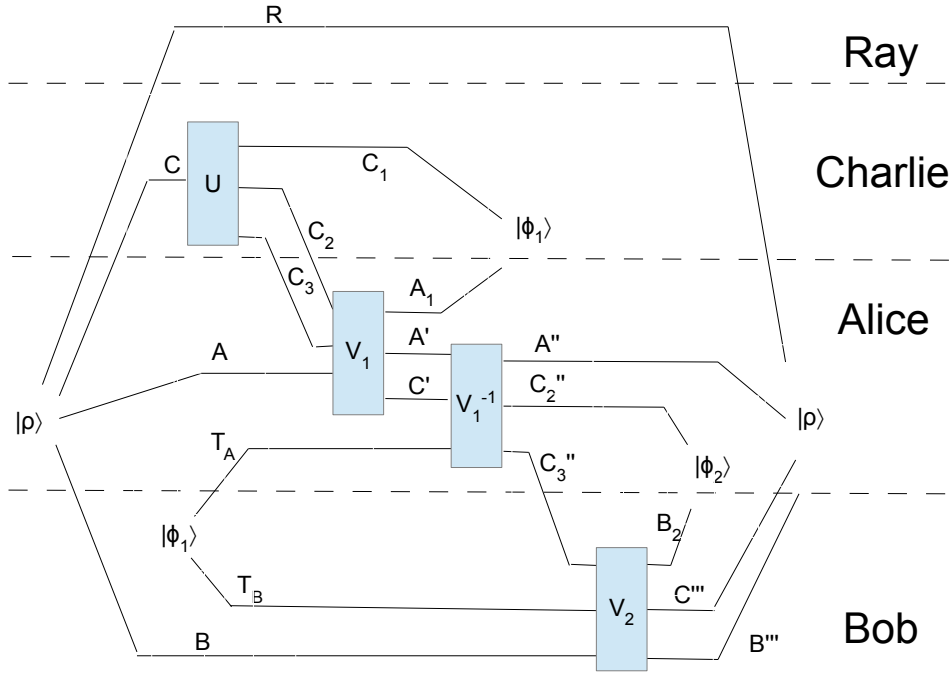}
\caption{One-shot protocol for quantum state redistribution from the ebit repackaging sub-protocol. There are four distinct parties Alice, Bob, Charlie, and Ray each holding their register $ABCR$ of the overall pure state $\rho_{ABCR}$, respectively. The goal is to transmit $C$ to Bob while minimizing the communication from Alice to Bob and keeping the overall correlation with Ray. The decoupling unitary $U$ at Charlie's side generates two hypothetical state merging protocols: one that directly transmits the $C$ register to Bob (with decoding isometry $V_{2}$, while considering the $A$ and $R$ registers as the reference, and one that transmits the $C$ register to Alice while considering the $B$ and $R$ as the reference (with decoding isometry $V_{1}$). The state redistribution protocol that uses Alice as a coherent relay then runs as follows. Charlie first merges his state with Alice's, generating ebits between them. Alice then replaces these ebits by some pre-shared ebits between her and Bob: the ebit repackaging sub-protocol. Finally, Alice transmits the remaining qubits required to complete the direct merging protocol between Charlie and Bob.}
\label{fig:prot}
\end{figure}

\newpage

\begin{framed}
\textbf{Protocol $\Pi$ for input $\rho^{ABCR}$ using ebits $\phi_1^{T_A T_B}$}
\begin{enumerate}
\item Charlie applies U on register $C$, keeps register $C_1$, and transmits the $C_2, C_3$ registers to Alice.
\item Alice applies $V_1$ on $C_2 C_3 A$, 
obtains registers $A_1 A^\prime C^\prime$, 
and then uses $T_A$ instead of $A_1$;
she performs $\M$ on $T_A A^\prime C^\prime$ to apply $\hat{V}_1^\dagger$ and 
obtains registers $ A^{\prime \prime} C_2^{\prime \prime} C_3^{\prime \prime}$.
\item Alice transmits the $C_3^{\prime \prime}$ register to Bob.
\item Bob applies $\hat{V}_2$ on $ T_B C_3^{\prime \prime} B$ and obtains registers $ B_2 B^{\prime \prime \prime} C^{\prime \prime \prime}$.
\end{enumerate}
\hrulefill
\begin{itemize}
\item The $B^{\prime \prime \prime}, C^{\prime \prime \prime}$ output registers held by Bob correspond to the $B, C$ input registers, respectively, while the $A^{\prime \prime}$ output register held by Alice corresponds to the $A$ input register. Together with the untouched reference register $R$, these should be close to $\rho^{ABCR}$.
\item The $A_1 C_1$ registers should be close to the maximally entangled state $\ket{\phi_1}^{A_1 C_1} = I^{T_A T_B \rightarrow A_1 C_1} \ket{\phi_1}^{T_A T_B}$ shared between Alice and Charlie, while the $C_2^{\prime \prime} B_2$ registers should be close to the maximally entangled state $\ket{\phi_2}^{C_2^{\prime \prime} B_2}$ shared between Alice and Bob, with Alice holding the $C_2^{\prime \prime}$ share.
\end{itemize}
\end{framed}

Note that Charlie only communicates with Alice, and the only register effectively transmitted between Alice and Bob is the $C_3^{\prime \prime}$ register, which is of the same size as the $C_3$ register. We then have the following bound on the communication,
\begin{align}
q = \log |C_3|& = \log |C| - \log |C_2| - \log |C_1| \\
& \leq - \frac{1}{2} H_{\min} (C|AR)_{\rho} - \frac{1}{2} H_{\min} (C| BR)_{\rho} + \log \frac{1}{\epsilon_4} + \log \frac{1}{\epsilon_3} + 2 \\
& = \frac{1}{2} H_{\max} (C|B)_{\rho} - \frac{1}{2} H_{\min} (C| BR)_{\rho} + \log \frac{1}{\epsilon_3} + \log \frac{1}{\epsilon_4} + 2\,.
\end{align}
Note that this protocol is based on ebits, i.e., the only pre-shared entanglement it uses are ebits. In the notation of Definition~\ref{def:state_redistribution}, $T_A^{in} = T_A, T_B^{in} = T_B, T_A^{out} = C_2^{\prime \prime}, T_B^{out} = B_2$.
The consumption and generation of ebits can also be easily computed from the above dimensions.
The consumption is $\log |T_A| = \log |C_1| \leq \frac{1}{2} \log |C|  + \frac{1}{2} H_{\min} (C|BR)_{\rho} - \log \frac{1}{\epsilon_3}$ ebits, and the number of ebits generated is $\log |C_2^{\prime \prime}| = \log |C_2| \geq \frac{1}{2} \log |C|  + \frac{1}{2} H_{\min} (C|AR)_{\rho} - \log \frac{1}{\epsilon_4} - 1$. The net entanglement cost $e$ is then bounded by
\begin{align}
e = & \log |C_1| - \log |C_2| \\
\leq & \frac{1}{2} H_{\min} (C|BR)_{\rho} - \log \frac{1}{\epsilon_3} - \frac{1}{2} H_{\min} (C|AR)_{\rho} + \log \frac{1}{\epsilon_4} + 1 \\
= & \frac{1}{2} H_{\max} (C|B)_{\rho} + \frac{1}{2} H_{\min} (C|BR)_{\rho} - \log \frac{1}{\epsilon_3} + \log \frac{1}{\epsilon_4} + 1\,.
\end{align}
It is left to verify that the final state is close enough to $\rho^{ABCR}$. We
prove a stronger result, that the global final state is close to $\rho^{ABCR} \otimes \phi_1^{A_1 C_1} \otimes \phi_2^{C_2^{\prime \prime} B_2}$. This is the criteria normally used in state-redistribution based on ebits. Properties of the purified distance that we use are proved in~\cite{TCR10, Tom12}. We first use the triangle inequality to obtain the following four terms,
\begin{align}
&P\Big(\hat{V}_2 \hat{V}_1^{-1} \M V_1U \big(\rho^{ABCR} \otimes \phi_1^{T_A T_B} \big),I^{ABC \rightarrow A^{\prime \prime} B^{\prime \prime \prime} C^{\prime \prime \prime}} \big(\rho^{ABCR} \otimes \phi_1^{A_1 C_1} \otimes \phi_2^{C_2^{\prime \prime} B_2}\big)\Big)\notag\\
&\leq P\Big(  \hat{V}_2 \hat{V}_1^{-1} \M V_1  U\big(\rho^{ABCR} \otimes \phi_1^{T_A T_B} \big),\hat{V}_2 \hat{V}_1^{-1} \M I^{AC \rightarrow A^{\prime} C^{\prime}}\big(\rho^{ABCR} \otimes \phi_1^{A_1 C_1} \otimes \phi_1^{T_A T_B} \big)\Big)\notag\\
&\quad+P\Big(  \hat{V}_2 \hat{V}_1^{-1} \M I^{AC \rightarrow A^{\prime} C^{\prime}} \big(\rho^{ABCR} \otimes \phi_1^{A_1 C_1} \otimes \phi_1^{T_A T_B} \big),\notag\\
&\quad\quad\quad\hat{V}_2 \hat{V}_1^{-1} I^{AC \rightarrow A^{\prime} C^{\prime}}\big(\rho^{ABCR} \otimes \phi_1^{A_1 C_1} \otimes \phi_1^{T_A T_B} \big)\Big)\notag\\
&\quad+P\Big( \hat{V}_2 \hat{V}_1^{-1} I^{AC \rightarrow A^{\prime} C^{\prime}} \big(\rho^{ABCR} \otimes \phi_1^{A_1 C_1} \otimes \phi_1^{T_A T_B}\big) ,\notag\\
&\quad\quad\quad\hat{V}_2 SWAP_{C_1 \leftrightarrow T_B} I^{A C_2 C_3 T_A \rightarrow A^{\prime \prime} C_2^{\prime \prime} C_3^{\prime \prime} A_1} U \big(\rho^{ABCR} \otimes \phi_1^{T_A T_B}\big)\Big)\notag\\
&\quad+P\Big( \hat{V}_2  SWAP_{C_1 \leftrightarrow T_B} I^{A C_2 C_3 T_A \rightarrow A^{\prime \prime} C_2^{\prime \prime} C_3^{\prime \prime} A_1} U \big(\rho^{ABCR} \otimes \phi_1^{T_A T_B}\big),\notag\\
&\quad\quad\quad I^{ABC \rightarrow A^{\prime \prime} B^{\prime \prime \prime} C^{\prime \prime \prime} } \big(\rho^{ABCR} \otimes \phi_1^{A_1 C_1} \otimes \phi_2^{C_2^{\prime \prime} B_2}\big)\Big)\,.
\end{align}
To bound the first term, we have
\begin{align}
&P\Big( \hat{V}_2 \hat{V}_1^{-1} \M V_1 U \big(\rho^{ABCR} \otimes \phi_1^{T_A T_B} \big),
\hat{V}_2 \hat{V}_1^{-1} \M I^{AC \rightarrow A^{\prime} C^{\prime}}\big(\rho^{ABCR} \otimes \phi_1^{A_1 C_1} \otimes \phi_1^{T_A T_B} \big)\Big)\notag\\
& \leq P\Big( V_1  U \big(\rho^{ABCR} \otimes \phi_1^{T_A T_B} \big),
I^{AC \rightarrow A^{\prime} C^{\prime}} \big(\rho^{ABCR} \otimes \phi_1^{A_1 C_1} \otimes \phi_1^{T_A T_B} \big)\Big) \\
& = P\Big( V_1  U \big(\rho^{ABCR} \big),I^{AC \rightarrow A^{\prime} C^{\prime}} \big(\rho^{ABCR} \otimes \phi_1^{A_1 C_1} \big)\Big) \\
& \leq \sqrt{3 \epsilon_3}\,.
\end{align}
The first inequality is by monotonicity of the purified distance, the first equality is because appending an uncorrelated system does not change the distance, and finally the last inequality is by combining~\eqref{eq:ucbr} and~\eqref{eq:ulluv1}. For the second term, we have
\begin{align}
& P\Big(  \hat{V}_2 \hat{V}_1^{-1} \M I^{AC \rightarrow A^{\prime} C^{\prime}} \big(\rho^{ABCR} \otimes \phi_1^{A_1 C_1} \otimes \phi_1^{T_A T_B} \big),\hat{V}_2 \hat{V}_1^{-1} I^{AC \rightarrow A^{\prime} C^{\prime}}\big(\rho^{ABCR} \otimes \phi_1^{A_1 C_1} \otimes \phi_1^{T_A T_B} \big)\Big)\notag\\
& \leq P\Big(   \M I^{AC \rightarrow A^{\prime} C^{\prime}} \big(\rho^{ABCR} \otimes \phi_1^{A_1 C_1} \otimes \phi_1^{T_A T_B} \big), I^{AC \rightarrow A^{\prime} C^{\prime}}\big(\rho^{ABCR} \otimes \phi_1^{A_1 C_1} \otimes \phi_1^{T_A T_B} \big)\Big)\\
& \leq P\Big(   \M I^{AC \rightarrow A^{\prime} C^{\prime}} \big(\rho^{ABCR}  \otimes \phi_1^{T_A T_B} \big), \M \hat{V}_1 \hat{U} I^{A \rightarrow A^{\prime \prime} }\big(\rho^{ABCR}  \big)\Big)\notag\\
& \quad+ P\Big(   \M \hat{V}_1 \hat{U} I^{A \rightarrow A^{\prime \prime} } \big(\rho^{ABCR}  \big), I^{AC \rightarrow A^{\prime} C^{\prime}}\big(\rho^{ABCR} \otimes \phi_1^{T_A T_B} \big)\Big)\\
& \leq P\Big(   I^{AC \rightarrow A^{\prime} C^{\prime}} \big(\rho^{ABCR}  \otimes \phi_1^{T_A T_B} \big),  \hat{V}_1 \hat{U} I^{A \rightarrow A^{\prime \prime} }\big(\rho^{ABCR} \big)\Big)\notag\\
& \quad+ P\Big(   \hat{V}_1 \hat{U} I^{A \rightarrow A^{\prime \prime} } \big(\rho^{ABCR}  \big), I^{AC \rightarrow A^{\prime} C^{\prime}}\big(\rho^{ABCR} \otimes \phi_1^{T_A T_B} \big)\Big)\\
& \leq 2 \sqrt{3 \epsilon_3}\,.
\end{align}
The first inequality is by monotonicity of the purified distance, the second by the triangle inequality and because appending an uncorrelated system does not change the distance, the third by monotonicity and because $\M \hat{V}_1 = \hat{V}_1$, and finally the last inequality is by combining (\ref{eq:ucbr}) and (\ref{eq:ulluv1}) twice after relabeling systems. For the third term, we have
\begin{align}
&P\Big(\hat{V}_2 \hat{V}_1^{-1}  I^{AC \rightarrow A^{\prime} C^{\prime}} \big(\rho^{ABCR} \otimes \phi_1^{A_1 C_1} \otimes \phi_1^{T_A T_B} \big), \notag\\
&\quad\hat{V}_2 SWAP_{ C_1 \leftrightarrow T_B} I^{A C_2 C_3 T_A \rightarrow A^{\prime \prime } C_2^{\prime \prime} C_3^{\prime \prime} A_1 } U \big(\rho^{ABCR} \otimes \phi_1^{T_A T_B} \big) \Big) \notag\\
&= P\Big( SWAP_{A_1 C_1 \leftrightarrow T_A T_B} I^{AC \rightarrow A^{\prime} C^{\prime}} \big(\rho^{ABCR} \otimes \phi_1^{A_1 C_1} \otimes \phi_1^{T_A T_B} \big),\notag\\
&\quad\quad SWAP_{A_1 C_1 \leftrightarrow T_A T_B} \hat{V}_1 SWAP_{C_1 \leftrightarrow T_B}
 I^{A C_2 C_3 T_A \rightarrow A^{\prime \prime} C_2^{\prime \prime} C_3^{\prime \prime} A_1 } U \big(\rho^{ABCR} \otimes \phi_1^{T_A T_B} \big)\Big) \\
&= P\Big(  I^{AC \rightarrow A^{\prime} C^{\prime}} \big(\rho^{ABCR} \otimes \phi_1^{A_1 C_1} \otimes \phi_1^{T_A T_B} \big),
V_1 U \big(\rho^{ABCR} \otimes \phi_1^{T_A T_B} \big)\Big) \\
& = P\Big(  I^{AC \rightarrow A^{\prime} C^{\prime}} \big(\rho^{ABCR} \otimes \phi_1^{A_1 C_1}  \big),V_1 U \big(\rho^{ABCR} \big)\Big) \\
& \leq \sqrt{3 \epsilon_3}\,.
\end{align}
The first equality is by isometric invariance, the second is because $SWAP_{A_1 C_1 \leftrightarrow T_A T_B}$ leaves the first state invariant and also because
\begin{align}
&SWAP_{A_1 C_1 \leftrightarrow T_A T_B} \hat{V}_1^{ C_2^{\prime \prime} C_3^{\prime \prime} A^{\prime \prime} \rightarrow  T_A A^\prime C^\prime} SWAP_{C_1 \leftrightarrow T_B} I^{A C_2 C_3 T_A  \rightarrow A^{\prime \prime} C_2^{\prime \prime} C_3^{\prime \prime} A_1 }\notag\\
&= I^{T_A T_B C_1} V_1^{ C_2 C_3 A \rightarrow A_1 A^\prime C^\prime}
\end{align}
the next is because appending uncorrelated systems does not change the distance, and finally the last inequality is by combining~\eqref{eq:ucbr} and~\eqref{eq:ulluv1}. For the fourth term, we have
\begin{align}
&P\Big( \hat{V}_2  SWAP_{C_1 \leftrightarrow T_B} I^{A C_2 C_3 T_A \rightarrow A^{\prime \prime} C_2^{\prime \prime} C_3^{\prime \prime} A_1 } U \big(\rho^{ABCR} \otimes \phi_1^{T_A T_B} \big),\notag\\
&\quad I^{ABC \rightarrow A^{\prime \prime} B^{\prime \prime \prime} C^{\prime \prime \prime} } \big(\rho^{ABCR} \otimes \phi_1^{A_1 C_1} \otimes \phi_2^{C_2^{\prime \prime} B_2}\big)\Big)\notag\\
& = P\Big( I^{A^{\prime \prime} C_2^{\prime \prime} \rightarrow A C_2} \hat{V}_2  SWAP_{C_1 \leftrightarrow T_B} I^{A C_2 C_3 T_A \rightarrow A^{\prime \prime} C_2^{\prime \prime} C_3^{\prime \prime} A_1 } U \big(\rho^{ABCR} \otimes \phi_1^{T_A T_B} \big),\notag\\
&\quad\quad I^{A^{\prime \prime} C_2^{\prime \prime} \rightarrow A C_2} I^{ABC \rightarrow A^{\prime \prime} B^{\prime \prime \prime} C^{\prime \prime \prime} } \big(\rho^{ABCR} \otimes \phi_1^{A_1 C_1} \otimes \phi_2^{C_2^{\prime \prime} B_2} \big)\Big) \\
& = P\Big( I^{T_A T_B \rightarrow A_1 C_1} V_2 U \big(\rho^{ABCR} \otimes \phi_1^{T_A T_B} \big),I^{BC \rightarrow B^{\prime \prime \prime} C^{\prime \prime \prime} } \big(\rho^{ABCR} \otimes \phi_1^{A_1 C_1} \otimes \phi_2^{C_2 B_2} \big)\Big) \\
& = P\Big(  V_2 U \big(\rho^{ABCR} \big),I^{BC \rightarrow B^{\prime \prime \prime} C^{\prime \prime \prime} } \big(\rho^{ABCR}  \otimes \phi_2^{C_2 B_2} \big)\Big) \\
& \leq \sqrt{3 \epsilon_4}\,.
\end{align}
The first equality is just a system relabeling, the second is because
\begin{align}
&I^{A^{\prime \prime} C_2^{\prime \prime} \rightarrow A C_2} \hat{V}_2^{ T_B C_3^{\prime \prime} B \rightarrow  B_2 B^{\prime \prime \prime} C^{\prime \prime \prime}}  SWAP_{C_1 \leftrightarrow T_B} I^{A C_2 C_3 T_A \rightarrow A^{\prime \prime} C_2^{\prime \prime} C_3^{\prime \prime} A_1 }\notag\\
&= I^{A C_2} I^{T_A T_B \rightarrow A_1 C_1} V_2^{ C_1 C_3 B \rightarrow  B_2 B^{\prime \prime \prime} C^{\prime \prime \prime}}
\end{align}
the third is because appending an uncorrelated system does not change the distance and finally the last inequality is by combining~\eqref{eq:ucar}) and~\eqref{eq:ulluv2}. Putting these four bounds together, we get the stated bound for $\epsilon_1, \epsilon_2 =0$, and this completes the proof for this case.

We can now prove the smooth entropy version of the theorem by extending the above argument to the states achieving the extremum in the smooth entropies. Let $\omega_1^{ABCR} \in S_\leq (\H_{ABCR})$ be such that $P (\omega_1, \rho) \leq \epsilon_1$ and $H_{\min}^{\epsilon_1} (C|BR)_\rho = H_{\min} (C|BR)_{\omega_1}$. Similarly, let $\omega_2^{ABCR} \in S_\leq (\H_{ABCR})$ be such that $P (\omega_2, \rho) \leq \epsilon_2$ and $H_{\max}^{\epsilon_2} (C|B)_\rho = H_{\max} (C|B)_{\omega_2}$, and consider a purification $\omega_2^{ABCR S_2}$. In Corollary \ref{cor:bidec}, we take $R_1 = BR, R_2 = ARS_2, \rho_1 = \omega_1^{CBR}, \rho_2 = \omega_2^{CARS_2}$,
\begin{align}
\log |C_1| &= \left\lfloor \frac{1}{2} \log |C|  + \frac{1}{2} H_{\min} (C|BR)_{\omega_1} - \log \frac{1}{\epsilon_3} \right\rfloor\\
\log |C_2| &= \left\lfloor \frac{1}{2} \log |C|  + \frac{1}{2} H_{\min} (C|AR S_2)_{\omega_2} - \log \frac{1}{\epsilon_4} \right\rfloor\,,
\end{align}
and then there exists a unitary $U^{C \rightarrow C_1 C_2 C_3}$ satisfying
\begin{align}
\left\| \Tra{C_2 C_3} \big[U \big(\omega_1^{CBR}\big)\big] - \pi^{C_1} \otimes \omega_1^{BR} \right\|_1 &\leq 3 \epsilon_3\\
\left\| \Tra{C_1 C_3} \big[U \big(\omega_2^{CARS_2}\big)\big] - \pi^{C_2} \otimes \omega_2^{ARS_2} \right\|_1 &\leq 3 \epsilon_4\,.
\end{align}
Transforming these in  purified distance bounds, we get
\begin{align}
P \Big( \Tra{C_2 C_3} \big[U \big(\omega_1^{CBR}\big)\big], \pi^{C_1} \otimes \omega_1^{BR} \Big) &\leq \sqrt{3 \epsilon_3}\\
P \Big( \Tra{C_1 C_3}\big[U \big(\omega_2^{CAR}\big)\big], \pi^{C_2} \otimes \omega_2^{AR} \Big) &\leq \sqrt{3 \epsilon_4}\,,
\end{align}
in which we also used monotonicity of the purified distance under partial trace of $S_2$.
Since
\begin{align}
P(\omega_1^{ABCR}, \rho^{ABCR}) \leq \epsilon_1\quad\mathrm{and}\quad P(\omega_2^{ABCR}, \rho^{ABCR}) \leq \epsilon_2\,,
\end{align}
the triangle inequality along with monotonicity of the purified distance and the fact that appending uncorrelated systems does not increase distance imply the bounds
\begin{align}
\label{eq:ucbromega}
P\Big( \Tra{C_2 C_3} \big[U \big(\rho^{CBR}\big)\big], \pi^{C_1} \otimes \rho^{BR} \Big) &\leq \sqrt{3 \epsilon_3} + 2 \epsilon_1\\
\label{eq:ucaromega}
P\Big( \Tra{C_1 C_3} \big[U \big(\rho^{CAR}\big)\big], \pi^{C_2} \otimes \rho^{AR} \Big) &\leq \sqrt{3 \epsilon_4} + 2 \epsilon_2\,.
\end{align}
Considering systems $A^\prime, A^{\prime \prime}, B^{\prime \prime \prime}, C^\prime, C^{\prime \prime \prime}, C_2^{ \prime \prime}, C_3^{\prime \prime}, T_A, T_B$ as above, Uhlmann's theorem tells us that there exist partial isometries
\begin{align}
V_1^{C_2 C_3 A \rightarrow A_1 A^\prime C^\prime}\quad\mathrm{and}\quad V_2^{C_1 C_3 B \rightarrow B_2 B^{\prime \prime \prime} C^{\prime \prime \prime}}
\end{align}
satisfying
\begin{align}
P\Big( V_1 U \big(\rho^{ABCR}\big), \kb{\phi_1}{\phi_1}^{A_1 C_1} \otimes I^{AC \rightarrow A^\prime C^\prime} \big(\rho^{ABCR}\big)\Big)
&= P \Big( \Tra{C_2 C_3}\big[U \big(\rho^{CBR}\big)\big], \pi^{C_1} \otimes \rho^{BR} \Big) \\
P \Big( V_2 U \big(\rho^{ABCR}\big), \kb{\phi_2}{\phi_2}^{B_2 C_2} \otimes I^{BC \rightarrow B^{\prime \prime \prime} C^{\prime \prime \prime}} \big(\rho^{ABCR}\big) \Big)
&= P \Big( \Tra{C_1 C_3}\big[U \big(\rho^{CAR}\big)\big], \pi^{C_2} \otimes \rho^{AR}\Big)\,.
\end{align}
Also consider
\begin{align}
\hat{U}^{ C \rightarrow T_B C_2^{\prime \prime} C_3^{\prime \prime} },\quad
\hat{V}_1^{ C_2^{ \prime \prime} C_3^{\prime \prime} A^{\prime \prime} \rightarrow T_A A^\prime C^\prime}\quad\mathrm{and}\quad\hat{V}_2^{ T_B C_3^{\prime \prime} B \rightarrow  B_2 B^{\prime \prime \prime} C^{\prime \prime \prime}}\,,
\end{align}
the versions of $U, V_1, V_2$ acting on the corresponding registers, as in the $\epsilon_1, \epsilon_2 = 0$ case, as well as the channel $\M^{T_A A^\prime C^\prime}$ performing a projective measurement onto the image of $\hat{V}_1$ as above. We can then take the smooth version of our one-shot state redistribution protocol~$\Pi$ to be formally defined as the non-smooth version above, but using these $U, V_1, \M, \hat{V}_1, \hat{V}_2$ instead. We then have the following bound on the communication,
\begin{align}
q = \log |C_3| &= \log |C| - \log |C_2| - \log |C_1| \\
& \leq - \frac{1}{2} H_{\min} (C|A R S_2)_{\omega_2} - \frac{1}{2} H_{\min} (C| BR)_{\omega_1} + \log \frac{1}{\epsilon_4} + \log \frac{1}{\epsilon_3} +2 \\
& = \frac{1}{2} H_{\max} (C|B)_{\omega_2} - \frac{1}{2} H_{\min} (C| BR)_{\omega_1} + \log \frac{1}{\epsilon_3} + \log \frac{1}{\epsilon_4} + 2 \\
& = \frac{1}{2} H_{\max}^{\epsilon_2} (C|B)_{\rho} - \frac{1}{2} H_{\min}^{\epsilon_1} (C| BR)_{\rho} + \log \frac{1}{\epsilon_3} + \log \frac{1}{\epsilon_4} + 2\,.
\end{align}
Similarly, we have the following bound on the net entanglement cost $e$,
\begin{align}
e = & \log |C_1| - \log |C_2| \\
\leq  & \frac{1}{2} H_{\max}^{\epsilon_2} (C|B)_{\rho} + \frac{1}{2} H_{\min}^{\epsilon_1} (C|BR)_{\rho} - \log \frac{1}{\epsilon_3} + \log \frac{1}{\epsilon_4} + 1\,.
\end{align}
Is left to verify that the final state is close enough to $\rho^{ABCR}$. The analysis is the same as in the $\epsilon_1, \epsilon_2 = 0$ case, with the bounds (\ref{eq:ucbr}) and (\ref{eq:ucar}) replaced by (\ref{eq:ucbromega}) and (\ref{eq:ucaromega}), yielding the desired bound
\begin{align}
&P\Big(\hat{V}_2 \hat{V}^{-1} \M V_1 U \big(\rho^{ABCR} \otimes \phi_1^{T_A T_B}\big),
I^{ABC \rightarrow A^{\prime \prime} B^{\prime \prime \prime} C^{\prime \prime \prime}} \big(\rho^{ABCR} \otimes \phi_1^{A_1 C_1} \otimes \phi_2^{C_2^{\prime \prime} B_2}\big)\Big)\notag\\
&\leq 8 \epsilon_1 + 2 \epsilon_2 + 4 \sqrt{3 \epsilon_3} + \sqrt{3 \epsilon_4}\,.
\end{align}
\end{proof}

Note however that, at least if we allow arbitrary shared entanglement, the above bound for the quantum communication cost can not be tight in general. This can be seen by considering the situation where the $B$ register is trivial, which corresponds to state splitting. Here it is known~\cite{BCR11} that we can succeed with quantum communication $I_{\max}^\epsilon (C; R)_\rho$ using entanglement embezzling states. This can be much smaller than the bound we provide for some states $\rho_ABCR$. We provide an alternate protocol, using entanglement embezzling states rather than standard maximally entangled states, which achieves a communication rate that is upper bounded by the smooth max-information, up to small additive terms, in the case that either the $A$ or the $B$ register is trivial. Hence, this protocol has the optimal communication cost for the special cases of state merging and state splitting.

The idea for the protocol with embezzling states is borrowed from~\cite{BCR11}, and is the following. At the outset of the protocol, before applying the above protocol as a sub-protocol, we first perform a coherent projective measurement in the eigenbasis of the $C$ system, and discard the portion with eigenvalues smaller than $|C|^2$. We then coherently apply the above ebit-based protocol on each branch, with the state in branch $i$ denoted $\rho_i$, using an entanglement embezzling state between Charlie and Alice, and another between Alice and Bob, to provide the necessary ebits, as well as to absorb any ebits created, up to small error. Different amount of ebits are generated and consumed on each branch, hence the need for entanglement embezzling states. We also transmit the register containing the coherent measurement outcomes, to allow to undo these. This procedure then flattens the eigenvalue spectra on the $C$ system, hence the min- and max-entropies $H_{\min}^{\epsilon} (C)_{\rho_i}, H_{\max}^{\epsilon} (C)_{\rho_i}$ are both equal to the rank of $\rho_i^C$, up to a small error. This allows us to replace the max-entropy term by a min entropy term when the $B$ register is trivial, and similarly when $A$ is trivial, and in such a case we can use the lemmas given in~\cite{BCR11} to relate this to smooth max-information, and obtain a provably optimal rate. See~\cite{BCR11} for a formal definition of the $\rho_i$'s. In general, the communication grows as
\begin{align}\label{eq:qrst_idea}
\frac{1}{2} \max_i \Big[H_{\max}^{\epsilon} (C|B)_{\rho_i} -  H_{\min}^{\epsilon} (C|BR)_{\rho_i}\Big]
\end{align}
up to small additive terms. This is however not optimal in general, and it is still unclear whether this can be of any help for obtaining tight bounds for state redistribution (cf.~Section~\ref{sec:conclusion}). An approach that might hold some promise could be to allow for interaction in the state redistribution protocol. For example, in a two-message protocol in which Bob speaks first, this would then allow Bob to also do some preprocessing similar to what Alice does here, and possibly allow for improved flattening in the general case.\footnote{Using the pre-processing from~\cite{BCR11} would only amount to a sub-linear communication cost from Bob to Alice, and thus vanishing back communication cost in the iid asymptotic setting.}

In the asymptotic iid regime, our bound can be used to prove that a communication at the conditional mutual information is sufficient to achieve exponentially small error.

\begin{theorem}[Exponentially Small Error Quantum State Redistribution~\cite{DY08, YD09}]\label{lem:asymptqsr}
For $\rho \in \D_= (A \otimes  B \otimes  C)$  purified by $\rho^{ABCR}$ for some register $R$ and $\mu > 0$, there exists $c > 0, n_0 \in \mathbb{N}$ such that for all $n \geq n_0$, there exists a quantum state redistribution $\Pi_n $ of $(\rho^{ABC})^{\otimes n}$ with error $2^{- c n}$ and quantum communication cost  $q (\Pi_n)$ satisfying
\begin{align}
q (\Pi_n) \leq n\cdot\left[\frac{1}{2}I(C;R|B)+\mu\right]\,.
\end{align}
\end{theorem}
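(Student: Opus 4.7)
The plan is to apply the one-shot achievability bound of Theorem~\ref{th:oneshqsr} directly to the iid state $(\rho^{ABCR})^{\otimes n}$ with smoothing parameters chosen to decay exponentially in $n$, and then to invoke the asymptotic equipartition property of Theorem~\ref{th:fqaep} to convert the resulting smooth entropies into the von Neumann quantities $H(C|B)_{\rho}$ and $H(C|BR)_{\rho}$, whose difference equals $I(C;R|B)_{\rho}$.

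Concretely, I fix a small parameter $c_0 > 0$ to be tuned at the end of the argument and set $\epsilon_1 = \epsilon_2 = \epsilon_3 = \epsilon_4 := 2^{-c_0 n}$. Theorem~\ref{th:oneshqsr} then yields a protocol $\Pi_n$ with total error $8 \epsilon_1 + 2 \epsilon_2 + 4\sqrt{3 \epsilon_3} + \sqrt{3 \epsilon_4} \leq 2^{-cn}$ for any $c < c_0/2$ and $n$ sufficiently large, and with quantum communication cost
\begin{align*}
q(\Pi_n) \leq \tfrac{1}{2} H_{\max}^{\epsilon_1}(C^n | B^n)_{\rho^{\otimes n}} - \tfrac{1}{2} H_{\min}^{\epsilon_2}(C^n | B^n R^n)_{\rho^{\otimes n}} + 2 c_0 n + 2.
\end{align*}
The $H_{\min}^{\epsilon_2}$ term is handled by the lower AEP bound $H_{\min}^{\epsilon_2}(C^n|B^n R^n)_{\rho^{\otimes n}} \geq n H(C|BR)_{\rho} - \sqrt{n}\, \delta(\epsilon_2, v)$; the $H_{\max}^{\epsilon_1}$ term is handled through the duality $H_{\max}^{\epsilon_1}(C|B) = -H_{\min}^{\epsilon_1}(C|AR)$ (with $\rho^{ABCR}$ pure) combined with the pure-state identity $H(C|AR)_{\rho} = -H(C|B)_{\rho}$, yielding an upper bound $n H(C|B)_{\rho} + \sqrt{n}\, \delta(\epsilon_1, v)$.

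The main subtlety lies in coordinating $c_0$ with the desired rate loss $\mu$. Because $\epsilon_i = 2^{-c_0 n}$, the AEP correction $\delta(\epsilon_i, v) = 4 \log v \sqrt{\log(2/\epsilon_i^2)}$ scales like $\sqrt{c_0 n}$, so the usual $O(\sqrt{n})$ AEP slack turns into a contribution of order $\sqrt{c_0}\cdot n$. Dividing by $n$ one obtains
\begin{align*}
\frac{q(\Pi_n)}{n} \leq \tfrac{1}{2} I(C;R|B)_{\rho} + 2 c_0 + O(\sqrt{c_0}) + o(1),
\end{align*}
so it remains to pick $c_0$ small enough that $2 c_0 + O(\sqrt{c_0}) < \mu/2$, then $n_0$ large enough that the $o(1)$ contribution is at most $\mu/2$ and the error bound reads $2^{-cn}$ with $c := c_0/3$. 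The rest is pure bookkeeping; the only real pitfall is that one cannot take $c_0$ too large, as this would simultaneously inflate the rate loss and destroy the error exponent.
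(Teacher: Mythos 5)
Your proposal is correct and follows essentially the same route as the paper: apply the one-shot achievability bound of Theorem~\ref{th:oneshqsr} with smoothing parameters decaying exponentially in $n$, then convert the smooth entropies to von Neumann quantities via the AEP (Theorem~\ref{th:fqaep}), and tune the exponent so that both the error-exponent and the rate-loss constraints are met. The only differences are cosmetic: the paper takes $\epsilon_1=\epsilon_2=\sqrt{\epsilon_3}=\sqrt{\epsilon_4}=\epsilon/20$ with $\epsilon=2^{-cn}$ so that the error is directly bounded by $2^{-cn}$, whereas you take all four parameters equal to $2^{-c_0n}$ and then shrink $c$ to $c_0/3$, which requires a ``for $n$ large enough'' argument to absorb the $\sqrt{\epsilon_3},\sqrt{\epsilon_4}$ terms; and you explicitly spell out the duality step $H_{\max}^{\epsilon}(C|B)=-H_{\min}^{\epsilon}(C|AR)$ needed to apply the $H_{\min}$-form of the AEP to the max-entropy term, which the paper leaves implicit. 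Both variants are sound, and the key structural insight — that $\delta(\epsilon_i,v)/\sqrt{n}$ stays of order $\sqrt{c_0}$ rather than vanishing, so $c_0$ must be chosen small relative to $\mu$ before sending $n\to\infty$ — is correctly identified in both.
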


\begin{proof}
This follows by applying the fully quantum asymptotic equipartition property as stated in Theorem~\ref{th:fqaep} to the bound in Theorem~\ref{th:oneshqsr}. Fix some pure state $\rho^{ABCR}$ and some $\mu > 0$, then we want to achieve error $\epsilon = 2^{-cn}$ for large enough $n$ and some $c$ that depends on both $\rho$ and $\mu$. Take $\epsilon_1 = \epsilon_2 = \sqrt{\epsilon_3} = \sqrt{\epsilon_4} = \frac{\epsilon}{20}$. For the $H_{\max}^{\epsilon_1}$ term, $v$ in the AEP is fixed once $\rho$ is fixed, so that $\delta (\epsilon, v) =  4 \log v \sqrt{2cn(1 + O (1/n))}$ and this can be made smaller than $\frac{\mu}{4}$ by taking $c$ proportional to $(\frac{\mu}{\log v})^2$. Similarly for the $H_{\min}^{\epsilon_2}$ term. Then, the $\log \frac{1}{\epsilon_3}$ and $\log \frac{1}{\epsilon_4}$ terms grow proportionally to $cn$, so that by taking $c$ also  smaller than $\mu$, the conditions of the theorem are satisfied.
\end{proof}

%------------------------------------------------------------

\section{Converse Bounds}\label{sec:converse}

\subsection{Quantum Communication}

Here we provide lower bounds on the amount of quantum communication required for one-shot state redistribution. In the iid asymptotic regime these bounds simplify to the conditional mutual information $I(C; R | B)_\rho$.

\begin{proposition}[Converse One-Shot Quantum Communication]\label{prop:converse}
Let $\epsilon_1 \in (0, 1)$ and $\epsilon_2 \in (0, 1 - \epsilon_1)$ and $\rho \in \D_= (A \otimes  B \otimes  C)$ with purification register $R$. Then, the quantum communication cost $q(\Pi)$ of every quantum state redistribution $\Pi$ of $\rho_{ABC}$ with error $\epsilon_1$ is lower bounded by
\begin{align}
q(\Pi) &\geq \frac{1}{2} I_{\max}^{\epsilon_1 + \epsilon_2} (R; BC)_\rho - \frac{1}{2} I_{\max}^{\epsilon_2} (R; B)_\rho\\
q(\Pi) &\geq \frac{1}{2} H_{\min}^{\epsilon_2 } (R| B)_\rho - \frac{1}{2} H_{\min}^{\epsilon_1 + \epsilon_2} (R| BC)_\rho\label{eq:converse_min}\\
q(\Pi) &\geq \frac{1}{2} H_{\max}^{\epsilon_1 + \epsilon_2 } (R| B)_\rho - \frac{1}{2} H_{\max}^{ \epsilon_2} (R| BC)_\rho\,,
\end{align}
and the same bounds hold for $B$ replaced with $A$.
\end{proposition}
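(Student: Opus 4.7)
The plan is to prove the three $B$-bounds by tracking a suitable entropic quantity of the reference $R$ versus Bob's total share through the protocol, and then obtain the $A$-versions symmetrically while the $H_{\max}$-bounds follow from the $H_{\min}$-bounds with $B\leftrightarrow A$ by the pure-state smooth-entropy duality on $\rho^{ABCR}$, namely $H_{\max}^\epsilon(R|B)_\rho=-H_{\min}^\epsilon(R|AC)_\rho$ and $H_{\max}^\epsilon(R|BC)_\rho=-H_{\min}^\epsilon(R|A)_\rho$.

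For the $I_{\max}$-bound I would pick $\tilde{\rho}^{ABCR}\in\B^{\epsilon_2}(\rho)$ with $I_{\max}(R;B)_{\tilde\rho}=I_{\max}^{\epsilon_2}(R;B)_\rho$, run the protocol on $\tilde\rho\otimes\Phi_1^{T_A^{\mathrm{in}}T_B^{\mathrm{in}}}$, and denote the output $\tilde\sigma$. Monotonicity of the purified distance under the CPTP map $\Pi$, together with the $\epsilon_1$-error guarantee $P(\Pi(\rho\otimes\Phi_1),\rho\otimes\Phi_2)\leq\epsilon_1$ and the triangle inequality, yield $\tilde\sigma\in\B^{\epsilon_1+\epsilon_2}(\rho\otimes\Phi_2)$. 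Because $\Phi_1$ is uncorrelated with $R$, the initial value equals $I_{\max}(R;BT_B^{\mathrm{in}})_{\tilde\rho\otimes\Phi_1}=I_{\max}(R;B)_{\tilde\rho}=I_{\max}^{\epsilon_2}(R;B)_\rho$. During $\Pi$, the quantity $I_{\max}(R;\mathrm{Bob's\ share})$ is (i) unchanged by Alice's local operations, (ii) nonincreasing under Bob's local operations by data processing, and (iii) grows by at most $2$ for every qubit Bob receives, by the dimension chain rule $I_{\max}(R;XQ)\leq I_{\max}(R;X)+2\log|Q|$. Summing over the $q(\Pi)$ transmitted qubits,
\begin{align}
I_{\max}(R;C'BT_B^{\mathrm{out}})_{\tilde\sigma}\leq I_{\max}^{\epsilon_2}(R;B)_\rho+2q(\Pi).
\end{align}
Since $\tilde\sigma\in\B^{\epsilon_1+\epsilon_2}(\rho\otimes\Phi_2)$, a short restriction-to-product-form argument (for any smoothed state, the marginal on the non-$\Phi_2$ systems lies in the right ball, while the converse direction takes $\bar\rho\otimes\Phi_2$) shows $I_{\max}^{\epsilon_1+\epsilon_2}(R;BC)_\rho=I_{\max}^{\epsilon_1+\epsilon_2}(R;C'BT_B^{\mathrm{out}})_{\rho\otimes\Phi_2}\leq I_{\max}(R;C'BT_B^{\mathrm{out}})_{\tilde\sigma}$, and the first stated bound follows.

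For the $H_{\min}$-bound the same three-step template applies to $H_{\min}(R|\mathrm{Bob's\ share})$, which is (i) invariant under Alice's local operations, (ii) nondecreasing under Bob's local operations by data processing, and (iii) decreases by at most $2\log|Q|$ whenever Bob receives a system of dimension $|Q|$, via the standard chain rule $H_{\min}(R|XQ)\geq H_{\min}(R|X)-2\log|Q|$. Smoothing the initial $H_{\min}(R|B)_\rho$ by $\epsilon_2$ and passing through the final smoothing ball of radius $\epsilon_1+\epsilon_2$ around $\rho\otimes\Phi_2$ (this time using that $H_{\min}^\epsilon$ is a supremum) produces the smoothing pattern stated in the proposition. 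The three $A$-versions are obtained by running the analogous argument on Alice's side: Alice's share shrinks by one qubit every time she transmits, so the chain rules apply in the mirrored direction and one ends up comparing $H_{\min}^{\epsilon_2}(R|A)_\rho$ against $H_{\min}^{\epsilon_1+\epsilon_2}(R|AC)_\rho$, and similarly for $I_{\max}$.

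The main obstacle is the bookkeeping of the asymmetric smoothing parameters together with the chain-rule changes: the side where the smoothing is $\epsilon_2$ must be smoothed \emph{before} running the protocol (so that the protocol error only affects the other side, which is then handled by enlarging the ball to $\epsilon_1+\epsilon_2$), and the direction of the dimension chain rule has to match the nature of the smoothing (supremum for $H_{\min}$, infimum for $I_{\max}$ and $H_{\max}$) — getting either of these wrong yields the wrong pairing of smoothing parameters or a worse constant than the stated factor $\tfrac{1}{2}$.
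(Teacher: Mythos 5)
Your $B$-side argument is valid and is essentially the mirror image of the paper's: you smooth the \emph{input} $\rho$ by $\epsilon_2$ and track the non-smooth $I_{\max}$ (resp.\ $H_{\min}$) of Bob's share forward through $\Pi$, closing with the ball inclusion at the output, whereas the paper smooths the \emph{output} $\theta$ by $\epsilon_2$ and runs a backward chain in terms of the smooth quantity. Both work; both need the (unstated but standard) extension step that an optimal $\bar\rho^{RB}$ achieving the smooth entropy can be extended to a state on all of $ABCR$ at the same purified distance.

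The genuine gap is in the $A$-versions. You assert that ``running the analogous argument on Alice's side'' yields them, and your final paragraph stipulates that ``the side where the smoothing is $\epsilon_2$ must be smoothed before running the protocol.'' But in the $A$-version the $\epsilon_2$ lands on $H_{\min}^{\epsilon_2}(R|A)_\rho$, and $A$ is Alice's \emph{post}-protocol share; the pre-protocol side is $AC$, which carries the larger $\epsilon_1+\epsilon_2$. So the pairing of smoothing parameters with the protocol-error side is the \emph{opposite} of what your recipe prescribes. Concretely, smoothing $\rho$ by $\epsilon_2$ to optimize $H_{\min}(R|AC)$ and running forward gives a final state $\tilde\sigma$ only $\epsilon_1+\epsilon_2$-close to $\rho\otimes\Phi_2$, and there is no way to lower-bound the supremum $H_{\min}^{\epsilon_2}(R|A)_\rho$ by $H_{\min}(R|A')_{\tilde\sigma}$ for that specific $\tilde\sigma$. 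The paper avoids this entirely by invoking the \emph{time-reversal symmetry} of state redistribution: the (WLOG isometric) protocol run backwards is a one-message state redistribution of $\rho^{ABCR}$ from Bob to Alice with the same error $\epsilon_1$ and the same quantum communication cost, and the $B$-bounds applied to the reversed protocol \emph{are} the $A$-bounds. A direct $A$-side chain is possible, but it is not the mirror of yours: one runs the protocol on $\rho$ itself (no input smoothing), carries the larger smoothing $\epsilon_1+\epsilon_2$ through the entire chain, e.g.\
\begin{align}
H_{\min}^{\epsilon_2}(R|A)_\rho \;\leq\; H_{\min}^{\epsilon_1+\epsilon_2}(R|A'T_A^{out})_{\theta}\;\leq\; H_{\min}^{\epsilon_1+\epsilon_2}(R|A'T_A^{out}Q)_{\sigma}+2\log|Q| \;=\; H_{\min}^{\epsilon_1+\epsilon_2}(R|AC)_\rho+2q\,,
\end{align}
where the first step uses the ball inclusion $\B^{\epsilon_2}(\rho\otimes\Phi_2)\subseteq\B^{\epsilon_1+\epsilon_2}(\theta)$ at the output. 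You need one of these two ingredients explicitly; without it the $A$-versions, and therefore the $H_{\max}$-bound with $B$ that you derive from $H_{\min}$ with $A$ by duality, are not established.
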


Note that the first bound is optimal in the case of a trivial $B$ register, for state splitting, while the corresponding bound with $A$ replacing $B$ is optimal in the case of a trivial $A$ register, for state merging. Also note that, in contrast to the direct coding bound, the time-reversal symmetry between the $A, B$ systems is not apparent here. Finally, note that these bound hold irrespective of any entanglement assistance.\\

\begin{proof}[Proof of Proposition~\ref{prop:converse}]
Similar to the proof of the optimal bound on state splitting in \cite{BCR11}, we look at the correlations between Bob and Ray. To be able to use~\cite[Lemma B.9]{BCR11}, we look at the max-information that Bob has about Ray at the end of any protocol for quantum state redistribution. A one-message protocol for state redistribution necessarily has the following structure: local operation on Alice's side, followed by communication from Alice to Bob, and then local operations on Bob's side. In more details.

\newpage

\begin{framed}
\textbf{General protocol $\Pi$ for input $\rho^{ABCR}$ using entanglement $\phi^{T_A^{in} T_B^{in}} $}
\begin{enumerate}
\item Alice holds the $A,C, T_A^{in}$ systems at the outset, and Bob the $B, T_B^{in}$ systems.
\item Alice applies a local operation on the $A C T_A^{in}$ registers.  Her registers are then $T_A^{out} A^{\prime} Q$. The joint state is $\sigma^{T_A^{out} A^{\prime} Q B T_B^{in} R}$.
\item Alice transmits the $Q$ register to Bob.
\item Bob applies a local operation on $Q B T_B^{in}$. His registers are then $T_B^{out} B^{\prime} C^{\prime}$. The joint state is $\theta^{T_A^{out} T_B^{out} A^{\prime}  B^{\prime} C^{\prime} R}$.
\end{enumerate}
\hrulefill
\begin{itemize}
\item The requirement is that the $A^{\prime}  B^{\prime} C^{\prime} R$ part is $\epsilon_1$ close to $\rho^{A^\prime B^\prime C^\prime R} = I^{ABC \rightarrow A^{\prime} B^{\prime} C^{\prime}} (\rho^{ABCR})$ in purified distance.
\end{itemize}
\end{framed}

For the bound in terms of max-information, consider a state $\hat{\theta}^{A^\prime B^\prime C^\prime R} \in D_\leq (A^\prime \otimes B^\prime \otimes C^\prime \otimes R)$ such that $P(\theta^{A^\prime B^\prime C^\prime R}, \hat{\theta}^{A^\prime B^\prime C^\prime R}) \leq \epsilon_2$ and $I_{\max}^{\epsilon_2} (R; B^{\prime} C^{\prime} )_{\theta} =  I_{\max} (R; B^{\prime} C^{\prime} )_{\hat{\theta}}$. Such a state must exist by the definition of smoothing and the properties of the purified distance. Then $P (\rho^{A^\prime B^\prime C^\prime R}, \hat{\theta}^{A^\prime B^\prime C^\prime R}) \leq \epsilon_1 + \epsilon_2$ by the triangle inequality since $\theta^{A^\prime B^\prime C^\prime R}$ must be $\epsilon_1$ close to $\rho^{A^\prime B^\prime C^\prime R}$.
We get the following chain of inequalities
\begin{align}
I_{\max}^{\epsilon_1 + \epsilon_2} (R; BC)_\rho & \leq I_{\max} (R; B^{\prime} C^{\prime} )_{\hat{\theta}} \\
& = I_{\max}^{\epsilon_2} (R; B^{\prime} C^{\prime} )_{\theta} \\
& \leq I_{\max}^{\epsilon_2} (R; Q B T_B^{in} )_{\sigma} \\
& \leq I_{\max}^{\epsilon_2} (R; B T_B^{in} )_{\sigma} + 2 \log |Q| \\
& = I_{\max}^{\epsilon_2} (R; B T_B^{in} )_{\rho \otimes \phi} + 2 \log |Q| \\
& = I_{\max}^{\epsilon_2} (R; B )_{\rho} + 2 \log |Q|\,,
\end{align}
in which the first inequality follows by definition of smooth max-information and monotonicity of purified distance, since $\theta^{A^\prime B^\prime C^\prime R}$ is within distance $\epsilon$ of $\rho^{A^\prime B^\prime C^\prime R}$, the first equality is by the choice of $\hat{\theta}$, the second inequality is because the max-information is monotone under local operations, the third inequality follows by Lemma B.9 of \cite{BCR11}, the second equality is because local operations of Alice do not change the max-information of Bob about the reference, and the last is because $\phi^{T_A^{in} T_B^{in}}$ is uncorrelated to $\rho^{ABCR}$.

For the bound in terms of conditional min-entropy, we similarly get, by taking an appropriate $\hat{\theta}$ and using an unlockability property of the conditional min-entropy (Lemma~\ref{lem:unlock}, Appendix),
\begin{align}
H_{\min}^{\epsilon_1 + \epsilon_2} (R| BC)_\rho\geq H_{\min} (R | B^{\prime} C^{\prime} )_{\hat{\theta}} &= H_{\min}^{\epsilon_2} (R; B^{\prime} C^{\prime} )_{\theta} \\
& \geq H_{\min}^{\epsilon_2} (R | Q B T_B^{in} )_{\sigma} \\
& \geq H_{\min}^{\epsilon_2} (R | B T_B^{in} )_{\sigma} - 2 \log |Q| \\
& = H_{\min}^{\epsilon_2} (R | B T_B^{in} )_{\rho \otimes \phi} - 2 \log |Q| \\
& = H_{\min}^{\epsilon_2} (R | B )_{\rho} - 2 \log |Q|\,.
\end{align}
For the bound in term of the conditional max-entropy, we obtain the bound with the $A$ system instead of $B$ by using the duality relation of conditional min- and max-entropy. We then get the remaining bounds by interchanging the $A$ and $B$ systems in those already proved, and by using the symmetry of state redistribution under time reversal.
\end{proof}

In the asymptotic iid regime, the above bounds together with the fully quantum asymptotic equipartition property (Theorem~\ref{th:fqaep}) imply a strong converse for the quantum communication rate of quantum state redistribution.

\begin{theorem}[Strong Converse Quantum Communication]
For all $\rho \in \D_= (A \otimes  B \otimes  C)$  purified by $\rho^{ABCR}$ for some register $R$ and $\mu > 0$, there exists $c > 0, n_0 \in \mathbb{N}$ such that for all $n \geq n_0$ and all 
quantum state redistribution $\Pi_n $ of $(\rho^{ABC})^{\otimes n}$ with error $1 - 2^{- c n}$, the quantum communication cost  $q (\Pi_n)$ of $\Pi_n$ satisfies
\begin{align}
q (\Pi_n) \geq n\cdot\left[\frac{1}{2}I(C;R|B)-\mu\right]\,.
\end{align}
\end{theorem}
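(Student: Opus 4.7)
The plan is to combine the one-shot converse of Proposition~\ref{prop:converse} with the fully quantum AEP of Theorem~\ref{th:fqaep}. Applying~\eqref{eq:converse_min} to the iid input $(\rho^{ABC})^{\otimes n}$ with $\epsilon_1 = 1 - 2^{-cn}$ and an auxiliary $\epsilon_2 \in (0, 1-\epsilon_1)$ to be tuned yields
\begin{align}
q(\Pi_n) \geq \tfrac{1}{2} H_{\min}^{\epsilon_2}(R^n | B^n)_{\rho^{\otimes n}} - \tfrac{1}{2} H_{\min}^{\epsilon_1+\epsilon_2}(R^n | B^n C^n)_{\rho^{\otimes n}}\,.
\end{align}
The AEP lower bound applied to the first term and the AEP upper bound applied to the second, combined with the identity $H(R|B)_\rho - H(R|BC)_\rho = I(R;C|B)_\rho = I(C;R|B)_\rho$ (symmetry of the conditional mutual information), will produce a bound of the form $q(\Pi_n)\geq \tfrac{n}{2} I(C;R|B) - o(n)$, provided the AEP correction terms can be kept $O(n\mu)$.

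The main obstacle is the AEP upper bound on the second term, since its smoothing parameter $\epsilon_1+\epsilon_2$ is exponentially close to $1$ and the combinatorial penalty $h(\epsilon_1+\epsilon_2,\epsilon')$ in Theorem~\ref{th:fqaep} diverges as $\epsilon_1+\epsilon_2 \to 1$. To control it, I would choose $\epsilon_2 = 2^{-cn}/2$ (so that $\epsilon_1+\epsilon_2 = 1 - 2^{-cn}/2$ sits strictly inside $(0,1)$) and $\epsilon' = 2^{-cn}/4 \leq 1-(\epsilon_1+\epsilon_2)$. Using $\sqrt{1-(\epsilon_1+\epsilon_2)^2}\sim 2^{-cn/2}$ and $\sqrt{1-\epsilon'^2}\approx 1$, a short expansion of the argument of $h$ shows that $1-\big((\epsilon_1+\epsilon_2)\sqrt{1-\epsilon'^2}+\epsilon'\sqrt{1-(\epsilon_1+\epsilon_2)^2}\big)^2$ is of order $2^{-cn}$, so $h(\epsilon_1+\epsilon_2,\epsilon') = O(cn)$ and contributes only $O(c)$ to the rate per copy.

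With these choices $\log(1/\epsilon_2),\log(1/\epsilon') = O(cn)$, so $\delta(\epsilon_2,v)/\sqrt{n}$ and $\delta(\epsilon',v)/\sqrt{n}$ are both of order $\log v\cdot\sqrt{c}$, and $h(\epsilon_1+\epsilon_2,\epsilon')/n = O(c)$. The total rate penalty is thus $O(\log v\cdot\sqrt{c} + c)$, which can be made at most $\mu$ by choosing $c$ sufficiently small (depending on $\mu$ and on $\rho$ through $v$); for $n\geq n_0$ large enough that Theorem~\ref{th:fqaep} applies at the chosen parameters, this yields $q(\Pi_n) \geq n\cdot[\tfrac{1}{2} I(C;R|B) - \mu]$. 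The same strategy works starting from any of the three converse bounds in Proposition~\ref{prop:converse}; the min-entropy form~\eqref{eq:converse_min} is the cleanest because Theorem~\ref{th:fqaep} is already stated for smooth min-entropies.
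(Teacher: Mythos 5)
Your proof is correct and follows essentially the same route as the paper: apply the min-entropy converse~\eqref{eq:converse_min} to $(\rho^{ABC})^{\otimes n}$, invoke the AEP lower bound on the $R|B$ term and the AEP upper bound on the $R|BC$ term, use $H(R|B)-H(R|BC)=I(C;R|B)$, and pick $\epsilon_2$ and $\epsilon'$ exponentially small in $n$ so that the $\delta/\sqrt{n}$ and $h/n$ penalties are $O(\log v\cdot\sqrt{c}+c)$ and thus below $\mu$ for $c$ small enough. The paper's proof is terser (it just writes $\epsilon_2=2^{-dn}$, $\epsilon'=1-\epsilon_1-\epsilon_2$, and asserts the penalties can be made $\leq\mu$), whereas you explicitly expand the $h(\epsilon_1+\epsilon_2,\epsilon')$ term to verify it is $O(cn)$ rather than diverging — a useful check, but not a different argument.
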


\begin{proof}
Fix some pure state $\rho^{ABCR}$ and $\mu > 0$. By applying the fully quantum asymptotic equipartition property as stated in Theorem~\ref{th:fqaep} to the converse bound~\eqref{eq:converse_min} we get for any quantum state redistribution $\Pi_n $ of $(\rho^{ABC})^{\otimes n}$ with error $\epsilon_1>0$ and $n$ large enough,
\begin{align}
\frac{1}{n}q (\Pi_n)\geq\frac{1}{2}I(C ; R | B)-\frac{\delta(\epsilon_2,v)+\delta(1-\epsilon_1-\epsilon_2,v')}{\sqrt{n}}-\frac{h(\epsilon_1+\epsilon_2,1-\epsilon_1-\epsilon_2)}{n}\,,
\end{align}
where $\delta(\epsilon_2,v)$ comes from the $R|B$ term, $\delta(1-\epsilon_1-\epsilon_2,v')$ from the $R|BC$ term, and $\epsilon_2>0$ sufficiently small. By choosing $\epsilon_2=2^{-dn}$ and $\epsilon_1=1 - 2^{-cn}$ for $c,d>0$ (depending on $\rho$ and $\mu$) such that
\begin{align}
\mu\geq\frac{\delta(\epsilon_2,v)+\delta(1-\epsilon_1-\epsilon_2,v')}{\sqrt{n}}+\frac{h(\epsilon_1+\epsilon_2,1-\epsilon_1-\epsilon_2)}{n}\,
\end{align}
the claim follows.
\end{proof}

%------------------------------------------------------------

\subsection{Interactive Communication}\label{sec:int_comm}

Even though our achievability bounds only require a single message from Alice to Bob, we show in the next section that a strong converse also hold when allowing for feedback communication from Bob to Alice. We note that a reader not interested in the feedback case can safely ignore this subsection.

In order to show this result, we use the following notation for interactive communication. In the interactive model, a $M$-message protocol $\Pi$ for a given task from input registers $A_{in}, B_{in}$ to output registers $A_{out}, B_{out}$ is defined by a sequence of isometries $U_1, \cdots, U_{M + 1}$ along with a pure state $\psi \in \D (T_A T_B)$ shared between Alice and Bob, for arbitrary finite dimensional registers $T_A, T_B$: the pre-shared entanglement. We need $M+1$ isometries in order to have $M$ messages since a first isometry is applied before the first message is sent and a last one after the final message is received. In the case of even $M$, for appropriate finite dimensional quantum memory registers $A_1, A_3, \cdots A_{M - 1}, A^\prime$ held by Alice, $B_2, B_4, \cdots B_{M - 2}, B^\prime$ held by Bob, and quantum communication registers $C_1, C_2, C_3, \cdots C_M$ exchanged by Alice and Bob, we have
\begin{align}
U_1 \in \U(A_{in} T_A, A_1 C_1),\;&U_2 \in \U(B_{in} T_B C_1, B_2 C_2),\;U_3 \in \U(A_1 C_2, A_3 C_3),\;U_4 \in \U(B_2 C_3, B_4 C_4),\notag\\
\;\cdots,\;&U_{M}\in \U(B_{M - 2} C_{M - 1},B_{out} B^\prime C_{M}),\;U_{M + 1} \in \U(A_{M - 1} C_M, A_{out} A^\prime)\,,
\end{align}
see the long version of~\cite[Figure 1]{Tou14a}. We adopt the convention that, at the outset, $A_0 = A_{in} T_A, B_0 = B_{in} T_B$, 
for odd $i$ with $1 \leq i < M$ $B_i = B_{i-1}$, for even $i$ with $1 < i \leq M$ $A_i = A_{i-1}$ and also $B_M = B_{M + 1} = B_{out} B^\prime$, and $A_{M+1} = A_{out} A^\prime$. In this way, after application of $U_i$, Alice holds register $A_i$, Bob holds register $B_i$ and the communication register is $C_i$. In the case of an odd number of message $M$, the registers corresponding to $U_M, U_{M+1}$ are changed accordingly. We slightly abuse notation and also write $\Pi$ to denote the channel in $\C (A_{in} B_{in}, A_{out} B_{out})$ implemented by the protocol, i.e., for any $\rho \in \D (A_{in} B_{in})$,
\begin{align}
\Pi (\rho):=\Tr{A^\prime B^\prime }{U_{M + 1} U_M \cdots U_2 U_1 (\rho \otimes \psi)}\,.
\end{align}
Note that the $ A^\prime $ and $ B^\prime $ registers are the final memory registers that are being discarded at the end of the protocol by Alice and Bob, respectively. We define the quantum communication cost of $\Pi$ from Alice to Bob as 
\begin{align}
QCC_{A \rightarrow B} (\Pi):=\sum_i \log C_{2i +1}\,,
\end{align}
and the quantum communication cost of $\Pi$ from Bob to Alice as
\begin{align}
QCC_{B \rightarrow A} (\Pi):=\sum_i \log C_{2i}\,.
\end{align}
The total communication cost of the protocol is then the sum of these two quantities. We have the following definition for quantum state redistribution in this interactive setting.

\begin{definition}[Feedback Quantum State Redistribution]\label{def:feed_state_redistribution}
Let $\rho^{ABC}\in\D_{\leq} (A\otimes B\otimes C)$, and let $T_{A}^{in} T_{B}^{in}$, $T_A^{out} T_B^{out}$ be additional systems. An $M$-message protocol $\Pi:ACT_{A}^{in}\otimes BT_{B}^{in}\to AT_{A}^{out}\otimes C'BT_{B}^{out}$ in the interactive model is called an $M$-message quantum state redistribution of $\rho_{ABC}$ with error $\epsilon\geq0$ if
\begin{align}
P\Big(\big(\Pi^{ACT_{A}^{in}BT_{B}^{in}\to AT_{A}^{out}C'BT_{B}^{out}}\big)\big(\Phi_{1}^{T_{A}^{in}T_{B}^{in}}\otimes\rho^{ABCR}\big),\Phi_{2}^{T_{A}^{out}T_{B}^{out}}\otimes\rho^{ABC'R}\Big)\leq\epsilon\,,
\end{align}
where $\rho^{ABC'R}=(I^{C\to C'}\otimes I^{ABR})\rho^{ABCR}$ for a purification $\rho^{ABCR}$ of $\rho^{ABC}$, and $\Phi_{1}$, $\Phi_{2}$ are arbitrary states on $T_{A}^{in}T_{B}^{in}$ and $T_A^{out} T_B^{out}$, respectively.
\end{definition}

%------------------------------------------------------------

\subsection{Free Back-Communication}\label{sec:back-comm}

Note that  asymptotic quantum state redistribution composes perfectly. That is, given any decomposition $C = D_1 D_2 \cdots D_d$, the total asymptotic cost for transmitting $C$ in a single message versus transmitting it in $d$ successive messages from Alice to Bob is the same
\begin{align}
I (C ; R | B) = I (D_1 ; R | B) + I (D_2 ; R | B D_1) + \cdots +  I (D_d ; R | B D_1 \cdots D_{d-1})\,.
\end{align}
This follows from the chain rule for conditional quantum mutual information. This considers the setting when only Alice sends messages to Bob and we consider such a decomposition of the $C$ system; by allowing feedback (back-communication) and an arbitrary protocol to transmit $C$, we could hope to improve on this. This is not possible: we show that even if there is free back-communication from Bob to Alice between Alice's multiple messages, this cannot decrease the total asymptotic cost of communication from Alice to Bob. Quantum state redistribution with feedback is defined by allowing arbitrary protocols in the  interactive model of communication, as defined in Section~\ref{sec:int_comm}. We then account only for the quantum communication cost from Alice to Bob, denoted $QCC_{A \rightarrow B}$.

\begin{proposition}[Converse One-Shot Quantum Communication with Feedback]\label{prop:multiconverse}
Let $\epsilon_1 \in (0, 1)$ and $\epsilon_2 \in (0, 1 - \epsilon_1)$ and $\rho \in \D_= (A \otimes  B \otimes  C)$ with purification register $R$. Then, for every $M$-message quantum state redistribution $\Pi$ of $\rho$ with error $\epsilon_1$, the quantum communication cost $QCC_{A \rightarrow B} (\Pi)$ from Alice to Bob of $\Pi$ is lower bounded by
\begin{align}
QCC_{A \rightarrow B} (\Pi) \geq \frac{1}{2} I_{\max}^{\epsilon_1 + \epsilon_2} (R; BC)_\rho - \frac{1}{2} I_{\max}^{\epsilon_2} (R; B)_\rho,\\
QCC_{A \rightarrow B} (\Pi) \geq \frac{1}{2} H_{\min}^{\epsilon_2 } (R| B)_\rho - \frac{1}{2} H_{\min}^{\epsilon_1 + \epsilon_2} (R| BC)_\rho,\\
QCC_{A \rightarrow B} (\Pi) \geq \frac{1}{2} H_{\max}^{\epsilon_1 + \epsilon_2 } (R| B)_\rho - \frac{1}{2} H_{\max}^{ \epsilon_2} (R| BC)_\rho\,,
\end{align}
and the same bounds hold for $B$ replaced with $A$.
\end{proposition}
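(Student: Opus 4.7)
The plan is to imitate the proof of the one-message converse, tracking the evolution of $I_{\max}(R; \text{Bob's current register})$ (and analogously the conditional entropic quantities) through every step of the $M$-message protocol. The key observation is an accounting principle: Alice's local isometries leave Bob's register untouched, so they cannot change $I_{\max}(R; B_i)$; Bob's local isometries cannot change it either by invariance; Bob discarding qubits to send them to Alice can only decrease it by monotonicity under partial trace; and the only operations that can increase it are the messages from Alice to Bob, each of size $\log|C_{2i+1}|$ qubits, which by \cite[Lemma B.9]{BCR11} can increase $I_{\max}$ by at most $2\log|C_{2i+1}|$. Summing over all of Alice's messages gives
\begin{align}
I_{\max}(R; \tilde{B}_{\text{final}})_{\theta} \leq I_{\max}(R; BT_B^{in})_{\rho\otimes\phi} + 2\,QCC_{A\to B}(\Pi) = I_{\max}(R;B)_\rho + 2\,QCC_{A\to B}(\Pi),
\end{align}
where $\tilde{B}_{\text{final}} = B_{out} B'$ denotes the totality of what Bob holds at the end (including the discarded memory $B'$), and the last equality is because $\phi^{T_A^{in}T_B^{in}}$ is product with $\rho^{ABCR}$.

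Next I would smooth. Let $\theta^{A'B'C'R}$ denote the actual final state on the claimed output registers and pick $\hat\theta^{A'B'C'R}$ within purified distance $\epsilon_2$ of $\theta$ that realizes $I_{\max}^{\epsilon_2}(R;B'C')_\theta$. By the error condition $P(\theta^{A'B'C'R},\rho^{A'B'C'R})\leq\epsilon_1$ together with the triangle inequality, $\hat\theta^{A'B'C'R}$ lies within $\epsilon_1+\epsilon_2$ of $\rho^{A'B'C'R}$, so it is a valid candidate in the ball defining $I_{\max}^{\epsilon_1+\epsilon_2}(R;BC)_\rho$. Since $B'C'\subseteq\tilde{B}_{\text{final}}$, monotonicity of $I_{\max}$ under the partial trace from $\tilde{B}_{\text{final}}$ down to $B'C'$ gives
\begin{align}
I_{\max}^{\epsilon_1+\epsilon_2}(R;BC)_\rho \leq I_{\max}(R;B'C')_{\hat\theta} = I_{\max}^{\epsilon_2}(R;B'C')_\theta \leq I_{\max}^{\epsilon_2}(R;\tilde{B}_{\text{final}})_\theta \leq I_{\max}^{\epsilon_2}(R;B)_\rho + 2\,QCC_{A\to B}(\Pi),
\end{align}
which is exactly the first claimed bound after rearranging. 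The bound with $A$ in place of $B$ follows by time-reversal symmetry of state redistribution, as in the single-message case.

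The other two bounds are obtained by the same invariance/monotonicity accounting but replacing the input-output inequality for max-information with the corresponding one for conditional min-entropy: Alice's and Bob's local isometries leave $H_{\min}(R|\tilde B)$ invariant, tracing out part of Bob can only increase the conditional min-entropy (partial trace on the conditioning system), and a message $C_{2i+1}$ from Alice to Bob changes it by at most $2\log|C_{2i+1}|$ via the unlockability Lemma~\ref{lem:unlock} cited in the single-message proof. Smoothing is handled exactly as for $I_{\max}$, producing the conditional min-entropy inequality; the conditional max-entropy inequality then follows by applying duality $H_{\max}^{\epsilon}(R|B)=-H_{\min}^{\epsilon}(R|A')$ on a purifying system, which swaps the role of the $A$ and $B$ sides.

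The only subtle point, and what I would treat most carefully, is the bookkeeping of Bob's register through the interaction: at step $i$ Bob holds $B_i$, and I would write a short induction showing that $I_{\max}(R;B_i)_{\text{state after }U_i} \leq I_{\max}(R;B)_\rho + 2\sum_{j\leq i}\log|C_{\text{odd},j}|$, being explicit that Bob-to-Alice messages correspond to a partial trace from Bob's side (hence can only decrease $I_{\max}$), while Alice-to-Bob messages pass through Lemma B.9 on the enlarged Bob register. Once this induction is set up, the rest is a direct reuse of the one-message proof.
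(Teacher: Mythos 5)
Your proposal is correct and follows essentially the same route as the paper: both track the smooth max-information (and analogous conditional-entropy quantities) between $R$ and Bob's current register through the protocol, using \cite[Lemma B.9]{BCR11} (resp.\ the unlockability lemma) to account for Alice's messages, monotonicity of $I_{\max}$ under partial trace for Bob's outgoing messages, invariance under local isometries, and the same $\epsilon_1 + \epsilon_2$ smoothing at the end, with the $A$ and max-entropy variants obtained by duality and time-reversal exactly as in the single-message case. The only difference is presentational — you state the accumulation as a forward induction on the round index while the paper unrolls it as a single backward chain of inequalities from $\rho_{M+1}$ to $\rho_0$ — but the substance is identical.
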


Note that there is no dependence on the number $M$ of messages in these lower bounds. Hence, the strong converse for quantum state redistribution proved from the corresponding bounds in the preceding section also hold if we allow for feedback.

\begin{theorem}[Strong Converse Quantum Communication with Feedback]
For all $\rho \in \D_= (A \otimes  B \otimes  C)$  purified by $\rho^{ABCR}$ for some register $R$ and $\mu > 0$, there exists $c > 0, n_0 \in \mathbb{N}$ such that for all $n \geq n_0$ and all $M$, every $M$-message
quantum state redistribution $\Pi_n^M $ of $(\rho^{ABC})^{\otimes n}$ with error $1 - 2^{- c n}$, the quantum communication cost from Alice to Bob $QCC_{A \rightarrow B} (\Pi_n)$ of $\Pi_n$ satisfies
\begin{align}
QCC_{A \rightarrow B} (\Pi_n) \geq n\cdot\left[\frac{1}{2}I(C;R|B)-\mu\right]\,.
\end{align}
\end{theorem}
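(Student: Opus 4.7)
The plan is to mimic exactly the strong-converse proof of the preceding (single-message) theorem in Section~\ref{sec:converse}, substituting the feedback-robust one-shot bound (Proposition~\ref{prop:multiconverse}) for Proposition~\ref{prop:converse}. Concretely, for fixed $\rho^{ABCR}$ and $\mu > 0$, I would take an arbitrary $M$-message redistribution protocol $\Pi_n^M$ of $(\rho^{ABC})^{\otimes n}$ with error $\epsilon_1 = 1 - 2^{-cn}$ and apply the min-entropy form of Proposition~\ref{prop:multiconverse} to the iid state, obtaining, for any $\epsilon_2 \in (0, 1 - \epsilon_1)$,
\begin{align*}
QCC_{A \rightarrow B}(\Pi_n^M) \geq \tfrac{1}{2} H_{\min}^{\epsilon_2}(R^n \mid B^n)_{\rho^{\otimes n}} - \tfrac{1}{2} H_{\min}^{\epsilon_1 + \epsilon_2}(R^n \mid B^n C^n)_{\rho^{\otimes n}}.
\end{align*}
The crucial point is that this lower bound carries no dependence on $M$; this is precisely what Proposition~\ref{prop:multiconverse} buys us over its single-message counterpart.

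Next, I would apply the fully quantum AEP (Theorem~\ref{th:fqaep}) twice: its lower-bound form to $H_{\min}^{\epsilon_2}(R^n\mid B^n)$, and its upper-bound form to $H_{\min}^{\epsilon_1+\epsilon_2}(R^n\mid B^nC^n)$ with smoothing auxiliary parameter $1-\epsilon_1-\epsilon_2$. Using $H(R\mid B) - H(R\mid BC) = I(R;C\mid B) = I(C;R\mid B)$, this yields
\begin{align*}
\tfrac{1}{n}\, QCC_{A \rightarrow B}(\Pi_n^M) \geq \tfrac{1}{2} I(C;R\mid B) - \tfrac{\delta(\epsilon_2,v) + \delta(1-\epsilon_1-\epsilon_2,v')}{2\sqrt{n}} - \tfrac{h(\epsilon_1+\epsilon_2,\, 1-\epsilon_1-\epsilon_2)}{2n}.
\end{align*}

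Finally, I would choose $\epsilon_2 = 2^{-dn}$ and pick $c, d > 0$ sufficiently small (depending only on $\rho$ and $\mu$) so that the two correction terms on the right-hand side sum to at most $\mu$ for all $n \geq n_0$. This is identical to the parameter selection already carried out in the preceding strong-converse proof: $\delta(\epsilon_2,v)/\sqrt{n}$ and $\delta(1-\epsilon_1-\epsilon_2,v')/\sqrt{n}$ are each controlled by making $d$ and $c$ small relative to $\mu/\log v$ and $\mu/\log v'$, while $h(\epsilon_1+\epsilon_2,1-\epsilon_1-\epsilon_2)/n$ grows essentially like $c$ and is absorbed by shrinking $c$.

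Since the interactive structure of $\Pi_n^M$ has already been absorbed into Proposition~\ref{prop:multiconverse}, there is no further obstacle: the argument is literally the same as the single-message case once the feedback-robust one-shot bound is in hand. The substantive work lives entirely inside Proposition~\ref{prop:multiconverse}, which the present theorem uses as a black box.
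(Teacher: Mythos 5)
Your proposal is correct and follows exactly the paper's route: the paper itself explicitly notes that since Proposition~\ref{prop:multiconverse} gives the same $M$-independent lower bounds as the single-message case, the strong converse argument from the preceding subsection carries over verbatim, which is precisely what you do. (Your version of the AEP inequality even retains the factor $\tfrac{1}{2}$ on the correction terms, which is the slightly sharper form; the paper's displayed inequality drops it, but both are valid.)
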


\begin{proof}[Proof of Proposition~\ref{prop:multiconverse}]
The proof is  similar to the one in the single round case. Hence, we only write down the details for the bound in terms of max-information. We consider a $M$-message protocol with local isometric processing and arbitrary pre-shared entanglement $\phi^{T_A T_B}$, for the case of even $M$ (the case of odd $M$ follows similarly). The $A_i$ registers are Alice's quantum memory registers for round $i$, for odd $i$ and after application of $U_i$, similarly for $B_i$ on Bob's side and even $i$, and the $C_i$ registers are the communication registers exchanged by Alice and Bob in round $i$, from Alice to Bob for odd $i$, and from Bob to Alice for even $i$. We use the following notation,
\begin{align}
\rho_0 := \rho^{ABCR} \otimes \phi^{T_A T_B},\;\rho_1 := U_1 (\rho_0),\;\rho_2 := U_2 (\rho_1),\;\cdots,\;\rho_{M+1} := U_{M+1} (\rho_M)\,.
\end{align}
It must hold that $P (\rho_{M+1}^{ABCR}, \rho^{ABCR}) \leq \epsilon_1$. Consider a state $\hat{\theta}^{A B C R} \in D_\leq (A  B  C  R)$ such that $P(\rho_{M+1}^{A B C R}, \hat{\theta}^{A B C R}) \leq \epsilon_2$ and $I_{\max}^{\epsilon_2} (R; B C )_{\rho_{M+1}} =  I_{\max} (R; B C )_{\hat{\theta}}$. Such a state must exist by the definition of smoothing and the properties of the purified distance. Then $P (\rho^{A B C R}, \hat{\theta}^{A B C R}) \leq \epsilon_1 + \epsilon_2$ by the triangle inequality. We get the following chain of inequalities,
\begin{align}
I_{\max}^{\epsilon_1 + \epsilon_2} (R; BC)_\rho & \leq I_{\max} (R; B C )_{\hat{\theta}} \\
& = I_{\max}^{\epsilon_2} (R; B C )_{\rho_{M+1}} \\
& = I_{\max}^{\epsilon_2} (R; B C )_{\rho_{M}} \\
& \leq I_{\max}^{\epsilon_2} (R; C_{M-1} B_{M-2} )_{\rho_{M-1}} \\
& \leq I_{\max}^{\epsilon_2} (R; B_{M-2} )_{\rho_{M-1}} + 2 \log |C_{M-1}| \\
& = I_{\max}^{\epsilon_2} (R; B_{M-2} )_{\rho_{M-2}} + 2 \log |C_{M-1}| \\
& \leq I_{\max}^{\epsilon_2} (R; C_{M-3} B_{M-4} )_{\rho_{M-3}} + 2 \log |C_{M-1}| \\
& \leq \cdots \\
& \leq I_{\max}^{\epsilon_2} (R; C_{1} B_{0} )_{\rho_{1}} + 2 \sum_{i \geq 1} \log |C_{2i + 1}| \\
& \leq I_{\max}^{\epsilon_2} (R;  B_{0} )_{\rho_{1}} + 2 \sum_{i \geq 0} \log |C_{2i + 1}| \\
& = I_{\max}^{\epsilon_2} (R;  B T_B )_{\rho_{0}} + 2 QCC_{A \rightarrow B} (\Pi) \\
& = I_{\max}^{\epsilon_2} (R;  B )_{\rho} + 2 QCC_{A \rightarrow B} (\Pi)\,.
\end{align}
The first inequality follows by definition of smooth max-information and monotonicity of purified distance. The first equality is by the choice of $\hat{\theta}$, and the second because $U_{M+1}$ is applied on Alice's side. The second inequality is because the max-information is monotone under local operations. The third inequality follows by Lemma~\ref{lem:b9}, and the third equality is because local operations of Alice do not change the max-information of Bob about the reference. The following sequence of inequality follows by applying the last few ones repeatedly. The last inequality follows by Lemma~\ref{lem:b9}, the following equality is by definition of $B_0 = B T_B$ and $ QCC_{A \rightarrow B} (\Pi) = \sum_{i \geq 0} \log |C_{2i + 1}|$,
and the last is because $\phi^{T_A^{in} T_B^{in}}$ is uncorrelated to $\rho^{ABCR}$.
\end{proof}

%------------------------------------------------------------

\subsection{Entanglement Consumption}\label{sec:convent}

The above results give strong converses on the amount of quantum communication required from Alice to Bob, even if we allow for interaction between Alice and Bob. We would now like to obtain bounds on the total amount of resources, including net entanglement consumption and back-communication, required in an interactive protocol. For this we restrict the allowed entanglement assistance in the definition of state redistribution (Definition~\ref{def:state_redistribution}) to ebits. We also allow Alice and Bob to decrease their net entanglement consumption by generating ebits. We first show how to obtain such a bound on non-interactive protocols, by adapting an argument from~\cite{LD15}. We get the following bound on the net entanglement consumption~$e$.

\begin{proposition}[Converse One-Shot Entanglement Consumption]
Let $\epsilon_1 \in (0, 1)$ and $\epsilon_2 \in (0, 1 - \epsilon_1)$ and $\rho \in \D_= (A \otimes  B \otimes  C)$ with purification register $R$. Then, the sum of the net entanglement consumption $e(\Pi)$ and quantum communication cost $q(\Pi)$ of every quantum state redistribution $\Pi$ of $\rho_{ABC}$ with error $\epsilon_1$ is lower bounded by
\begin{align}
e(\Pi) + q(\Pi) &\geq H_{\min}^{\epsilon_2} (BC)_\rho -  H_{\min}^{\epsilon_1 + \epsilon_2} (B)_{\rho}\,.
\end{align}
\end{proposition}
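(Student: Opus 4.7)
My plan is to compare the smooth unconditional min-entropy on Bob's side at the start and the end of the protocol, in the same spirit as the converse arguments of Propositions~\ref{prop:converse} and~\ref{prop:multiconverse}. Before the protocol runs, Bob's reduced state is $\rho^{B}\otimes\Phi_{1}^{T_{B}^{in}}=\rho^{B}\otimes\pi^{T_{B}^{in}}$ with min-entropy $H_{\min}(\rho^{B})+\log|T_{B}^{in}|$; at the end, by the protocol's error guarantee together with monotonicity of purified distance under partial trace, his reduced state $\theta^{BC'T_{B}^{out}}$ is $\epsilon_{1}$-close to $\rho^{BC'}\otimes\pi^{T_{B}^{out}}$, which has min-entropy $H_{\min}(\rho^{BC})+\log|T_{B}^{out}|$. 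Since the only information flowing into Bob's register during the protocol is Alice's $q$-qubit message, the unconditional min-entropy on Bob's side can increase by at most $q$; combined with $e=\log|T_{B}^{in}|-\log|T_{B}^{out}|$ this yields $e+q\geq H_{\min}(BC)_{\rho}-H_{\min}(B)_{\rho}$, and the stated bound follows by threading the smoothing parameters through the argument.

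Formally, I would first pick $\omega^{BC'}\in\B^{\epsilon_{2}}(\rho^{BC'})$ attaining $H_{\min}^{\epsilon_{2}}(BC)_{\rho}=H_{\min}(\omega)$. The triangle inequality combined with the protocol's error guarantee yields $P(\omega\otimes\pi^{T_{B}^{out}},\theta^{BC'T_{B}^{out}})\leq\epsilon_{1}+\epsilon_{2}$, so
\begin{align*}
H_{\min}^{\epsilon_{1}+\epsilon_{2}}(BC'T_{B}^{out})_{\theta}\;\geq\;H_{\min}(\omega\otimes\pi^{T_{B}^{out}})\;=\;H_{\min}^{\epsilon_{2}}(BC)_{\rho}+\log|T_{B}^{out}|.
\end{align*}
For the matching upper bound I would consider $\tau^{BT_{B}^{in}Q}$, Bob's reduced state right after he receives Alice's message $Q$. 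Since Alice's local operations commute with the partial trace over her registers, $\tau^{BT_{B}^{in}}=\rho^{B}\otimes\pi^{T_{B}^{in}}$; combining the easy identity $H_{\min}^{\epsilon}(\sigma\otimes\pi^{T})=H_{\min}^{\epsilon}(\sigma)+\log|T|$ with the dimension chain rule $H_{\min}^{\epsilon}(XY)\leq H_{\min}^{\epsilon}(X)+\log|Y|$ then gives
\begin{align*}
H_{\min}^{\epsilon_{1}+\epsilon_{2}}(\tau^{BT_{B}^{in}Q})\;\leq\;H_{\min}^{\epsilon_{1}+\epsilon_{2}}(B)_{\rho}+\log|T_{B}^{in}|+q.
\end{align*}
Chaining the two displays via Bob's local CPTP map from $\tau^{BT_{B}^{in}Q}$ to $\theta^{BC'T_{B}^{out}}$, and rearranging using $e=\log|T_{B}^{in}|-\log|T_{B}^{out}|$, then gives the proposition.

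The step I expect to be the main obstacle is the last one, namely showing that Bob's CPTP map does not increase the smooth unconditional min-entropy of his register. Unlike the \emph{conditional} min-entropy, the unconditional min-entropy is not monotonically non-increasing under CPTP maps -- a randomising channel applied to a pure input strictly increases $H_{\min}$ -- so some additional input is required. Following the R\'enyi-entropy technique of Leditzky and Datta~\cite{LD15}, I would prove the corresponding bound at the level of $H_{\alpha}^{\epsilon}$ with $\alpha>1$, where one has a clean data-processing inequality under CPTP maps on the state, and then take $\alpha\to\infty$ to recover the smooth min-entropy statement. A shortcut that avoids R\'enyi entropies is a case split: when $e+q\geq\log|C|$ the elementary inequality $H_{\min}(XC)\leq H_{\min}(X)+\log|C|$ applied to the smoothing optimiser of $\rho^{BC}$ already forces $H_{\min}^{\epsilon_{2}}(BC)_{\rho}-H_{\min}^{\epsilon_{1}+\epsilon_{2}}(B)_{\rho}\leq\log|C|\leq e+q$; when $e+q<\log|C|$ the output register $BC'T_{B}^{out}$ has strictly larger dimension than $BT_{B}^{in}Q$, and Bob's operation may be taken without loss of generality to be an isometry with no junk register, for which min-entropy is preserved exactly.
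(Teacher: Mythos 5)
You correctly isolate the crux: unconditional min-entropy is not monotone under general CPTP maps, so the step from Bob's state $\tau^{BT_B^{in}Q}$ right after he receives $Q$ to his output $\theta^{BC'T_B^{out}}$ is exactly where a naive argument breaks. But neither of your proposed repairs closes this gap. For the R\'enyi route, data processing for the divergence $D_\alpha$ applied to $H_\alpha(\rho)=-D_\alpha(\rho\|I)$ gives $H_\alpha(\mathcal E(\rho))\geq H_\alpha(\rho)$ for unital $\mathcal E$ --- the wrong direction for you --- and for non-unital $\mathcal E$ yields nothing of the required form, so taking $\alpha\to\infty$ cannot deliver the needed upper bound on the output min-entropy. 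For the case split, the second branch asserts that when the output of Bob's decoding has larger dimension than its input the channel ``may be taken without loss of generality to be an isometry with no junk register''; this is false. The size of the Stinespring environment of a CPTP map is not constrained by the relative input and output dimensions, and a depolarizing-type decoding with a large output remains a counterexample under any dimension count, so the inequality $H_{\min}^{\epsilon_1+\epsilon_2}(BC'T_B^{out})_\theta\leq H_{\min}^{\epsilon_1+\epsilon_2}(BT_B^{in}Q)_\tau$ is simply not available.

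The missing ingredient is the one the paper uses: dilate Bob's decoding to an isometry $BT_B^{in}Q\to BC'T_B^{out}E_2$ and keep the environment register $E_2$ in play. Isometric invariance then gives $H_{\min}^{\epsilon_1+\epsilon_2}(BC'T_B^{out}E_2)_\theta=H_{\min}^{\epsilon_1+\epsilon_2}(BT_B^{in}Q)_\tau$ exactly, and there is no monotonicity issue at all. The protocol's error guarantee constrains only $ABC'RT_A^{out}T_B^{out}$, but since the global post-protocol state is pure (all local operations having been dilated) and the target $\rho^{ABC'R}\otimes\phi_{out}$ is also pure, Uhlmann's theorem in the form of Lemma~\ref{lem:ull} produces a fixed pure state $\phi_E^{E_1E_2}$ with $P(\theta,\rho\otimes\phi_{out}\otimes\phi_E)\leq\epsilon_1$. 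Then additivity of $H_{\min}$ over tensor products, $H_{\min}(E_2)_{\phi_E}\geq 0$, and the closeness together lower-bound $H_{\min}^{\epsilon_1+\epsilon_2}(BC'T_B^{out}E_2)_\theta$ by $H_{\min}^{\epsilon_2}(BC)_\rho+\log|T_B^{out}|$, after which your chain closes and the proposition follows by $e=\log|T_B^{in}|-\log|T_B^{out}|$. The paper's proof routes the same device through the complementary $AR$ side, invoking purity twice --- once to rewrite $H_{\min}^{\epsilon_2}(BC)_\rho=H_{\min}^{\epsilon_2}(AR)_\rho$ and once mid-chain to pass from $ART_A^{out}E_1$ to $QBT_B^{in}$ --- so your one-sided Bob route is the more direct version of the same argument, but without keeping $E_2$ explicit it does not go through.
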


\begin{proof}
We consider a protocol with the same structure as in Proposition~\ref{prop:converse}, but we keep track of the environment registers $E_1$ and $E_2$ for the isometric extension of the encoding and decoding operations, respectively. That is, we consider the purified states
\begin{align}
\sigma^{T_A^{out} A^{\prime} Q E_1 B T_B^{in} R}\quad\mathrm{and}\quad\theta^{T_A^{out} T_B^{out} A^{\prime}  B^{\prime} C^{\prime} E_1 E_2 R}\,.
\end{align}
Take $\hat{\theta}$ such that $P (\hat{\theta}^{A R }, \rho^{AR} ) \leq \epsilon_2$ and $H_{\min}^{\epsilon_2} (A  R )_{\rho } =  H_{\min} (A  R )_{\hat{\theta}}$. Let $\phi_{out}^{T_A^{out} T_B^{out}}$ be a maximally entangled state and $\phi_E^{E_1 E_2}$ be a normalized pure state such that
\begin{align}
P (\rho^{ABCR} \otimes \phi_{out}^{T_A^{out} T_B^{out}} \otimes \phi_E^{E_1 E_2}, \theta^{ABCR T_A^{out} T_B^{out} E_1 E_2}) = P (\rho^{ABCR} \otimes \phi_{out}^{T_A^{out} T_B^{out}}, \theta^{ABCR T_A^{out} T_B^{out}}) \leq \epsilon_1\,.
\end{align}
Then we get the following chain of inequalities,
\begin{align}
H_{\min}^{\epsilon_2} (BC)_\rho + \log |T_A^{out}|  & = H_{\min}^{\epsilon_2} (AR)_\rho+ \log |T_A^{out}| \\
& \leq H_{\min} (AR)_{\hat{\theta}} + H_{\min} (T_A^{out})_{\phi_{out}} + H_{\min} (E_1)_{\phi_E} \\
& = H_{\min} (AR T_A^{out} E_1)_{\hat{\theta} \otimes \phi_{out} \otimes \phi_E} \\
& \leq H_{\min}^{\epsilon_1 + \epsilon_2} (AR T_A^{out} E_1)_{\theta} \\
& = H_{\min}^{\epsilon_1 + \epsilon_2} (AR T_A^{out} E_1)_{\sigma} \\
& = H_{\min}^{\epsilon_1 + \epsilon_2} (QB T_{B_{in}})_{\sigma} \\
& \leq H_{\min}^{\epsilon_1 + \epsilon_2} (B)_{\sigma} + \log |Q| + \log |T_B^{in}| \\
& = H_{\min}^{\epsilon_1 + \epsilon_2} (B)_{\rho} + \log |Q| + \log |T_B^{in}|\,.
\end{align}
The first equality follows since $\rho^{ABCR}$ is pure. The first inequality is by the choice of $\hat{\theta}$, because $\phi_{out}^{T_A^{out}}$ is a maximally mixed state and because $H_{\min} (E_1)_\phi \geq 0$. The second equality is because the min-entropy of product states is additive. The second inequality is by the choice of $\phi_E$ and by the triangle inequality and monotonicity of the purified distance.
The third equality is because the decoding operation does not affect the registers $A R T_A^{out} E_1$, and the fourth is because the state $\sigma^{T_A^{out} A^{\prime} Q E_1 B T_B^{in} R}$ is pure. The last inequality is by Lemma~\ref{lem:dimbound}, and the last equality because the encoding operation leaves the $B$ register untouched.
\end{proof}

This leads to the following strong converse on the entanglement consumption. The proof follows from a similar application of the AEP that was used in the strong converse for quantum communication. We do not repeat the details.

\begin{theorem}[Strong Converse Entanglement Consumption]
For all $\rho \in \D_= (A \otimes  B \otimes  C)$  purified by $\rho^{ABCR}$ for some register $R$ and $\mu > 0$, there exists $c > 0, n_0 \in \mathbb{N}$ such that for all $n \geq n_0$ and all quantum state redistribution $\Pi_n $ of $(\rho^{ABC})^{\otimes n}$ with error $1 - 2^{- c n}$, the sum of the net entanglement consumption $e(\Pi_n)$ and  quantum communication cost  $q (\Pi_n)$ of $\Pi_n$ satisfies
\begin{align}
e(\Pi_n) + q (\Pi_n) \geq n\cdot\Big[H(C|B)-\mu\Big]\,.
\end{align}
\end{theorem}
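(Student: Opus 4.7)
The plan is to apply the fully quantum asymptotic equipartition property (Theorem~\ref{th:fqaep}) to the one-shot converse established in the preceding proposition, in complete analogy with the strong converse for quantum communication. Substituting the iid input $(\rho^{ABC})^{\otimes n}$ with purification $(\rho^{ABCR})^{\otimes n}$ into the one-shot bound, I obtain
\begin{align}
e(\Pi_n) + q(\Pi_n) \geq H_{\min}^{\epsilon_2}\bigl(B^{n}C^{n}\bigr)_{\rho^{\otimes n}} - H_{\min}^{\epsilon_1 + \epsilon_2}\bigl(B^{n}\bigr)_{\rho^{\otimes n}}
\end{align}
for any $\epsilon_2 \in (0, 1 - \epsilon_1)$, where we treat the conditioning register as trivial so that AEP applies to these unconditional smooth min-entropies.

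The first term is bounded from below by the easy direction of AEP: $H_{\min}^{\epsilon_2}(B^{n}C^{n})_{\rho^{\otimes n}} \geq n H(BC)_\rho - \sqrt{n}\,\delta(\epsilon_2, v_1)$, where $v_1$ depends only on $\rho$. The second term is bounded from above by the hard direction of AEP, which for any auxiliary parameter $\epsilon' \in (0, 1 - \epsilon_1 - \epsilon_2]$ yields $H_{\min}^{\epsilon_1 + \epsilon_2}(B^{n})_{\rho^{\otimes n}} \leq n H(B)_\rho + \sqrt{n}\,\delta(\epsilon', v_2) + h(\epsilon_1 + \epsilon_2, \epsilon')$. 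Combining the two bounds and using $H(BC)_\rho - H(B)_\rho = H(C|B)_\rho$ gives
\begin{align}
\tfrac{1}{n}\bigl[e(\Pi_n) + q(\Pi_n)\bigr] \geq H(C|B)_\rho - \tfrac{\delta(\epsilon_2, v_1) + \delta(\epsilon', v_2)}{\sqrt{n}} - \tfrac{h(\epsilon_1 + \epsilon_2, \epsilon')}{n}\,.
\end{align}

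It remains to choose $c, d > 0$ so that, with $\epsilon_1 = 1 - 2^{-cn}$ and $\epsilon_2 = 2^{-dn}$ (picking $d > c$ so that $\epsilon_2 < 1 - \epsilon_1$), the correction terms fall below $\mu$ for all sufficiently large $n$. This is the same delicate parameter optimization as in the quantum communication strong converse: one sets $\epsilon' = \tfrac{1}{2}(2^{-cn} - 2^{-dn})$, so that $\delta(\epsilon_2, v_1)/\sqrt{n} = O\bigl(\sqrt{d}\,\log v_1\bigr)$ and $\delta(\epsilon', v_2)/\sqrt{n} = O\bigl(\sqrt{c}\,\log v_2\bigr)$, both controllable by shrinking $c, d$.

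The main obstacle, and the only nontrivial calculation, is to verify that the $h(\epsilon_1 + \epsilon_2, \epsilon')/n$ term also vanishes, even though $\epsilon_1 + \epsilon_2$ is exponentially close to one. For the choice above, a short computation shows that $\epsilon_1 + \epsilon_2$ and $\epsilon'$ satisfy $(\epsilon_1 + \epsilon_2)\sqrt{1 - \epsilon'^{2}} + \epsilon'\sqrt{1 - (\epsilon_1 + \epsilon_2)^{2}}$ is bounded away from $1$ by a quantity of order $2^{-O(cn)}$, so $h(\epsilon_1 + \epsilon_2, \epsilon') = O(cn)$ and the term $h/n$ is $O(c)$. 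Choosing $c$ (and then $d$) small enough relative to $\mu/\log(\max\{v_1, v_2\})$, and $n_0$ large enough, the right-hand side exceeds $H(C|B)_\rho - \mu$, which completes the proof.
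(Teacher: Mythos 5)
Your proof is correct and matches the approach the paper has in mind: the paper's own text for this theorem simply says the proof follows from a ``similar application of the AEP'' as in the quantum-communication strong converse, and omits the details entirely. You have filled in exactly those details, applying the AEP with a trivial conditioning register to the two unconditional smooth min-entropy terms from the one-shot proposition and then carrying out the same exponential parameter scaling. The only cosmetic difference is your choice $\epsilon' = \tfrac{1}{2}(2^{-cn}-2^{-dn})$, whereas the paper's quantum-communication argument takes $\epsilon' = 1-\epsilon_1-\epsilon_2 = 2^{-cn}-2^{-dn}$ (the boundary value allowed by Theorem~\ref{th:fqaep}); both choices make the $\delta$ and $h$ correction terms scale as $O(\sqrt{c}\log v)$ and $O(c)$ respectively, so the conclusion is unaffected.
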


%------------------------------------------------------------

\subsection{Total Resource Consumption}\label{sec:conventint}

We are now ready to prove a lower bound on the total amount of resource consumption even if we allow for interaction between Alice and Bob. Notice that since the feedback from Bob to Alice can be used to distribute entanglement, it must be accounted for in the total resource consumption. We allow Alice and Bob to decrease their entanglement consumption by generating ebits. We consider the same model for communication as in Section~\ref{sec:back-comm}.

\begin{proposition}[Converse One-Shot Total Resources]\label{prop:multtotal}
Let $\epsilon_1 \in (0, 1)$ and $\epsilon_2 \in (0, 1 - \epsilon_1)$ and $\rho \in \D_= (A \otimes  B \otimes  C)$ with purification register $R$. Then, for every $M$-message quantum state redistribution $\Pi$ of $\rho$ with error $\epsilon_1$, the sum of the net entanglement consumption $e(\Pi)$ with the total quantum communication cost $QCC_{A \rightarrow B} (\Pi) + QCC_{B \rightarrow A}$  of $\Pi$ is lower bounded by
\begin{align}
e(\Pi) + QCC_{A \rightarrow B}(\Pi) + QCC_{B \rightarrow A} (\Pi) &\geq H_{\min}^{\epsilon_2} (BC)_\rho -  H_{\min}^{\epsilon_1 + \epsilon_2} (B)_{\rho}\,.
\end{align}
\end{proposition}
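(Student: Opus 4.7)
The plan is to generalize the argument of the preceding (non-interactive) Proposition by replacing its single post-encoding pure state $\sigma$ with a telescoping chain of min-entropy bounds through the $M+1$ isometries of the interactive protocol. Write $\rho_0 = \rho^{ABCR}\otimes \Phi_1^{T_A^{in}T_B^{in}}$ and $\rho_i = U_i(\rho_{i-1})$, so that every $\rho_i$ is pure. Since the entanglement is restricted to ebits, $\log|T_B^{in}|-\log|T_A^{out}|=e(\Pi)$, and $\sum_{i=1}^M \log|C_i|=QCC_{A\to B}(\Pi)+QCC_{B\to A}(\Pi)$. After rearrangement, the target inequality becomes
\begin{align*}
H_{\min}^{\epsilon_2}(BC)_\rho + \log|T_A^{out}| \;\leq\; H_{\min}^{\epsilon_1+\epsilon_2}(B)_\rho + \log|T_B^{in}| + \sum_{i=1}^M \log|C_i|.
\end{align*}

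The first step essentially copies the previous proof. Using correctness, $\mathrm{Tr}_{A'B'}[\rho_{M+1}]$ is $\epsilon_1$-close to the pure target $\rho^{ABC'R}\otimes\Phi_2^{T_A^{out}T_B^{out}}$, so Uhlmann's theorem provides a pure $\phi^{A'B'}$ with $P(\rho_{M+1},\rho^{ABC'R}\otimes\Phi_2\otimes\phi^{A'B'})\leq\epsilon_1$. Combining with a smoothing $\hat\theta^{AR}$ of $\rho^{AR}$ achieving $H_{\min}^{\epsilon_2}(AR)_\rho$, the triangle inequality and additivity of min-entropy on product states (together with $H_{\min}(A')_{\phi^{A'}}\geq 0$) yield $H_{\min}^{\epsilon_1+\epsilon_2}(AT_A^{out}A'R)_{\rho_{M+1}}\geq H_{\min}^{\epsilon_2}(BC)_\rho+\log|T_A^{out}|$, where the final simplification uses purity of $\rho^{ABCR}$. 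Purity of $\rho_{M+1}$ then converts this into the lower bound
\begin{align*}
H_{\min}^{\epsilon_1+\epsilon_2}(C'BT_B^{out}B')_{\rho_{M+1}}\;\geq\; H_{\min}^{\epsilon_2}(BC)_\rho+\log|T_A^{out}|,
\end{align*}
since $\{A,T_A^{out},A',R\}$ and $\{C',B,T_B^{out},B'\}$ are complementary subsets of registers of the global pure state.

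The genuinely new ingredient is a telescoping upper bound on the Bob-side min-entropy. I define $Y_0 = BT_B^{in}$ and, for $i\geq 1$, let $Y_i$ be Bob's current memory together with any message in flight toward Bob just after $U_i$: explicitly $Y_i = B_{i-1}C_i$ when $i$ is odd (Alice has just sent $C_i$) and $Y_i=B_i$ when $i$ is even, with the symmetric modification when $M$ is odd. My claim is that $H_{\min}^{\epsilon_1+\epsilon_2}(Y_i)_{\rho_i}\leq H_{\min}^{\epsilon_1+\epsilon_2}(Y_{i-1})_{\rho_{i-1}}+\log|C_i|$ at every step. For odd $i$, $U_i$ acts only on Alice's side so $\rho_i$ agrees with $\rho_{i-1}$ on $B_{i-1}$, and Lemma~\ref{lem:dimbound} applied to the fresh register $C_i$ gives the claim. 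For even $i$, $U_i: B_{i-2}C_{i-1}\to B_iC_i$ is a Bob-side isometry so that $H_{\min}^{\epsilon_1+\epsilon_2}(B_{i-2}C_{i-1})_{\rho_{i-1}} = H_{\min}^{\epsilon_1+\epsilon_2}(B_iC_i)_{\rho_i}$, and the same dimension bound handles removal of the outgoing $C_i$. Since $U_{M+1}$ is on Alice's side, $Y_{M+1}=Y_M=C'BT_B^{out}B'$ is exactly Bob's full final register-set including $C'$, and $H_{\min}^{\epsilon_1+\epsilon_2}(Y_0)_{\rho_0}\leq H_{\min}^{\epsilon_1+\epsilon_2}(B)_\rho+\log|T_B^{in}|$ follows from Lemma~\ref{lem:dimbound} applied to the product $\rho_0=\rho\otimes\Phi_1$. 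Telescoping across $i=1,\ldots,M$ then yields the matching upper bound, and combining with the first step delivers the rearranged target.

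The main obstacle is the bookkeeping around the $Y_i$'s. Two subtleties need care: first, ensuring that after Bob's last operation the register $C'$ is automatically contained inside $B_M=C'BT_B^{out}B'$, so that no spurious $\log|C|$ term enters through a final dimension bound; and second, verifying that smoothed isometric invariance and the smoothed dimension bound can be applied with a single fixed smoothing parameter $\epsilon_1+\epsilon_2$ uniformly along the chain, so that smoothing error does not accumulate across the $M+1$ rounds.
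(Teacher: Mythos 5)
Your opening step matches the paper's: fix the pure state $\phi^{A'B'}$ from Uhlmann applied to the $\epsilon_1$-correctness guarantee, smooth $\rho^{AR}$, use additivity of min-entropy on the product $\hat\theta^{AR}\otimes\Phi_2\otimes\phi^{A'B'}$ together with $H_{\min}(A')_\phi\geq 0$ and the triangle inequality, and then pass to the complementary Bob-side register via purity of $\rho_{M+1}$. That part is fine.

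The telescoping step contains a genuine gap, and it is precisely the point you flagged as a worry. For even $i$ you need
$H_{\min}^{\epsilon_1+\epsilon_2}(B_i)_{\rho_i}\leq H_{\min}^{\epsilon_1+\epsilon_2}(B_iC_i)_{\rho_i}+\log|C_i|$,
i.e.\ that \emph{discarding} the outgoing $C_i$ increases the min-entropy by at most $\log|C_i|$. You attribute this to ``the same dimension bound,'' but Lemma~\ref{lem:dimbound} states $H_{\min}^{\epsilon}(AB|C)\leq H_{\min}^{\epsilon}(A|C)+\log|B|$, which when applied to $B_iC_i$ gives the \emph{opposite}-direction statement $H_{\min}^{\epsilon}(B_iC_i)\leq H_{\min}^{\epsilon}(B_i)+\log|C_i|$, i.e.\ a lower bound on $H_{\min}^{\epsilon}(B_i)$ rather than an upper bound. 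The inequality you actually need is true, but it is not a citation of Lemma~\ref{lem:dimbound}: it follows either from the operator inequality $\rho^{XY}\leq|Y|\,\rho^X\otimes I^Y$ (Lemma~B.6 of~\cite{BCR11}, used in the proof of Lemma~\ref{lem:unlock}, plus an extension argument to handle smoothing), or — as the paper does — by invoking purity of $\rho_i$ to move to the complementary Alice-side register set and there applying Lemma~\ref{lem:dimbound} in the ``adding a register'' direction. Concretely, the paper's chain alternates sides at every step: after pulling back through $U_i$ by isometric invariance, it drops the communication register on the side that currently has it (where Lemma~\ref{lem:dimbound} applies directly), then switches to the complementary subsystem via purity of the global pure state $\rho_{i-1}$. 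Your formulation, which tries to stay entirely on Bob's side, is appealing but requires the additional ``removal'' dimension bound as a separate lemma; without it (or without the purity trick as a substitute) the even-$i$ steps are unproven. Apart from this misattribution the structure you describe is essentially the paper's argument.
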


Note that there is no dependence on the number $M$ of messages in these lower bounds. Hence, the strong converse for quantum state redistribution proved from the corresponding bounds in the preceding section also hold if we account for feedback.

\begin{theorem}[Strong Converse Total Resources]
For all $\rho \in \D_= (A \otimes  B \otimes  C)$  purified by $\rho^{ABCR}$ for some register $R$ and $\mu > 0$, there exists $c > 0, n_0 \in \mathbb{N}$ such that for all $n \geq n_0$ and all $M$, every $M$-message
quantum state redistribution $\Pi_n^M $ of $(\rho^{ABC})^{\otimes n}$ with error $1 - 2^{- c n}$, the sum of the net entanglement consumption $e(\Pi_n)$ with the total quantum communication cost  $QCC_{A \rightarrow B} (\Pi_n) + QCC_{B \rightarrow A} (\Pi_n)$ of $\Pi_n$ satisfies
\begin{align}
e(\Pi_n) + QCC_{A \rightarrow B} (\Pi_n) + QCC_{B \rightarrow A} (\Pi_n) \geq n\cdot\Big[H(C|B)-\mu\Big]\,.
\end{align}
\end{theorem}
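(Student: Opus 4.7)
The plan is to combine the one-shot lower bound of Proposition~\ref{prop:multtotal} with the fully quantum AEP (Theorem~\ref{th:fqaep}), following the same template as the earlier strong converses in the excerpt. The key point is that Proposition~\ref{prop:multtotal} is uniform in the number of messages $M$, so once the asymptotic bound is established it automatically applies to all feedback-assisted protocols, regardless of $M$.

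First, fix $\rho^{ABCR}$ and $\mu > 0$, and let $\Pi_n^M$ be any $M$-message state redistribution of $(\rho^{ABC})^{\otimes n}$ with error $\epsilon_1 = 1 - 2^{-cn}$ for a constant $c > 0$ to be chosen later. Picking an auxiliary smoothing parameter $\epsilon_2 = 2^{-dn}$ with $d > c$ ensures $\epsilon_1 + \epsilon_2 < 1$. Proposition~\ref{prop:multtotal} applied to the iid input then yields
\begin{align}
e(\Pi_n^M) + QCC_{A \to B}(\Pi_n^M) + QCC_{B \to A}(\Pi_n^M) \geq H_{\min}^{\epsilon_2}\bigl((BC)^{\otimes n}\bigr)_{\rho^{\otimes n}} - H_{\min}^{\epsilon_1 + \epsilon_2}\bigl(B^{\otimes n}\bigr)_{\rho^{\otimes n}}\,.
\end{align}

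Next, I would apply Theorem~\ref{th:fqaep} to both terms: its lower bound (with trivial conditioning system) gives $H_{\min}^{\epsilon_2}((BC)^{\otimes n}) \geq n H(BC)_\rho - \sqrt{n}\,\delta(\epsilon_2, v_1)$, while the upper bound with auxiliary parameter $\epsilon' := 1 - \epsilon_1 - \epsilon_2$ gives $H_{\min}^{\epsilon_1 + \epsilon_2}(B^{\otimes n}) \leq n H(B)_\rho + \sqrt{n}\,\delta(\epsilon', v_2) + h(\epsilon_1 + \epsilon_2, \epsilon')$. Subtracting, using $H(BC)_\rho - H(B)_\rho = H(C|B)_\rho$, and dividing by $n$ yields
\begin{align}
\frac{1}{n}\bigl[e(\Pi_n^M) + QCC_{A \to B}(\Pi_n^M) + QCC_{B \to A}(\Pi_n^M)\bigr] \geq H(C|B)_\rho - \frac{\delta(\epsilon_2, v_1) + \delta(\epsilon', v_2)}{\sqrt{n}} - \frac{h(\epsilon_1 + \epsilon_2, \epsilon')}{n}\,,
\end{align}
with $v_1, v_2$ depending only on $\rho$.

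Finally, I would choose $c$ and $d$ small enough (depending on $\rho$ and $\mu$) so that the three correction terms are each at most $\mu/3$ for all sufficiently large $n$. With $\epsilon_2 = 2^{-dn}$ one finds $\delta(\epsilon_2, v_1)/\sqrt{n} = O(\sqrt{d}\,\log v_1)$, and with $\epsilon' = 2^{-cn} - 2^{-dn} \sim 2^{-cn}$ one finds $\delta(\epsilon', v_2)/\sqrt{n} = O(\sqrt{c}\,\log v_2)$. The mildly delicate step, which is the main obstacle, is the $h$ term: as $\epsilon_1 \to 1$, a short expansion of $\epsilon_1\sqrt{1 - \epsilon'^2} + \epsilon'\sqrt{1 - \epsilon_1^2}$ shows that $1 - (\epsilon_1\sqrt{1-\epsilon'^2} + \epsilon'\sqrt{1-\epsilon_1^2})^2 \sim 2^{1-cn}$, whence $h(\epsilon_1 + \epsilon_2, \epsilon') \sim cn$ and $h/n \sim c$. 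Taking $c, d$ proportional to $\min\{\mu, (\mu/\log(v_1 v_2))^2\}$ then makes all three contributions at most $\mu/3$, completing the proof.
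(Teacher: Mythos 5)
Your proposal is correct and follows essentially the same approach as the paper: the paper does not spell out a separate proof but simply observes that Proposition~\ref{prop:multtotal} is independent of the message count $M$ and points to the AEP-based argument already given for the strong converse on quantum communication, which is exactly the template you instantiate. The only cosmetic slip is that in the argument of $h$ you write $\epsilon_1$ where it should be $\epsilon_1 + \epsilon_2 = 1 - \epsilon'$, but since $\epsilon_2 \ll 1 - \epsilon_1$ this does not affect the asymptotic $1 - \big((1-\epsilon')\sqrt{1-\epsilon'^2} + \epsilon'\sqrt{2\epsilon' - \epsilon'^2}\big)^2 \sim 2\epsilon' \sim 2^{1-cn}$ and the rest of the bookkeeping is sound.
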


\begin{proof}[Proof of Proposition~\ref{prop:multtotal}]
The proof is  similar to the one in the single round case. We consider a $M$-message protocol with local isometric processing and arbitrary pre-shared entanglement $\phi_{in}^{T_A^{in} T_B^{in}}$, for the case of even $M$ (the case of odd $M$ follows similarly). The $A_i$ registers are Alice's quantum memory registers for round $i$, for odd $i$ and after application of $U_i$, similarly for $B_i$ on Bob's side and even $i$, and the $C_i$ registers are the communication registers exchanged by Alice and Bob in round $i$, from Alice to Bob for odd $i$, and from Bob to Alice for even $i$. We use the following notation,
\begin{align}
\rho_0 := \rho^{ABCR} \otimes \phi^{T_A T_B},\;\rho_1 := U_1 (\rho_0), \rho_2 := U_2 (\rho_1),\;\cdots,\;\rho_{M+1} := U_{M+1} (\rho_M)\,.
\end{align}
For $\phi_{out}$ a maximally entangled state it holds that
\begin{align}
P (\rho_{M+1}^{ABCR T_A^{out} T_B^{out}}, \rho^{ABCR} \otimes \phi_{out}^{T_A^{out} T_B^{out}}) \leq \epsilon_1\,.
\end{align}
Now take $\hat{\theta}$ such that $P (\hat{\theta}^{A R }, \rho^{AR} ) \leq \epsilon_2$ as well as $H_{\min}^{\epsilon_2} (A  R )_{\rho } =  H_{\min} (A  R )_{\hat{\theta}}$, let $\phi_{out}^{T_A^{out} T_B^{out}}$ be a maximally entangled state, and let $\phi_E^{A^\prime B^\prime}$ be a normalized pure state such that
\begin{align}
P (\rho^{ABCR} \otimes \phi_{out}^{T_A^{out} T_B^{out}} \otimes \phi_E^{A^\prime B^\prime}, \rho_{M+1}^{ABCR T_A^{out} T_B^{out} A^\prime B^\prime}) = P (\rho^{ABCR} \otimes \phi_{out}^{T_A^{out} T_B^{out}}, \theta^{ABCR T_A^{out} T_B^{out}}) \leq \epsilon_1\,.
\end{align}
Then we get the following chain of inequalities,
\begin{align}
H_{\min}^{\epsilon_2} (BC)_\rho + \log |T_A^{out}|  & = H_{\min}^{\epsilon_2} (AR)_\rho+ \log |T_A^{out}| \\
& = H_{\min} (AR)_{\hat{\theta}} + H_{\min} (T_A^{out})_{\phi_{out}} \\
& \leq H_{\min} (AR)_{\hat{\theta}} + H_{\min} (T_A^{out})_{\phi_{out}} + H_{\min} (A^\prime)_{\phi_E} \\
& = H_{\min} (AR T_A^{out} A^\prime)_{\hat{\theta} \otimes \phi_{out} \otimes \phi_E} \\
& \leq H_{\min}^{\epsilon_1 + \epsilon_2} (AR T_A^{out} A^\prime)_{\sigma_{M+1}} \\
& = H_{\min}^{\epsilon_1 + \epsilon_2} (A_{M-1} C_M R)_{\sigma_{M}} \\
& \leq H_{\min}^{\epsilon_1 + \epsilon_2} (A_{M-1} R)_{\sigma_{M}} + \log |C_M| \\
& = H_{\min}^{\epsilon_1 + \epsilon_2} (B_{M} C_M)_{\sigma_{M}} + \log |C_M| \\
& = H_{\min}^{\epsilon_1 + \epsilon_2} (C_{M-1} B_{M-2})_{\sigma_{M-1}} + \log |C_M| \\
& \leq H_{\min}^{\epsilon_1 + \epsilon_2} (B_{M-2})_{\sigma_{M-1}} + \log |C_{M-1}| + \log |C_M| \\
& = H_{\min}^{\epsilon_1 + \epsilon_2} (A_{M-1} C_{M-1} R)_{\sigma_{M-1}} + \log |C_{M-1}| + \log |C_M| \\
& = H_{\min}^{\epsilon_1 + \epsilon_2} (A_{M-3} C_{M-2} R)_{\sigma_{M-2}} + \log |C_{M-1}| + \log |C_M| \\
& \leq H_{\min}^{\epsilon_1 + \epsilon_2} (A_{M-3}  R)_{\sigma_{M-2}} + \log|C_{M-2}|+ \log |C_{M-1}| + \log |C_M| \\
& \leq \cdots \\
& \leq H_{\min}^{\epsilon_1 + \epsilon_2} (A_1  R)_{\sigma_{2}} + \log|C_{2}|+ \cdots + \log |C_M| \\
& = H_{\min}^{\epsilon_1 + \epsilon_2} (B_2  C_2)_{\sigma_{2}} + \log|C_{2}|+ \cdots + \log |C_M| \\
& = H_{\min}^{\epsilon_1 + \epsilon_2} (C_1 B T_B^{in})_{\sigma_{1}} + \log|C_{2}|+ \cdots + \log |C_M| \\
& \leq H_{\min}^{\epsilon_1 + \epsilon_2} ( B )_{\sigma_{1}} + \log|T_B^{in}| + QCC_{A \rightarrow B} (\Pi) + QCC_{B \rightarrow A} (\Pi)  \\
& = H_{\min}^{\epsilon_1 + \epsilon_2} ( B )_{\rho} + \log|T_B^{in}| + QCC_{A \rightarrow B} (\Pi) + QCC_{B \rightarrow A} (\Pi)\,.
\end{align}
The first equality follows since $\rho^{ABCR}$ is pure, and the second by the choice of $\hat{\theta}$ and because $\phi_{out}^{T_A^{out}}$ is a maximally mixed state. The first inequality follows because $H_{\min} (A^\prime)_{\phi_E} \geq 0$. The third equality is because the min-entropy of product states is additive. The second inequality is by the choice of $\phi_E$ and by the triangle inequality and monotonicity of the purified distance. The fourth equality is by isometric invariance of the min-entropy, and the third inequality is by Lemma~\ref{lem:dimbound}.
The fifth equality is because $\sigma_M^{A_{M-1 } R B_M C_M}$ is a pure state, and the sixth is by isometric invariance. The fourth inequality is again by Lemma~\ref{lem:dimbound}. The next two equalities are because $\sigma_{M-1}^{B_{M-2} A_{M-1} C_{M-1} R}$ is a pure state and by isometric invariance, respectively. The next inequality is by Lemma~\ref{lem:dimbound}, and the following chain of inequalities is by repeateadly applying the last few ones. The two equalities leading to the last inequality are because $\sigma_2^{A_1 R B_2 C_2}$ is a pure state and by isometric invariance, respectively, while the last inequality is by two applications of Lemma~\ref{lem:dimbound}, and also by definition of $QCC_{A \rightarrow B} (\Pi)$ and $QCC_{B \rightarrow A} (\Pi)$. Finally, the last equality is because $U_1$ leaves the $B$ register untouched.
\end{proof}

%------------------------------------------------------------

\section{Conclusion}\label{sec:conclusion}

We have proved that one-shot quantum state redistribution of $\rho^{ABCR}$ up to error $\epsilon$ can be achieved at communication cost at most
\begin{align}
\frac{1}{2}\Big[H_{\max}^{\epsilon} (C|B)_\rho - H_{\min}^{\epsilon} (C | BR)_\rho\Big] + O\big(\log (1/\epsilon)\big)\,.
\end{align}
when free entanglement assistance is available (independently, this bound has also been derived in~\cite{DHO11}). The structure of the protocol achieving this performs a decomposition of state redistribution into two state merging protocols. Such a decomposition was proposed in~\cite{Opp08} in order to achieve asymptotically tight rates. Note that we could alternatively use a decomposition into a state merging and a state splitting protocol, as proposed in~\cite{YBW08}, to achieve similar bounds. An important technical ingredient for our proof is the bi-decoupling lemma that we prove as an extension of the well-known decoupling theorem~\cite{BCR11}. A similar lemma was derived in~\cite{YBW08}, with bounds in terms of dimensions rather than conditional min-entropies. This lemma states that for two states on the same system $C$, there exists at least one unitary on $C$ that acts as a decoupling unitary for both states simultaneously, when parameters are appropriately chosen. Perhaps surprisingly, this idea allows us to smooth both the conditional min- and max- entropy terms appearing in our bounds, notwithstanding the fact that it is in general unknown how to simultaneously smooth marginals of overlapping quantum systems (see, e.g.,~\cite{DF13} and references therein).

We emphasize again that our achievability bound~\eqref{eq:achievability} has already found applications. In particular, one of the authors obtained the first multi-round direct sum theorem in quantum communication complexity~\cite{Tou14a}. However, it is known from the work on one-shot state merging and splitting~\cite{BCR11} that, for arbitrary shared entanglement, the bound~\eqref{eq:achievability} can in general not be optimal, and in fact for some states the achievable communication can be substantially lower. An interesting open problem is to obtain a tight characterization of the minimal quantum communication cost. Recent works on the R\'enyi generalizations of conditional mutual information in the quantum regime~\cite{BSW14} might enable to shed some light on this question. In particular, it would be of interest to link some version of our improved bound~\eqref{eq:qrst_idea} to a smooth version of the conditional max-information,
\begin{align}\label{eq:maxinfo_cond}
I_{\max}(C;R|B)_{\rho}:=D_{\max}\Big(\rho^{CBR}\big\|\big(\rho^{BR}\big)^{1/2}\big(\rho^{B}\big)^{-1/2}\rho^{BC}\big(\rho^{B}\big)^{-1/2}\big(\rho^{BR}\big)^{1/2}\Big)\,.
\end{align}
In turn this would also shine some light on the R\'enyi generalizations of the conditional mutual information in~\cite{BSW14}.

Finally, we have shown that our one-shot converse bounds imply a strong converse for quantum state redistribution in the iid asymptotic limit (that even holds when allowing for feedback).

%------------------------------------------------------------

\paragraph{Acknowledgments}

We thank Felix Leditzky for discussions about Ref.~\cite{LD15}. These discussions were the starting point for the derivations in Sections~\ref{sec:convent} and~\ref{sec:conventint}. We acknowledge discussions with Renato Renner, Mark Wilde, and J\"urg Wullschleger. The hospitality of the Banff International Research Station (BIRS) during the workshop “Beyond IID in Information Theory” (5-10 July 2016) is gratefully acknowledged (the work in Sections~\ref{sec:convent} and~\ref{sec:conventint} was performed there). MC was supported by a Sapere Aude grant of the Danish Council for Independent Research, an ERC Starting Grant, the CHIST-ERA project ``CQC'', an SNSF Professorship, the Swiss NCCR ``QSIT'' and the Swiss SBFI in relation to COST action MP1006. Most of this work was done while DT was a PhD student at Universit\'e de Montr\'eal, and was supported in part by a FRQNT B2 Doctoral research scholarship and by CryptoWorks21.

%------------------------------------------------------------

\appendix

\section{Miscellaneous Lemmas}

The following bound holds on the smooth max information~\cite[Lemma B.9]{BCR11}.

\begin{lemma}\label{lem:b9}
Let $\epsilon \geq 0$ and $\rho^{ABC} \in \D_= (A \otimes B \otimes C)$. Then, we have
\begin{align}
I_{\max}^\epsilon (A : BC)_\rho \leq I_{\max}^\epsilon (A: B)_\rho + 2 \log |C|\,.
\end{align}
\end{lemma}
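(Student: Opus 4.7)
The plan is to prove the non-smooth inequality $I_{\max}(A:BC)_\rho \leq I_{\max}(A:B)_\rho + 2\log|C|$ first and then promote it to the smooth version via Uhlmann's theorem, following the standard template for such smooth-entropy statements.

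For the non-smooth step, I would pick $\sigma^B\in\D_=(B)$ achieving the infimum in $I_{\max}(A:B)_\rho$, so that $\rho^{AB}\leq 2^{I_{\max}(A:B)_\rho}\,\rho^A\otimes\sigma^B$, and take $\sigma^B\otimes\pi^C$ as the test state for $I_{\max}(A:BC)_\rho$. Since $\rho^A\otimes\sigma^B\otimes\pi^C=\tfrac{1}{|C|}\rho^A\otimes\sigma^B\otimes I^C$, everything reduces to the operator inequality
\[\rho^{ABC}\leq|C|\,\rho^{AB}\otimes I^C,\]
after which combining the two inequalities gives $D_{\max}(\rho^{ABC}\|\rho^A\otimes\sigma^B\otimes\pi^C)\leq I_{\max}(A:B)_\rho+2\log|C|$, hence the non-smooth bound.

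To establish this operator inequality, I would first treat pure states $|\psi\rangle^{ABC}$ via a Schmidt decomposition $|\psi\rangle=\sum_i\sqrt{p_i}|i\rangle_{AB}|i\rangle_C$ across the bipartition $AB{:}C$. The Schmidt rank is at most $|C|$, so for any vector $w$ on the Schmidt subspace, Cauchy-Schwarz yields $\bigl|\sum_i\sqrt{p_i}w_i\bigr|^2\leq|C|\sum_i p_i|w_i|^2$, which is precisely what is needed to verify that $|C|\rho^{AB}\otimes I^C-|\psi\rangle\langle\psi|$ is positive semidefinite on the diagonal block spanned by $\{|i\rangle_{AB}|i\rangle_C\}$; the off-diagonal blocks reduce to a diagonal of nonnegative entries since $|\psi\rangle\langle\psi|$ vanishes there. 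The mixed-state case then follows by purifying $\rho^{ABC}$ on an additional register $R$, applying the pure-state inequality to get $|\Psi\rangle\langle\Psi|^{ABCR}\leq|C|\,\rho^{ABR}\otimes I^C$, and tracing out $R$. I expect this operator-inequality step---specifically, that the dimensional overhead is $|C|$ rather than any other dimension, which is what produces the $2\log|C|$ term in the final bound---to be the main technical point.

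For the smoothing step, let $\bar\rho^{AB}\in\B^\epsilon(\rho^{AB})$ achieve $I_{\max}^\epsilon(A:B)_\rho=I_{\max}(A:B)_{\bar\rho}$. Viewing any purification $|\rho\rangle^{ABCR}$ of $\rho^{ABC}$ as a purification of $\rho^{AB}$ on the environment $CR$, Uhlmann's theorem furnishes a purification $|\bar\rho\rangle^{ABCR}$ of $\bar\rho^{AB}$ with $P(|\bar\rho\rangle\langle\bar\rho|,|\rho\rangle\langle\rho|)=P(\bar\rho^{AB},\rho^{AB})\leq\epsilon$. Setting $\bar\rho^{ABC}:=\Tra{R}[|\bar\rho\rangle\langle\bar\rho|^{ABCR}]$ yields a state on $ABC$ with marginal $\bar\rho^{AB}$ and, by monotonicity of the purified distance under partial trace, $P(\bar\rho^{ABC},\rho^{ABC})\leq\epsilon$. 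Applying the non-smooth inequality to $\bar\rho^{ABC}$ then gives
\[I_{\max}^\epsilon(A:BC)_\rho\leq I_{\max}(A:BC)_{\bar\rho}\leq I_{\max}(A:B)_{\bar\rho}+2\log|C|=I_{\max}^\epsilon(A:B)_\rho+2\log|C|,\]
completing the argument.
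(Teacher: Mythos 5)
Your proof is correct and follows the same two-step template used in the cited source \cite[Lemma B.9]{BCR11} (which the paper quotes without proof) and in the paper's own proof of the analogous Lemma~\ref{lem:unlock}: establish the non-smooth bound from the operator inequality $\rho^{ABC}\leq|C|\,\rho^{AB}\otimes I^{C}$ and then smooth by lifting the optimizer on $AB$ to an $\epsilon$-close extension on $ABC$ via Uhlmann's theorem. Your Schmidt-decomposition/Cauchy--Schwarz argument for the operator inequality is a self-contained derivation of what both this paper and \cite{BCR11} simply invoke as \cite[Lemma B.6]{BCR11}.
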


We also need the same type of bound for smooth conditional min-entropy.

\begin{lemma}\label{lem:unlock}
Let $\rho^{ABC}\in\D_{=}(A\otimes B\otimes C)$ and $\epsilon\geq0$. Then, we have
\begin{align}
H_{\min}^{\epsilon}(A|B)_{\rho}\leq H_{\min}^{\epsilon}(A|BC)_{\rho}+2\log|C|\,.
\end{align}
\end{lemma}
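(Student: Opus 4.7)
The plan is to first establish the non-smooth version of the bound by means of a dimension-type operator inequality, and then to lift it to the smooth statement via Uhlmann's theorem.

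First, I would prove the following non-smooth analogue: for any $\rho^{ABC}\in\D_{\leq}(A\otimes B\otimes C)$,
\begin{align}
H_{\min}(A|B)_{\rho}\leq H_{\min}(A|BC)_{\rho}+2\log|C|\,.
\end{align}
The key ingredient is the standard operator inequality $\tau^{XY}\leq |X|\cdot I^{X}\otimes \tau^{Y}$ valid for any bipartite state $\tau^{XY}$ with reduced state $\tau^{Y}=\mathrm{Tr}_{X}[\tau^{XY}]$ (verified by a Schmidt decomposition of any pure-state decomposition of $\tau^{XY}$ together with a Cauchy--Schwarz estimate on the resulting quadratic form). Applied to the bipartition $C$ vs.~$AB$, this gives $\rho^{ABC}\leq |C|\cdot I^{C}\otimes\rho^{AB}$. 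Combining this with the defining inequality $\rho^{AB}\leq 2^{-H_{\min}(A|B)_{\rho}} I^{A}\otimes\sigma^{B}$ for the optimal $\sigma^{B}\in\D_{=}(B)$, we obtain
\begin{align}
\rho^{ABC}\leq 2^{-H_{\min}(A|B)_{\rho}}\cdot|C|\cdot I^{A}\otimes\sigma^{B}\otimes I^{C}=2^{-H_{\min}(A|B)_{\rho}+2\log|C|}\cdot I^{A}\otimes\big(\sigma^{B}\otimes\pi^{C}\big)\,,
\end{align}
with $\pi^{C}=I^{C}/|C|$ a valid state, so that $\sigma^{B}\otimes\pi^{C}$ is a feasible candidate in the definition of $H_{\min}(A|BC)_{\rho}$. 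This yields the non-smooth claim.

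Next, I would lift the non-smooth bound to the smoothed version. Let $\bar\rho^{AB}\in\B^{\epsilon}(\rho^{AB})$ be an optimizer attaining the supremum in the definition of $H_{\min}^{\epsilon}(A|B)_{\rho}$, so that $H_{\min}^{\epsilon}(A|B)_{\rho}=H_{\min}(A|B)_{\bar\rho}$. Using Uhlmann's theorem as formulated for the purified distance (matching purifications achieve the purified distance between marginals) applied to purifications of $\rho^{ABC}$ and $\bar\rho^{AB}$, one obtains an extension $\tilde\rho^{ABC}$ of $\bar\rho^{AB}$ with $\mathrm{Tr}_{C}[\tilde\rho^{ABC}]=\bar\rho^{AB}$ and $P(\tilde\rho^{ABC},\rho^{ABC})\leq\epsilon$, i.e., $\tilde\rho^{ABC}\in\B^{\epsilon}(\rho^{ABC})$. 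Since $H_{\min}(A|B)$ only depends on the $AB$-marginal, $H_{\min}(A|B)_{\tilde\rho}=H_{\min}(A|B)_{\bar\rho}$. Applying the non-smooth bound to $\tilde\rho$ and using that $\tilde\rho$ lies in the $\epsilon$-ball around $\rho^{ABC}$ so that $H_{\min}(A|BC)_{\tilde\rho}\leq H_{\min}^{\epsilon}(A|BC)_{\rho}$, we chain
\begin{align}
H_{\min}^{\epsilon}(A|B)_{\rho}=H_{\min}(A|B)_{\tilde\rho}\leq H_{\min}(A|BC)_{\tilde\rho}+2\log|C|\leq H_{\min}^{\epsilon}(A|BC)_{\rho}+2\log|C|\,,
\end{align}
which is the desired inequality.

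The main obstacle is the lifting step: given an $\epsilon$-close smoothing on the marginal $AB$, we need to produce an $\epsilon$-close smoothing on the larger system $ABC$ whose $AB$-marginal is exactly $\bar\rho^{AB}$. This is handled by Uhlmann's theorem for the purified distance, which guarantees that purifications can be chosen to match the marginal distance, and then tracing out the irrelevant purifying register yields the extension; this is precisely the reason the purified distance is the convenient distance for smoothing. The operator inequality in the non-smooth step is standard, but worth stating carefully since the factor of $|C|$ (rather than $|C|^{2}$) in that inequality, together with a second factor of $|C|$ from renormalizing $I^{C}$ to a state, is what yields the tight $2\log|C|$ dimensional penalty.
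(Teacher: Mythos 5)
Your proof is correct and takes essentially the same approach as the paper: both reduce to the operator inequality $\rho^{ABC}\leq|C|\cdot\rho^{AB}\otimes I^{C}$ (cited in the paper as Lemma B.6 of BCR11) combined with the defining bound for $H_{\min}(A|B)$, and both lift to the smooth statement by extending the optimal $\epsilon$-close $AB$-marginal to an $\epsilon$-close state on $ABC$ via the purified distance. Your write-up is in fact somewhat more explicit than the paper's about the Uhlmann-type extension step, which the paper invokes only implicitly when it introduces $\tilde\rho^{ABC}$.
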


\begin{proof}
Let $\tilde{\rho}^{AB}\in\B^{\epsilon}\big(\rho^{AB}\big)$ and $\sigma^{B}\in\D_{=}(B)$ such that $H_{\min}^{\epsilon}(A|B)_{\rho}=-D_{\max}(\tilde{\rho}^{AB}\|I^{A}\otimes\sigma^{B})=-\log\lambda$. We have
\begin{align}
\lambda\cdot I^{A}\otimes\sigma^{B}\geq\tilde{\rho}^{AB}\quad\Rightarrow\quad\lambda\cdot I^{A}\otimes\sigma^{B}\otimes\frac{I^{C}}{|C|}\geq\tilde{\rho}^{AB}\otimes\frac{I^{C}}{|C|}\,.
\end{align}
Now take $\tilde{\rho}^{ABC}\in\B^{\epsilon}\big(\rho^{ABC}\big)$ and with~\cite[Lemma B.6]{BCR11}, $|C|\cdot\tilde{\rho}^{AB}\otimes I^C\geq\tilde{\rho}^{ABC}$ we get
\begin{align}
\lambda|C|^{2}\cdot I^{A}\otimes\sigma^{B}\otimes\frac{I^{C}}{|C|}\geq\tilde{\rho}^{ABC}\,.
\end{align}
Hence, we can conclude the claim
\begin{align}
H_{\min}^{\epsilon}(A|B)_{\rho}=-\log\lambda&\leq-D_{\max}(\tilde{\rho}^{ABC}\|I^{A}\otimes\sigma^{B}\otimes\frac{I^{C}}{|C|})+2\log|C|\\
&\leq\sup_{\hat{\rho}^{ABC}\in\B^{\epsilon}(\rho^{ABC})}\sup_{\omega^{BC}\in\D_{=}(B\otimes C)}
D_{\max}(\hat{\rho}^{ABC}\|I_{A}\otimes\omega^{BC})+2\log|C|\\
&=H_{\min}^{\epsilon}(A|BC)_{\rho}+2\log|C|\,.
\end{align}
\end{proof}

Note that by duality, a similar result holds for smooth conditional max-entropy. Another bound on smooth conditional min-entropy we use is the following, which can be seen to follow from~\cite[Lemma 5]{RR12} and the data processing inequality.

\begin{lemma}\label{lem:dimbound}
Let $\rho^{ABC}\in\D_{=}(A\otimes B\otimes C)$ and $\epsilon\geq0$. Then, we have
\begin{align}
H_{\min}^{\epsilon}(AB|C)_{\rho}\leq H_{\min}^{\epsilon}(A|C)_{\rho}+\log|B|\,.
\end{align}
\end{lemma}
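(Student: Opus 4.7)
The plan is to reduce to the non-smooth version of the inequality by exploiting monotonicity of the purified distance under the partial trace, which is exactly the content alluded to by the hint: data processing for the conditional min-entropy is easy on the $B$-side because tracing out $B$ costs at most $\log|B|$ in min-entropy, and smoothing commutes well with this partial trace.

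First, I will establish the non-smooth version $H_{\min}(AB|C)_{\rho}\leq H_{\min}(A|C)_{\rho}+\log|B|$ by unfolding the definition through $D_{\max}$. Choose $\sigma^{C}\in\D_{=}(C)$ and $\lambda$ with $-\log\lambda=H_{\min}(AB|C)_{\rho}$ such that $\lambda\cdot I^{AB}\otimes\sigma^{C}\geq\rho^{ABC}$. Tracing out the $B$ system on both sides yields $\lambda|B|\cdot I^{A}\otimes\sigma^{C}\geq\rho^{AC}$, so by definition $H_{\min}(A|C)_{\rho}\geq-\log(\lambda|B|)=H_{\min}(AB|C)_{\rho}-\log|B|$. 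This is exactly the data processing statement needed.

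Second, I will promote this to the smooth version. Let $\tilde{\rho}^{ABC}\in\B^{\epsilon}(\rho^{ABC})$ be a (near-)maximizer so that $H_{\min}(AB|C)_{\tilde{\rho}}=H_{\min}^{\epsilon}(AB|C)_{\rho}$ (up to arbitrarily small slack). Monotonicity of the purified distance under partial trace gives $P(\tilde{\rho}^{AC},\rho^{AC})\leq P(\tilde{\rho}^{ABC},\rho^{ABC})\leq\epsilon$, i.e.\ $\tilde{\rho}^{AC}\in\B^{\epsilon}(\rho^{AC})$. Applying the non-smooth bound to $\tilde{\rho}$ and using the definition of the smooth min-entropy as a supremum over the $\epsilon$-ball yields
\begin{align}
H_{\min}^{\epsilon}(AB|C)_{\rho}=H_{\min}(AB|C)_{\tilde{\rho}}\leq H_{\min}(A|C)_{\tilde{\rho}}+\log|B|\leq H_{\min}^{\epsilon}(A|C)_{\rho}+\log|B|\,,
\end{align}
which is the desired inequality.

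There is no real obstacle here. The only minor point to watch is that the supremum defining $H_{\min}^{\epsilon}$ is attained (the $\epsilon$-ball is compact and $H_{\min}$ is upper semi-continuous), or else one works with an $\eta$-approximate maximizer and lets $\eta\to0$ at the end. The argument is self-contained and does not require any further machinery beyond monotonicity of the purified distance and the definition of $D_{\max}$.
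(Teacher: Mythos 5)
Your proof is correct. Note that the paper does not give its own proof of this lemma; it merely remarks that it ``can be seen to follow from [RR12, Lemma 5] and the data processing inequality,'' so you are supplying a detail the paper outsources. Your argument is the natural self-contained one and is sound at every step: the non-smooth inequality follows by taking the operator inequality $\lambda\cdot I^{AB}\otimes\sigma^{C}\geq\rho^{ABC}$ that witnesses $H_{\min}(AB|C)_{\rho}=-\log\lambda$ and applying $\mathrm{Tr}_{B}$, which is positivity-preserving and sends $I^{AB}$ to $|B|\cdot I^{A}$; since this produces one admissible $\sigma^{C}$ for the $A|C$ optimization, it gives a valid lower bound on $H_{\min}(A|C)_{\rho}$. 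The smoothing step is also handled correctly: monotonicity of the purified distance under partial trace puts $\tilde{\rho}^{AC}$ in $\B^{\epsilon}(\rho^{AC})$, and the supremum defining $H_{\min}^{\epsilon}(A|C)_{\rho}$ then absorbs the remaining inequality. Your caveat about the supremum being attained (or taking an $\eta$-approximate maximizer) is exactly right and standard.

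One stylistic remark: your proof is a close structural cousin of the paper's own proof of the companion Lemma~\ref{lem:unlock}, which also manipulates the $D_{\max}$ operator inequality directly (there by tensoring with $I^{C}/|C|$ and invoking a support bound, here by tracing out $B$). So while the paper chose to cite RR12 for this particular lemma, the technique you use is one the authors clearly had on hand, and your write-up has the benefit of being fully self-contained without appealing to an external reference.
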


We also make use of the following variant of Uhlmann's theorem.

\begin{lemma}\label{lem:ull}
Let $\rho_1, \rho_2 \in \D_\leq (A)$ have purifications $\rho_1^{A R_1}, \rho_2^{A R_2}$. Then, there exists a partial isometry $V^{R_1 \rightarrow R_2}$ such that
\begin{align}
P\big(\rho_1^{A}, \rho_2^{A}\big) = P\Big(V \big(\rho_1^{A R_1}\big), \rho_2^{A R_2} \Big)\,.
\end{align}
\end{lemma}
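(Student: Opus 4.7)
The plan is to reduce this to the standard Uhlmann theorem for fidelity and then lift the result to the generalized fidelity $\bar F$, from which the purified-distance statement is immediate. First I would write the purifications as sub-normalized pure states $\rho_i^{AR_i}=\ket{\phi_i}\bra{\phi_i}^{AR_i}$ with $\|\phi_i\|^2=\mathrm{Tr}\,\rho_i^A$, and invoke the sub-normalized form of Uhlmann's theorem, which gives
\begin{align}
F(\rho_1^A,\rho_2^A)=\max_{V:R_1\to R_2}\big|\bra{\phi_2}(I_A\otimes V)\ket{\phi_1}\big|\,,
\end{align}
where the maximum ranges over partial isometries. The optimizer $V$ can always be taken to act as a proper isometry on $\mathrm{supp}(\rho_1^{R_1})$ — this is the usual gauge freedom in Uhlmann, and I would invoke it explicitly because it is what makes the next step work.

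Next I would transfer to $\bar F$. With such a $V$, applying it to $\rho_1^{AR_1}$ preserves the trace, so $\mathrm{Tr}[V(\rho_1^{AR_1})]=\mathrm{Tr}\,\rho_1^{AR_1}=\mathrm{Tr}\,\rho_1^A$, and of course $\mathrm{Tr}\,\rho_2^{AR_2}=\mathrm{Tr}\,\rho_2^A$. Therefore the extra term $\sqrt{(1-\mathrm{Tr}\,\rho)(1-\mathrm{Tr}\,\sigma)}$ in the definition of $\bar F$ takes the same value for the pair $(V(\rho_1^{AR_1}),\rho_2^{AR_2})$ as for the pair $(\rho_1^A,\rho_2^A)$. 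Since the fidelity between the two (sub-normalized) pure states $V\ket{\phi_1}\bra{\phi_1}V^{\dagger}$ and $\ket{\phi_2}\bra{\phi_2}$ is precisely the modulus of their inner product, the Uhlmann identity above yields
\begin{align}
\bar F(\rho_1^A,\rho_2^A)=\bar F\big(V(\rho_1^{AR_1}),\rho_2^{AR_2}\big)\,,
\end{align}
and the claimed purified-distance equality follows by the definition $P=\sqrt{1-\bar F^2}$.

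The one technical point that needs care, and which is really the only obstacle, is the sub-normalization. A general partial isometry does not preserve trace, so one must commit to the version of the Uhlmann optimizer that is a genuine isometry on the support of $\rho_1^{R_1}$. Once that is pinned down, the argument is purely mechanical: Uhlmann supplies $V$ matching the fidelity, the trace condition keeps the $\bar F$ correction term fixed, and the passage from $\bar F$ to $P$ is a single algebraic step. No new ideas beyond standard Uhlmann and the definitions recalled in Section~\ref{sec:prel} are required.
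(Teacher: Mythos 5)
The claim at the heart of your argument --- that the Uhlmann optimizer $V$ can ``always be taken to act as a proper isometry on $\mathrm{supp}(\rho_1^{R_1})$'' --- is false in general: this requires $\dim R_2 \geq \mathrm{rank}(\rho_1^A)$, since otherwise no isometry from $\mathrm{supp}(\rho_1^{R_1})$ into $R_2$ exists. Without that condition the step where you argue the $\bar F$-correction terms coincide breaks down, and in fact the lemma itself fails as written. Take $A = R_1 = \mathbb{C}^3$, $R_2 = \mathbb{C}$, $\rho_1^A = I/3$ with $\ket{\phi_1} = \tfrac{1}{\sqrt{3}}(\ket{00}+\ket{11}+\ket{22})$, and $\rho_2^A = \tfrac{1}{2}\kb{0}{0}$ with $\ket{\phi_2} = \tfrac{1}{\sqrt{2}}\ket{0}$. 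Then $\bar F(\rho_1^A,\rho_2^A) = \tfrac{1}{\sqrt{6}}$, whereas every partial isometry $V\colon R_1\to R_2$ is either $0$, giving $\bar F\big(0,\rho_2^{AR_2}\big)=\tfrac{1}{\sqrt{2}}$, or of the form $\bra{v}$ for a unit vector $\ket{v}$, giving $\bar F\big(V(\rho_1^{AR_1}),\rho_2^{AR_2}\big)=\tfrac{\alpha}{\sqrt{6}}+\tfrac{1}{\sqrt{3}}$ with $\alpha=|\bk{v}{0}|$; neither ever equals $\tfrac{1}{\sqrt{6}}$. So both the lemma and your proof need the (unstated) hypothesis that $R_2$ is large enough to carry a purification of $\rho_1^A$. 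In every place the paper actually invokes the lemma this is harmless, since there both states are normalized (so the correction term vanishes identically) and $|R_2|\geq|R_1|$.

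Once you add that hypothesis your outline does work, but the ``gauge freedom'' needs to be made precise --- it is not the familiar freedom relating two purifications of the \emph{same} state. What you actually need is that the polar-decomposition optimizer for $\max_V|\bra{\phi_2}(I\otimes V)\ket{\phi_1}|$ has initial space contained in $\mathrm{supp}(\rho_1^{R_1})$, and that, since $\ket{\phi_1}\in A\otimes\mathrm{supp}(\rho_1^{R_1})$, enlarging $V$ to an isometry on all of $\mathrm{supp}(\rho_1^{R_1})$ (possible precisely because of the dimension hypothesis) leaves the overlap $\bra{\phi_2}(I\otimes V)\ket{\phi_1}$ unchanged. A cleaner route that sidesteps tracking $F$ and the $\bar F$-correction term separately is to invoke Uhlmann's theorem directly at the level of the generalized fidelity, as proved in~\cite{TCR10,Tom12}: it produces a purification of $\rho_1^A$ on $A R_2$ attaining $\bar F(\rho_1^A,\rho_2^A)$, and that purification equals $V(\rho_1^{A R_1})$ for a partial isometry $V$ isometric on $\mathrm{supp}(\rho_1^{R_1})$, again because $R_2$ is large enough. (The paper states this lemma without proof, so there is no argument of theirs to compare against; what matters is that the missing dimension hypothesis be flagged.)
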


%------------------------------------------------------------

%------------------------------------------------------------


\begin{thebibliography}{9}

\bibitem {ADHW09} Anura Abeyesinghe, Igor Devetak, Patrick Hayden, and Andreas Winter.
\textit{The mother of all protocols: Restructuring quantum information's family tree.}
Proceedings of the Royal Society of London A 465, 2537-2563 (2009).

\bibitem {BDHSW09} Charles H.~Bennett, Igor Devetak, Aram W.~Harrow, Peter W.~Shor, and Andreas Winter.
\textit{The Quantum Reverse {S}hannon Theorem and Resource Tradeoffs for Simulating Quantum Channels.}
IEEE Transactions on Information Theory 60, 2926-2959 (2014).

\bibitem {Ber08} Mario Berta. 
\textit{Single-shot quantum state merging.}
Diploma thesis, ETH Zurich (2008)

\bibitem {BRWW09} Mario Berta, Renato Renner, Andreas Winter, and J\"urg Wullschleger.
\textit{One-Shot Source Coding.}
Unpublished notes (2009).

\bibitem {BC10} Mario Berta and Matthias Christandl. 
\textit{One-Shot Quantum State Redistribution.}
Unpublished notes (2010).

\bibitem {BCR11} Mario Berta, Matthias Christandl, and Renato Renner.
\textit{The Quantum Reverse Shannon Theorem based on One-Shot Information Theory.}
Communications in Mathematical Physics 306, 579 (2011).

\bibitem {BSW14} Mario Berta, Kaushik Seshadreesan, and Mark M.~Wilde.
\textit{R\'enyi generalizations of the conditional quantum mutual information.}
Journal of Mathematical Physics 56, 022205 (2015).

\bibitem {BD10} Francesco Buscemi and Nilanjana Datta.
\textit{The quantum capacity of channels with arbitrarily correlated noise.}
IEEE Transactions on Information Theory 56, 1447-1460 (2010).

\bibitem {CBR14} Nikola Ciganovi\'c, Normand J.~Beaudry, and Renato Renner.
\textit{Smooth Max-Information as One-Shot Generalization for Mutual Information.}
IEEE Transactions on Information Theory 60, 1573-1581 (2014).

\bibitem {vDH03} Wim van Dam and Patrick Hayden.
\textit{Universal entanglement transformations without communication.}
Physical Review A 67, 060302(R) (2003).

\bibitem {DH11} Nilanjana Datta and Min-Hsiu Hsieh.
\textit{The apex of the family tree of protocols: optimal rates and resource inequalities.}
New Journal of Physics 13, 093042 (2011).

\bibitem {DHO11} Nilanjana Datta, Min-Hsiu Hsieh, and Jonathan Oppenheim.
\textit{An upper bound on the second order asymptotic expansion for the quantum communication cost of state redistribution.}
arXiv:1409.4352 (2014).

\bibitem {DY08} Igor Devetak and Jon Yard.
\textit{Exact cost of redistributing multipartite quantum states.}
Physical Review Letters 100, 230501 (2008).

\bibitem {DF13} Lukas Drescher and Omar Fawzi.
\textit{On simultaneous min-entropy smoothing.}
Information Theory Proceedings (ISIT), 2013 IEEE International Symposium on, 161-165 (2013).

\bibitem {Dup10} Fr\'ed\'eric Dupuis
\textit{The decoupling approach to quantum information theory.}
Ph.D.~thesis, Universit\'e de Montr\'eal.

\bibitem {DBWR14} Fr\'ed\'eric Dupuis, Mario Berta, J\"urg Wullschleger, and Renato Renner.
\textit{One-shot decoupling.}
Communications in Mathematical Physics 328, 251 (2014).

\bibitem {HHYW08} Patrick Hayden, Micha\l~Horodecki, Jon Yard, and Andreas Winter.
\textit{A decoupling approach to the quantum capacity.}
Open Systems and Information Dynamics 15, 7-19 (2008).

\bibitem {HOW05} Micha\l~Horodecki, Jonathan Oppenheim, and Andreas Winter.
\textit{Partial quantum information.}
Nature 436, 673-676 (2005).

\bibitem {HOW07} Micha\l~Horodecki, Jonathan Oppenheim, and Andreas Winter.
\textit{Quantum state merging and negative information.}
Communications in Mathematical Physics 269, 107 (2007).

\bibitem {LD15} Felix Leditzky and Nilanjana Datta.
\textit{Strong converse theorems using R\'enyi entropies.}
arXiv:1506.02635v1 (2015).

\bibitem {LWD15} Felix Leditzky, Mark M.~Wilde, and Nilanjana Datta.
\textit{Strong converse theorems using R\'enyi entropies.}
arXiv:1506.02635v2 (2015).

\bibitem {LD09} Zhicheng Luo and Igor Devetak.
\textit{Channel Simulation With Quantum Side Information.}
IEEE Transactions on Information Theory 55, 1331-1342 (2009).

\bibitem {Opp08} Jonathan Oppenheim.
\textit{State redistribution as merging: introducing the coherent relay.}
arXiv:0805.1065 (2008).

\bibitem {RR12} Joseph M.~Renes and Renato Renner.
\textit{One-Shot Classical Data Compression with Quantum Side Information and the Distillation of Common Randomness or Secret Keys.}
IEEE Transactions on Information Theory 58, 1985 (2012).

\bibitem {Ren05} Renato Renner 
\textit{Security of Quantum Key Distribution.}
Ph.D.~thesis, ETH Zurich.

\bibitem {Sch95} Benjamin Schumacher.
\textit{Quantum coding.}
Physical Review A 51, 2738-2747 (1995).

\bibitem {Tom12} Marco Tomamichel. 
\textit{A Framework for Non-Asymptotic Quantum Information Theory.}
Ph.D.~thesis, ETH Zurich.

\bibitem {TCR09} Marco Tomamichel, Roger Colbeck, and Renato Renner. 
\textit{A Fully Quantum Asymptotic Equipartition Property.}
IEEE Transactions on Information Theory 55, 5840 - 5847 (2009).

\bibitem {TCR10} Marco Tomamichel, Roger Colbeck, and Renato Renner. 
\textit{Duality Between Smooth Min- and Max-Entropies.}
IEEE Transactions on Information Theory 56, 4674-4681 (2010).

\bibitem {Tou14a} Dave Touchette.
\textit{Quantum Information Complexity.}
47th ACM Symposium on Theory of Computing (2015).
Long version: arXiv:1404.3733, 1409.4391.

\bibitem {Tou14b} Dave Touchette.
\textit{Smooth Entropy Bounds on One-Shot Quantum State Redistribution.}
Unpublished notes (2014).

\bibitem {VDTR13} Alexander Vitanov, Fr\'ed\'eric Dupuis, Marco Tomamichel, and Renato Renner.
\textit{Chain Rules for Smooth Min- and Max-Entropies.}
IEEE Transactions on Information Theory 59, 2603-2612 (2013).

\bibitem {WDHW13} Mark M. Wilde, Nilanjana Datta, Min-Hsiu Hsieh, and Andreas Winter.
\textit{Quantum rate distortion coding with auxiliary resources}
 IEEE Transactions on Information Theory 59, 6755-6773 (2013).

\bibitem {YD09} Jon Yard and Igor Devetak.
\textit{Optimal Quantum Source Coding With Quantum Side Information at the Encoder and Decoder.}
IEEE Transactions on Information Theory 55, 5339-5351 (2009).

\bibitem {YBW08} Ming-Yong Ye, Yan-Kui Bai, and Z.~D. Wang.
\textit{Quantum state redistribution based on a generalized decoupling.}
Physical Review A 78, 030302(R) (2008).

\end{thebibliography}
\end{document}